\newcommand\etal{\emph{et al.}~}
\crefname{section}{Sec.}{Secs.}
\crefname{figure}{Fig.}{Figs.}
\Crefname{section}{Section}{Sections}
\Crefname{figure}{Figure}{Figures}
\newtheorem{theorem}{Theorem}
\newtheorem{remark}{Remark}
\title{Accommodating Fabrication Defects on Floquet Codes with Minimal Hardware Requirements}
\author{Campbell McLauchlan$^{*\dagger}$}
\email{campbell.mclauchlan@gmail.com}
\affiliation{Riverlane, St Andrews House, 59 St Andrews Street, Cambridge, CB2 3BZ, United Kingdom}
\affiliation{Department of Physics and Astronomy, University College London, London, WC1E 6BT, United Kingdom}
\author{Gy\"{o}rgy P. Geh\'{e}r}
\email{gehergyuri@gmail.com, george.geher@riverlane.com}
\affiliation{Riverlane, St Andrews House, 59 St Andrews Street, Cambridge, CB2 3BZ, United Kingdom}
\author{Alexandra E. Moylett$^{*\ddagger}$}
\email{alex.moylett@gmail.com}
\affiliation{Riverlane, St Andrews House, 59 St Andrews Street, Cambridge, CB2 3BZ, United Kingdom}
\date{8th December 2024}
\begin{document}

\maketitle
\def\thefootnote{*}\footnotetext{These authors contributed equally to this work.}

\def\thefootnote{$\dagger$}\footnotetext{Current address: Centre for Engineered Quantum Systems, School of Physics, The University of Sydney, Sydney, NSW 2006, Australia}

\def\thefootnote{$\ddagger$}\footnotetext{Current address: Nu Quantum, 21 JJ Thompson Avenue, Cambridge, CB3 0FA, United Kingdom}

\renewcommand{\thefootnote}{\fnsymbol{footnote}}

\begin{abstract}
    Floquet codes are an intriguing generalisation of stabiliser and subsystem codes, which can provide good fault-tolerant characteristics while benefiting from reduced connectivity requirements in hardware. A recent question of interest has been how to run Floquet codes on devices which have defective---and therefore unusable---qubits. This is an under-studied issue of crucial importance for running such codes on realistic hardware. To address this challenge, we introduce a new method of accommodating defective qubits on a wide range of two-dimensional Floquet codes, which requires no additional connectivity in the underlying quantum hardware, no modifications to the original Floquet code's measurement schedule, can accommodate boundaries, and is optimal in terms of the number of qubits and stabilisers removed. We numerically demonstrate that, using this method, the planar honeycomb code is fault tolerant up to a fabrication defect probability of $\approx 12\%$. We find the fault-tolerant performance of this code under defect noise is competitive with that of the surface code, despite its sparser connectivity. We finally propose multiple ways this approach can be adapted to the underlying hardware, through utilising any additional connectivity available, and treating defective auxiliary qubits separately to defective data qubits. Our work therefore serves as a guide for the implementation of Floquet codes in realistic quantum hardware.
\end{abstract}

\section{Introduction}

Floquet codes are a relatively new and exciting area in quantum error correction (QEC). From a theoretical perspective, they have a mathematical structure which is distinct from other families of codes such as stabiliser and subsystem codes, which lends them some interesting and unique properties \cite{Hastings2021dynamically}. In addition, from a more practical perspective, Floquet codes benefit from considerably reduced connectivity requirements in hardware, compared to the surface or colour codes, while maintaining a reasonably high threshold under Pauli noise and a low tera-quop footprint for certain noise models~\cite{Gidney2021faulttolerant, Gidney2022benchmarkingplanar, Paetznick2023PerformancePlanarFloquetCodes}. Finally, from a computational perspective, they provide simple constructions for implementing logical operations from the Clifford group \cite{davydova2023quantum}.

While Pauli noise is the commonly studied error model in QEC, there are many other forms of noise beyond Pauli errors which appear when developing quantum devices in practice, such as qubit loss~\cite{Barrett2010MBQCErasure,Whiteside2014Loss}, erasure~\cite{erasure_codes,Sahay2023BiasedErasure}, leakage~\cite{Fowler2013Leakage,Ghosh2013Leakage,Brown2019Leakage,Miao2023OvercomingLeakage} and cosmic ray events~\cite{Vepsalainen2020IonizingRadiation,McEwen2022CosmicRays}. Fabrication defects, where an error in the fabrication of a quantum device renders some of the qubits so noisy as to be effectively useless, is a more permanent form of error which must be dealt with. These defective qubits can be identified in advance through device calibration. One instance of such an error appearing in real devices can be found in the Sycamore processor used in Google's 2019 quantum computational supremacy experiment \cite{Arute2019QuantumSupremacy}. It is worth avoiding defective components when running a QEC code, as they can introduce types of noise that are harmful to the overall performance of the code~\cite{Ghosh2013Leakage,Chen2021ExponentialSuppression}.

Several works have already shown how to accommodate fabrication defects on other QEC codes. For the surface code, code deformation is used to isolate the defects from the rest of the lattice, resulting in a code that maintains a threshold under Pauli noise at finite defect rates~\cite{Stace2010DegeneracySurfaceCodes, Nagayama2017DefectiveLattice, Auger2017FabricationSurfaceCode, Strikis2023ScalableQC,Suzuki_surface_code_defects_2022, GransSamuelsson2024PairwiseMeasurementsSurfaceCode,debroy2024lucisurfacecodedropouts}. As for the case of Floquet codes, Aasen \etal \cite{aasen2023faulttolerant} have initiated the study by proposing methods for accommodating fabrication defects on the 4.8.8~\cite{Paetznick2023PerformancePlanarFloquetCodes} and honeycomb~\cite{Hastings2021dynamically} Floquet codes. This leaves room for substantial further work in finding improvements to these methods and benchmarking Floquet code performance in defective hardware, to help determine how promising a candidate for large-scale QEC this type of code is. In particular, it is crucial to consider methods that make no additional assumptions about the device connectivity, since low connectivity requirements are one of the main benefits of Floquet codes. Moreover, the ability to generalise strategies to a large array of Floquet codes would be very useful.

In this paper we present a new way of accommodating fabrication defects on Floquet codes that respects hardware constraints and is highly generalisable. We present our method as an iteratively applicable algorithm which provides a new Floquet code given a set of defective qubits. It removes at most two qubits per defect---which is optimal when preserving important properties of the lattice---requires no additional connectivity in the hardware, and can accommodate boundaries. We also require no modifications to the original measurement schedule of the Floquet code, which ultimately means that our method is suitable for use in a wide range of Floquet codes (specifically, all such codes definable on a colour code lattice~\cite{Hastings2021dynamically,Paetznick2023PerformancePlanarFloquetCodes,Kesselring2022AnyonCondensation,Davydova2023CSSFloquet}).

We numerically benchmark the performance of our method by implementing circuits for a planar honeycomb code memory experiment (specifically considering the model of Gidney \etal \cite{Gidney2022benchmarkingplanar}) in Stim \cite{Gidney2021Stim}, a fast Clifford circuit simulator, and correcting for Pauli errors on the adapted code using PyMatching, a matching decoder \cite{Higgott2023PyMatching}. We find considerable robustness to fabrication defects using our method: we determine a defect threshold of approximately $13\%$, and numerically demonstrate that Pauli error thresholds exist for defect probabilities close to this threshold. Hence, the region of parameter space in which the code remains fault-tolerant is large and comparable to that of the surface code~\cite{Stace2010DegeneracySurfaceCodes}. The reduction in Pauli error threshold with defect probability is also quantitatively similar to previous findings for the surface code~\cite{Auger2017FabricationSurfaceCode}. This is quite surprising, as the honeycomb code achieves this with more restrictive constraints on the connectivity of the device.

We further introduce two methods which take advantage of the underlying hardware. First, we propose even better methods for accommodating defects in hardware in which extra connectivity is present, such as on square lattices \cite{Krinner2022SurfaceCodeSuperconducting, GoogleSuppressing, RigettiAnkaa}. And secondly, we propose a method for better accommodating defective auxiliary qubits in lattices in which those are present---or, more generally, defective connections between data qubits---such as the heavy hex lattice \cite{IBMHeavyHex}.

The rest of this paper is laid out as follows. In \cref{sec:background} we provide some background on the honeycomb code (as a prototypical example from the family of Floquet codes we consider) and fabrication defects. In \cref{sec:strategy} we describe our strategy for removing defective qubits from the code lattice. Results from simulations are described in \cref{sec:simulation-results}. We propose some improvements that can be made depending on the hardware of the quantum processor in \cref{sec:hardware-improvements}. Finally, we conclude with some future research directions in \cref{sec:conclusion}.

\section{Background}
\label{sec:background}

A Floquet code is defined by a series of check measurements and a sequence in which they are measured\footnote{For our purposes, we do not distinguish between finite or infinite-period dynamical codes, as other authors do~\cite{Townsend_Teague_2023,davydova2023quantum}; they will both be, in principle, amenable to our techniques.}. Many Floquet codes are definable on the lattice of a colour code~\cite{Kesselring2022AnyonCondensation}. We focus here on the honeycomb code~\cite{Hastings2021dynamically,Gidney2021faulttolerant}, however this example can be easily generalised.

\subsection{Honeycomb code}

The honeycomb code is a Floquet code defined on a honeycomb lattice\footnote{This can be generalised to any lattice that is 3-face-colourable and tri-valent, i.e., where each vertex is adjacent to three edges.}. Each vertex represents a physical qubit, and each edge represents a two-qubit Pauli measurement between neighbouring qubits. An example lattice is presented in \cref{fig:example-honeycomb}, with the plaquettes coloured red, blue and green in such a way that no neighbouring plaquettes share the same colour (the lattice is \emph{3-face-colourable}). Edges are also assigned colours according to the plaquettes they connect; for example a red edge connects a pair of red plaquettes. A key benefit that this code has over other error-correcting codes, such as the surface code, is the low connectivity requirements of the code: each qubit only needs to be connected to three nearest neighbours (the lattice is \emph{tri-valent}), and each measurement is only between pairs of adjacent qubits. This makes Floquet codes appealing for architectures with low connectivity, such as IBM's heavy hex lattice \cite{IBMHeavyHex}, as well as devices with native two-body measurements such as Majorana devices~\cite{Karzig2017MajoranaZeroModes, Paetznick2023PerformancePlanarFloquetCodes}.

\begin{figure}
    \centering
    \includegraphics[width=0.99\linewidth]{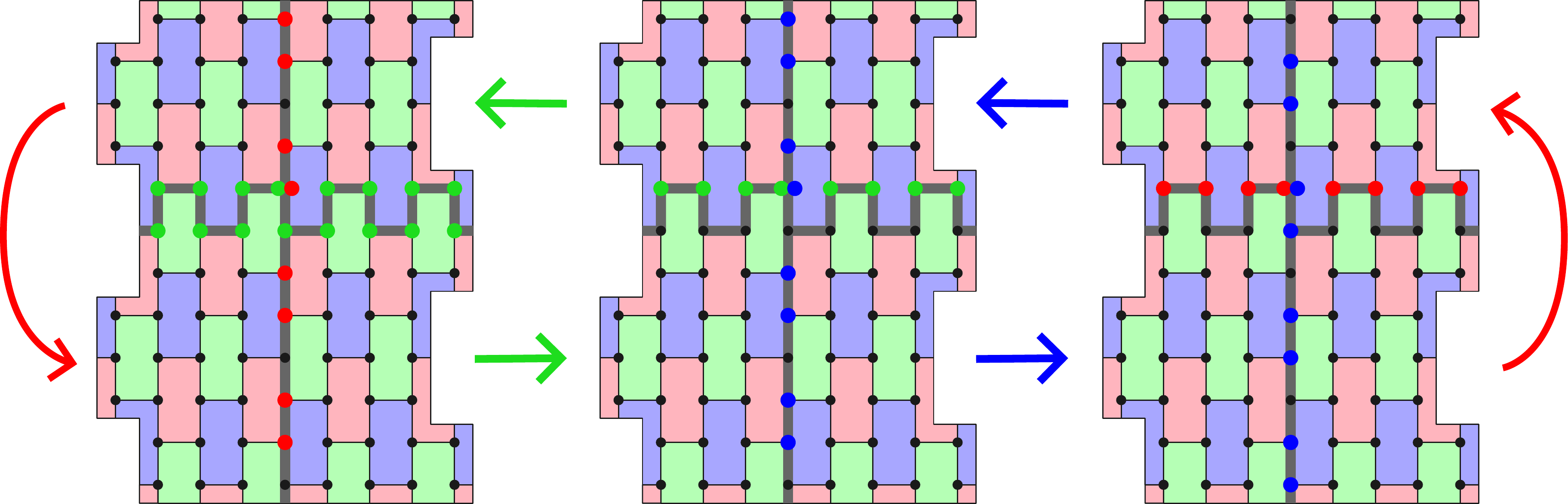}
    \caption{\label{fig:example-honeycomb} An example of the planar honeycomb code, with qubits on vertices and with plaquettes coloured according to their Pauli type (red for $X$, green for $Y$ and blue for $Z$). In the periodic honeycomb code, the horizontal and vertical boundaries wrap around on a torus. In the planar case shown, boundary ``edges'' involve single-qubit measurements \cite{Gidney2022benchmarkingplanar}. These form plaquettes that have non-deterministic outcomes at certain points in the measurement schedule. Logical strings at each point in the period-six measurement schedule are shown. Vertices coloured red/green/blue indicate where logicals have support as Pauli $X$/$Y$/$Z$ operators respectively. Arrows indicate the checks measured at that point in the sequence (red for $X$ checks, green for $Y$ checks and blue for $Z$ checks). Logical strings are only updated after $Y$ and $Z$ check measurements.}
\end{figure}

A QEC round using the periodic honeycomb code works by implementing two-body Pauli measurements between adjacent qubits according to the following measurement schedule: first, $XX$ measurements are applied to red edges, then $YY$ measurements are applied to green edges, and finally $ZZ$ measurements are applied to blue edges~\cite{Gidney2021faulttolerant}. If periodic boundary conditions are used, we then repeat the measurement schedule. Floquet codes may not store logical operators, or have stabiliser groups, when viewed as subsystem (i.e., static) codes~\cite{Hastings2021dynamically,Kesselring2022AnyonCondensation}, however this does not preclude them from having these properties when viewed dynamically. We can understand this through the instantaneous stabiliser groups, which depend on the measurement sub-round (i.e., the edge type just measured). For the honeycomb code, the instantaneous stabiliser group after each sub-round of QEC consists of six-body Pauli operators on each plaquette (these commute with all edge measurements), plus the two-body Pauli operators measured in the last sub-round. Each plaquette of a given colour is made up of the product of edges encircling that plaquette, which will be of the other two colours. For example, a red plaquette is made up of three green edges and three blue edges that encircle the plaquette. The $X^{\otimes 6}$ stabiliser that this red plaquette corresponds to is the product of these $YY$ and $ZZ$ edge operators, and the plaquette's measurement outcome can therefore be inferred from the two previous sub-rounds in which green and blue edges were measured, respectively. This process can also be followed for the green $Y^{\otimes 6}$ and blue $Z^{\otimes 6}$ plaquettes.

The logical operators of the honeycomb code have to change in each sub-round in order to ensure that they commute with measurements in future QEC sub-rounds. Specifically, some edge measurements on horizontal and vertical cycles across the code need to be multiplied in to the logical observables \cite{Hastings2021dynamically, Gidney2021faulttolerant}. The logical operators change each sub-round, repeating with period six. The measurement schedule is crucial for ensuring that the stabilisers can be measured while at the same time ensuring that these observables are preserved; if one were to instead implement measurements such that all stabilisers (of the subsystem code) were measured simultaneously, a code with no logical qubits would be formed \cite{Hastings2021dynamically}.

The planar honeycomb code functions in a similar way. Boundaries can be formed without increasing the connectivity requirements by cutting along edges of the same colour of a larger honeycomb lattice \cite{Gidney2022benchmarkingplanar}. The cut edges correspond to single-qubit Pauli measurements in the same basis as before. Note that if we use the same measurement schedule as before, the result will be that some measurements will anti-commute with the observable. Instead, the schedule is modified to measure edges in an $R \to G \to B \to R \to B \to G$ pattern~\cite{Haah2022boundarieshoneycomb,Dua_2024_rewinding}. The transformations of the logical operators through this measurement schedule are shown in \cref{fig:example-honeycomb}. Note that logical operators are not updated after red edges are measured.

The single-qubit measurements anti-commute with plaquettes along the boundary. The result of this is that some stabilisers are non-deterministic at times and therefore cannot detect errors at those times. However, this does not seem to significantly affect the code's performance when compared to the periodic honeycomb code~\cite{Gidney2022benchmarkingplanar}.

We form detectors---sets of measurements whose joint parities are deterministic when no errors have occurred---from two measurements of plaquette operators. For example, after the sequence $B \to R \to B$, we measure all bulk green plaquettes twice (once after $B \to R$, and a second time after $R \to B$); therefore detectors can be formed for each green plaquette by comparing the two measured values. An $X$ or $Z$ error occurring on a qubit in the plaquette's support between the first and last of these three measurement sub-rounds will trigger the detector. Meanwhile, measurement results from detectors are deterministic in the presence of zero noise. A decoder can use these results to identify and correct errors when noise is present. Boundary plaquettes are measured non-deterministically when the single-qubit edge measurements precede the two-qubit edge measurements. However, they are re-measured deterministically (assuming no error has occurred) when the two-qubit edge measurements precede the single-qubit measurements. For example, a green boundary plaquette in \cref{fig:example-honeycomb} is first measured non-deterministically when the blue edge measurements immediately precede the red edge measurements ($B\to R$). It is then measured deterministically for a second time after the sequence $R\to B$, at which point we can form a detector comparing these two measurement outcomes.

\subsection{Fabrication defects}

In this work we address the challenge of implementing Floquet codes when there are fabrication defects on the hardware device. A fabrication defect is a qubit which at the time of running the computation is known to be highly prone to errors, or otherwise unusable. Such faults are common in quantum processors: Google's 54-qubit Sycamore processor for example featured one qubit which did not function properly and therefore could not be used in their 2019 quantum computational supremacy experiment \cite{Arute2019QuantumSupremacy}.

Fabrication defects have been studied extensively in the context of surface codes, and can be accommodated by merging adjacent stabilisers together into ``super-stabilisers'' in order to remove the defective qubit(s) \cite{Stace2010DegeneracySurfaceCodes, Nagayama2017DefectiveLattice, Auger2017FabricationSurfaceCode, Strikis2023ScalableQC, GransSamuelsson2024PairwiseMeasurementsSurfaceCode,debroy2024lucisurfacecodedropouts}. These super-stabilisers are measured by introducing ``gauge operators'' around a defect, such that the super-stabiliser measurements can be formed by taking products of gauge measurements. This is similar to how edge measurements combine to form plaquette measurements in the honeycomb code, with the difference being that there is no dependency on the order in which gauge measurements are performed. However, translating this procedure to the honeycomb code is non-trivial: the adaptation of the lattice and the measurement schedule of the Floquet code must be performed carefully to make sure that detectors can still be formed near the defect locations. Meanwhile, although each sub-round of the honeycomb code can be mapped to an instance of a surface code \cite{Hastings2021dynamically}, there are several impediments and disadvantages to directly adapting the surface code strategy, including that gauge operators of the surface code would not always correspond to edge measurements in the honeycomb code, and that the surface code mapped to from the honeycomb code evolves with each sub-round. Meanwhile, the resulting honeycomb code strategy may not be optimal, and the applicability of this strategy beyond the honeycomb code may also be limited. Hence, we must go beyond a simple adaptation of existing strategies to accommodate fabrication defects in Floquet codes.

Aasen \etal have investigated ways of accommodating fabrication defects on honeycomb and 4.8.8 Floquet codes~\cite{aasen2023faulttolerant}. They propose several strategies, two of which involve removing all qubits in a plaquette adjacent to a defect. An important consideration for such methods is whether the device has extra connectivity available for the formation of new plaquettes, which is required for some methods proposed~\cite{aasen2023faulttolerant}. Respecting the constraints of hardware while removing the smallest number of qubits and edges from the lattice possible is important for maintaining a high percolation threshold, and presumably a high Pauli noise threshold under finite fabrication defect rates\footnote{One of the methods proposed in Ref.~\cite{aasen2023faulttolerant}, while requiring extra connectivity, only necessitates the removal of a single qubit, which could be useful for honeycomb code implementations. However, the properties of the resulting graph and the updated measurement schedule make this strategy potentially hard to generalise to other Floquet codes (including the honeycomb code with boundaries) or to multiple defects in the lattice.}.

\section{Strategy for removing defective qubits from the lattice}
\label{sec:strategy}

Here we outline the strategy for removing defective qubits from the honeycomb code's underlying lattice. The main idea is to create ``super-plaquettes" which avoid using the defective qubits, thereby removing them from the lattice. 
We do this by merging some plaquettes in the original lattice together, while other existing plaquettes are replaced by two-body plaquettes.
We will do all this while retaining the 3-face-colourability and tri-valence of the graph, for ease of defining the updated code and ease of generalisation to other Floquet codes. To achieve this, we need to remove a pair of qubits along an edge, as described below. There is some subtlety around choosing the extra removed qubit (or qubits if there is more than one defect) in such a way that maximises the distance of the resulting code---we shall refer to this as the ``edge choice'', since choosing a qubit to remove is the same as choosing an edge adjacent to the given fabrication defect. We also do not allow for extra connectivity between qubits in the updated Floquet code. In \Cref{app:extra_connectivity}, we describe how improved results can be obtained in the presence of extra connections in the underlying hardware. 

Below, we will outline the algorithm for generating the new code with super-plaquettes given a pattern of fabrication defects and an edge choice for each defect. There are certain situations in which the edge choice is obvious, but otherwise we need to use heuristics to make an edge choice. We discuss some heuristics for choosing suitable edges to remove in \cref{ssec:choosing-defect-edge}.

\subsection{The super-plaquette algorithm}
\label{ssec:superplaquette-algorithm}

We begin with a single fabrication defect in the bulk of the lattice and explain how the algorithm runs for this case. Let us denote an edge as $(E,B)$ where $E$ is a set of one or two qubits and $B$ is a Pauli basis. A plaquette will be a set of edges, each of which has basis not equal to the plaquette basis. The measurement outcome of this plaquette will be the product of measured values of all edge operators---the measurement of the qubit(s) in the basis specified---in its corresponding set. As mentioned above, the first step of the algorithm is to make an edge choice, giving us another qubit to remove from the lattice. We will call this selected edge the ``defect edge'', although note that there is not necessarily anything defective about this edge. Hence we begin with the following step:

\begin{enumerate}
    \item Choose a defect edge.
\end{enumerate}
\noindent
We call the basis of this defect edge the ``defect edge basis” ($D$).
An example is illustrated in \cref{fig:select-plaquettes}. The defective qubit is the red vertex and the other removed qubit is the blue vertex---the defect edge is the edge in orange connecting these two qubits. Note that this edge connects two red $X^{\otimes 6}$ plaquettes, so the defect edge basis is $D = X$.

We then proceed through the list of plaquettes and find all plaquettes that contain the defect edge---we will refer to these as the ``shrink plaquettes.'' For a single defect edge there will only be two such plaquettes and they will be of different types.

\begin{enumerate}
    \setcounter{enumi}{1}
    \item Find the shrink plaquettes adjacent to the defect edge.
\end{enumerate}
\noindent
In \cref{fig:select-plaquettes}, there is one green and one blue shrink plaquette incident to the defect edge. The green (resp. blue) shrink plaquette in this example corresponds to a $Y^{\otimes 6}$ (resp. $Z^{\otimes 6}$) stabiliser.

We then find all plaquettes of type $D$ that are adjacent to a shrink plaquette---we call these the ``merge plaquettes''. 

\begin{enumerate}
    \setcounter{enumi}{2}
    \item Find the merge plaquettes.
\end{enumerate}
\noindent
In \cref{fig:select-plaquettes}, there are four red merge plaquettes. In this example, the merge plaquettes correspond to $X^{\otimes 6}$ stabilisers.

\begin{figure}
    \centering
    \begin{subfigure}{0.3\linewidth}
        \includegraphics[width=\linewidth]{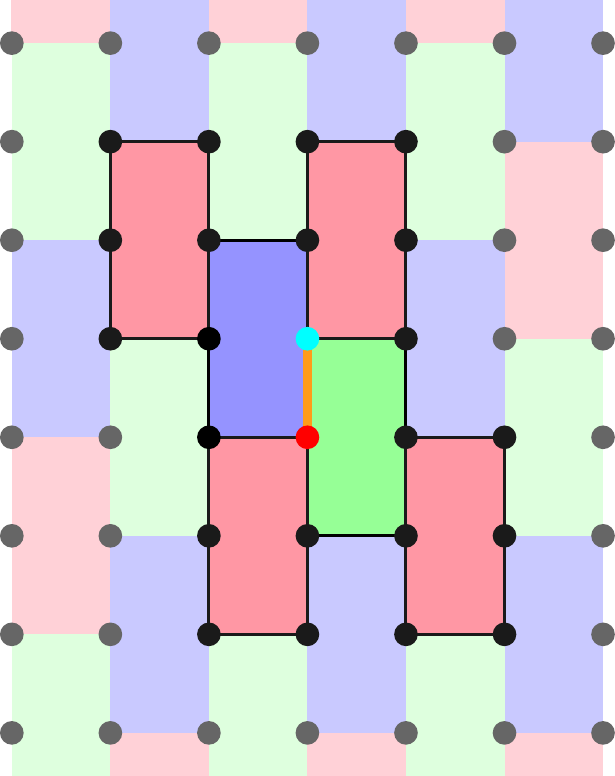}
        \caption{\label{fig:select-plaquettes}}
    \end{subfigure}\hfill
    \begin{subfigure}{0.3\linewidth}
        \includegraphics[width=\linewidth]{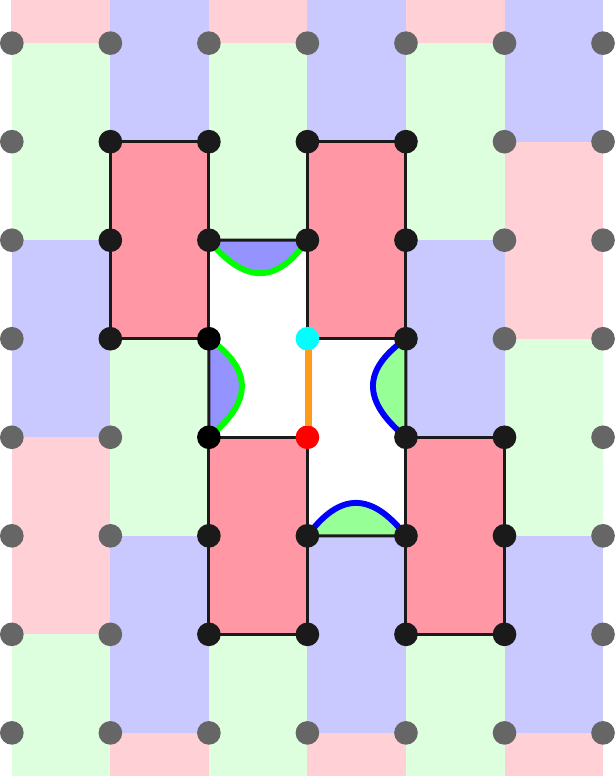}
        \caption{\label{fig:shrink-plaquettes}}
    \end{subfigure}\hfill
    \begin{subfigure}{0.3\linewidth}
        \includegraphics[width=\linewidth]{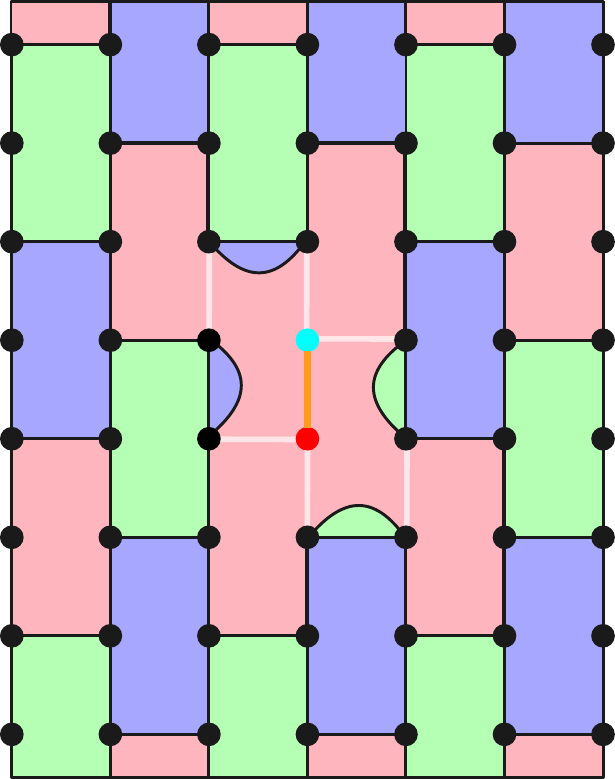}
        \caption{\label{fig:defect-removed}}
    \end{subfigure}

    \caption{\label{fig:example-defect} Removing a defective qubit from the bulk of the honeycomb code. (a) A defective qubit is highlighted in red, along with a neighbouring qubit in blue and the defect edge between them in orange. In this example, one green plaquette and one blue plaquette around the defect are selected as shrink plaquettes, and four surrounding red plaquettes are selected as merge plaquettes. These plaquettes have been highlighted. (b) The shrink plaquettes are replaced with weight-2 plaquettes, formed using newly-added $YY$ and $ZZ$ edges highlighted in green and blue, respectively. (c) The merge plaquettes are combined into a single super-plaquette. The super-stabiliser is supported on $22$ qubits, and it is composed of $9$ blue and $9$ green edges that were originally involved in one of the merge plaquettes, plus $2$ blue edges and $2$ green edges which were added in Step 4. Removed edges are shown in white, the defective qubit and its pair on the defect edge no longer contribute to any stabilisers and can be removed.}
\end{figure}

Now we form the super-plaquette and adjacent two-body plaquettes. For each shrink plaquette $S$, let $E_{S, D}$ be the set of edges in $S$ which are the same basis as the defect edge---in this case, the $XX$ edges---but do not contain the defect. We first add the two-body plaquettes to the list of plaquettes:

\begin{enumerate}
    \setcounter{enumi}{3}
    \item For each shrink plaquette $S$ (suppose it has Pauli type $P$) and for each edge $(E, D) \in E_{S, D}$, add a plaquette that involves two edges: $(E, D)$ and $(E,T)$, where $T \notin \{D, P\}$. Add edge $(E,T)$ to the list of edges and to a set of newly-added edges $C$.
    \item Delete all shrink plaquettes from the list of plaquettes.
\end{enumerate}
\noindent
In \cref{fig:shrink-plaquettes}, we show how Step 4 is applied to the example in \cref{fig:select-plaquettes}, with the new $YY$ and $ZZ$ edges in $C$ coloured green and blue, respectively. The green (resp. blue) weight-6 plaquettes have now been split into two weight-2 green (resp. blue) plaquettes.

Note that the set of newly-added edges $C$ connect the merge plaquettes in a cycle. We use $C$ to combine the merge plaquettes into a single new super-plaquette:

\begin{enumerate}
    \setcounter{enumi}{5}
    \item Include all edges in $C$ in the super-plaquette. For every merge plaquette $M$ and every edge $(E,B)$ in $M$, add $(E,B)$ to the super-plaquette if $(E,B)$ is not an element of any shrink plaquette.
    \item Delete all merge plaquettes from the list of plaquettes.
\end{enumerate}

The final result of this process is shown in \cref{fig:defect-removed}. Note that the defective qubit and its neighbour are no longer connected to any other qubits, nor do they contribute to any plaquettes in the code. This means that they can now be removed from the code.

The resulting super-plaquette is of the same type as the merge plaquettes: it is formed from a cycle of edges of alternating type not equal to $D$ (alternating $YY$ and $ZZ$ measurements for the example in \cref{fig:example-defect}). This follows from the fact that the cycle of edges that wraps around the super-plaquette is formed of edges from the set $C$ and edges from merge plaquettes, none of which have basis equal to the merge plaquette type. To see that the cycle of bases is alternating, consider the edges adjacent to $(E,B)\in C$ in \cref{fig:shrink-plaquettes}. There are five edges adjacent to $E$. One edge is $(E, D)$, the edge connecting these two vertices in the original Floquet code. This edge is of type $D$---the defect edge basis---and will not be included in the super-plaquette. Of the remaining four, two are of type $B$; these two will contribute to one of the shrink plaquettes, and will therefore be removed from the lattice during Steps 6 and 7. This leaves two edges incident to edge $(E,B)$, which by construction have to be a different type to $B$ or $D$, and these are the two edges that are included in the super-plaquette. Therefore the super-plaquette alternates before and after each edge in $C$. Furthermore, the cycle continues to alternate around the merge plaquette, by construction of the original code. In this way we see that the entire super-plaquette is formed from alternating edges of two bases. 

\subsubsection{Remarks on super-plaquette algorithm}

It can be seen that the algorithm introduced above results in an updated lattice with the required properties of tri-valence and 3-face-colourability. It does so while (at least for a defect within the bulk of the lattice) removing the smallest number of qubits, edges and plaquettes from the lattice possible. In \Cref{app:minimal-qubit-removal}, we prove the following:
\begin{theorem}
    The algorithm above produces a tri-valent, 3-face-colourable lattice. It does so while removing the minimal number of qubits, edges and plaquettes from the lattice (assuming a lattice without boundaries).
    \label{thm:validity-qubit-removal}
\end{theorem}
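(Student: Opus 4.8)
The plan is to split \Cref{thm:validity-qubit-removal} into two independent claims: (i) the output lattice is tri-valent and 3-face-colourable, and (ii) the number of qubits, edges and plaquettes removed is minimal among all procedures preserving these two properties. For part (i), I would argue locally. Tri-valence: the only vertices whose incident-edge count changes are the two endpoints of each newly-added edge in $C$; each such vertex loses exactly one edge (it was an endpoint of an $(E,D)$ edge that gets deleted together with the shrink plaquettes in Steps~5--7) and gains exactly one edge (the $(E,T)$ edge added in Step~4), so its degree stays three. Every other vertex is untouched. 3-face-colourability: assign the super-plaquette the colour $D$ of the merge plaquettes, give each new weight-2 plaquette the colour $T \notin \{D,P\}$ chosen in Step~4, and keep all surviving plaquettes' original colours. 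One then checks that no two adjacent faces share a colour. The key sub-step — already carried out in the prose preceding the theorem — is that the super-plaquette's boundary is an alternating cycle of the two non-$D$ edge types, so every face sharing an edge with the super-plaquette is a weight-2 plaquette of one of those two colours (never $D$), and each weight-2 plaquette is flanked by the super-plaquette on one side and a surviving plaquette of the third colour on the other. I would also need the small observation that the two shrink plaquettes have the two colours other than $D$, which forces $T$ to be well-defined and distinct for the two families of weight-2 plaquettes.

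For part (ii), the minimality claim, the strategy is a lower-bound/matching-upper-bound argument. The upper bound is just bookkeeping from the algorithm: per defect we delete $2$ qubits, a bounded number of edges, and the net plaquette count changes by $-2$ (two shrink plaquettes become four weight-2 plaquettes, i.e.\ $+2$; four merge plaquettes become one, i.e.\ $-3$; total $-1$? — I'd recount carefully, but the point is it is a fixed small number). For the lower bound I would show that removing only the single defective qubit is impossible while staying tri-valent and 3-face-colourable: deleting one vertex from a tri-valent graph leaves its three neighbours with degree two, and there is no way to re-add edges among existing qubits without either raising some degree above three or introducing new connectivity (which we forbid), so at least one more qubit must go. Then I would argue that removing a second qubit forces it to be a neighbour of the first along an edge — removing a non-adjacent qubit would strand even more low-degree vertices — and that once two adjacent qubits are gone, the faces they touched must be merged/split in essentially the way the algorithm does, pinning down the edge and plaquette counts. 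A parity or Euler-characteristic argument ($V - E + F$ is fixed for a fixed surface, and tri-valence forces $2E = 3V$) is the natural tool: it ties $\Delta V$, $\Delta E$, $\Delta F$ together so that minimising one minimises all simultaneously.

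I expect the main obstacle to be the lower-bound half of (ii): ``minimal'' has to be made precise (minimal over what class of modifications? presumably all lattices obtainable by deleting qubits/edges and adding edges only among surviving qubits, that are tri-valent, 3-face-colourable, and ``remove'' the defect in the sense that the defective qubit supports no checks), and then one must rule out every cleverer alternative, not just the naive ones. The tri-valence constraint combined with the no-extra-connectivity constraint is what makes this tractable — it severely limits the repair moves — but turning ``severely limits'' into ``forces exactly two qubits, this many edges, this many plaquettes'' requires a careful case analysis of how the degree-deficient neighbours can be healed. The colourability and tri-valence parts of the proof, by contrast, I expect to be short and essentially local, so I would present those first and spend most of the appendix on the optimality argument, likely organised as: (a) $\ge 2$ qubits removed; (b) the extra qubit is adjacent to the defect; (c) given (a)--(b), Euler's formula pins the edge and plaquette counts; (d) the algorithm attains these bounds.
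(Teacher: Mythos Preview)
Your plan for part (i) is essentially the paper's, but your colour assignment for the weight-2 plaquettes is wrong: a plaquette bounded by edges of types $D$ and $T$ has colour $P$ (the unique colour absent from its edge types), not $T$. The paper assigns each weight-2 plaquette the colour of its parent shrink plaquette, and with that correction the adjacency checks go through exactly as you sketch.

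For part (ii) you have the right tool but you bury it under machinery that is both unnecessary and, in one place, illegitimate. The paper's lower bound is simply: tri-valence on a closed surface forces $2E = 3V$, so $V$ must be even, hence removing a single qubit is impossible and at least two must go; then $\Delta E = \tfrac{3}{2}\Delta V = -3$ is forced by tri-valence, and $\chi = V - E + F = 0$ on the torus forces $\Delta F = -1$. That is the entire argument --- no case analysis of degree-two neighbours, no need to show the second removed qubit is adjacent to the defect. Crucially, the ``no extra connectivity'' constraint you invoke (``there is no way to re-add edges among existing qubits without \ldots\ introducing new connectivity, which we forbid'') is \emph{not} a hypothesis of \Cref{thm:validity-qubit-removal}; it only enters in \Cref{thm:minimal-edge-removal}. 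So your step (a) as written proves a weaker statement than required, and your step (b) is both unneeded and unprovable under the actual hypotheses (with arbitrary new edges allowed, nothing forces the second removed qubit to be adjacent). Drop (a)--(b), promote your Euler/handshake remark to the whole proof, and you are done. Your plaquette bookkeeping, incidentally, already lands on the correct $\Delta F = -1$, not the $-2$ you first wrote.
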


We also note that while there are new edges added to the graph, these edges are between vertices which were adjacent in the original graph. This means that all the new two-body measurements are performed between qubits which are already connected on the quantum processor, implying that no additional hardware connectivity is required. This algorithm can also be run using the same measurement schedule as the original code, as the new measurements and plaquettes follow the same rules of tri-valence and 3-face-colourability as the original Floquet code. Hence, assuming the need to create a static lattice with the same properties as the original, and which is a sub-lattice of the original, our algorithm results in optimal qubit, edge and plaquette removal.

While the example above is specific to the honeycomb code, we can see that the algorithm is generalisable to other two-dimensional Floquet codes. In \Cref{app:alternative-codes}, we discuss how this algorithm can be extended to Floquet codes on other tri-valent and 3-face-colourable graphs, and provide an explicit example of applying this strategy to the 4.8.8 Floquet code \cite{Paetznick2023PerformancePlanarFloquetCodes, aasen2023faulttolerant}.

An important consideration for maintaining a high threshold under fabrication defect noise is the number of \emph{connections} removed from the lattice: that is, some qubits are connected by edges in the original lattice but not in the updated lattice. These removed connections are shown in white in \cref{fig:defect-removed}. For uniform lattices (i.e., honeycomb, 4.8.8 or 4.6.12), the super-plaquette algorithm (again, applied in the bulk), results in the minimal removal of such connections without requiring additional connectivity in hardware. In \Cref{app:minimal-qubit-removal}, we prove the following:
\begin{theorem}
    Consider a defective qubit in the bulk of a Floquet code defined on a honeycomb, 4.8.8 or 4.6.12 lattice (without boundary); i.e., a uniform tiling of the plane. Under the constraint that we cannot form an edge between any two qubits not connected by an edge in the original lattice, the algorithm above (with a certain choice of defect edge), minimises the number of removed edges from the original lattice.
    \label{thm:minimal-edge-removal}
\end{theorem}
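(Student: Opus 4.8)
The plan is to establish a lower bound on the number of connections (edges of the original lattice) that any defect-accommodation strategy must remove, and then verify that the super-plaquette algorithm, with a suitable edge choice, meets this bound on each of the three uniform tilings. First I would set up the bookkeeping: an edge of the original lattice is ``removed'' if it is no longer present in the updated lattice. Because the defective qubit $v$ is unusable, every original edge incident to $v$ is removed — that is $3$ edges on any trivalent lattice. By Theorem \ref{thm:validity-qubit-removal} we must remove at least one further qubit $w$ to restore trivalence and 3-face-colourability; since we forbid adding any edge between qubits not already adjacent, $w$ must be a neighbour of $v$ (otherwise the two plaquettes meeting at the $vw$ defect edge cannot be reconnected into valid faces without a new long-range edge), so the defect edge $vw$ is one of the three edges already counted, and the two other edges incident to $w$ are also removed. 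This gives a crude lower bound of $3 + 2 = 5$ removed connections; I would then argue this cannot be improved, i.e.\ that no valid updated lattice (trivalent, 3-face-colourable, sub-lattice of the original, no new connectivity) can avoid removing any edge not incident to $v$ or $w$ — essentially because the faces of the original tiling incident to $v$ or $w$ are destroyed and their replacements (the two weight-2 plaquettes and the super-plaquette) are forced, so no additional edge need be touched, while the degree/colourability constraints force all five.

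The second half is the matching upper bound: I would simply count, for each of the honeycomb, 4.8.8 and 4.6.12 tilings, the connections removed by the algorithm of Section \ref{ssec:superplaquette-algorithm}. These are exactly the three edges at $v$ together with the two non-defect-edge edges at $w$ — note the newly-added edges in $C$ lie between qubits adjacent in the original lattice and the edges of the shrink plaquettes that get deleted are precisely the edges incident to $v$ and $w$, so no \emph{original} connection outside the closed neighbourhood of the defect edge is lost. Hence the algorithm removes $5$ connections, matching the lower bound. The phrase ``with a certain choice of defect edge'' in the statement signals that on non-self-dual or less symmetric tilings (the 4.8.8 and 4.6.12 cases), the count of removed edges at $w$ can depend on which neighbour of $v$ is chosen as $w$ when the two plaquettes meeting at $vw$ have differing sizes; I would show that choosing the defect edge so that the two shrink plaquettes are of the two smallest available sizes (or, on the honeycomb, any edge, by symmetry) yields exactly two removed edges at $w$ and no more, so the bound is achieved.

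I would organise the argument tiling-by-tiling after the general lower-bound lemma, since the case analysis is where the ``certain choice'' is pinned down. The main obstacle I anticipate is the lower-bound direction: it is intuitively clear that ``you must kill $v$'s three edges and $w$'s other two,'' but making rigorous that \emph{no} alternative valid reconfiguration can do better requires carefully ruling out strategies that, say, remove a qubit other than a neighbour of $v$, or remove more qubits but fewer edges, or exploit the 3-face-colouring in an unexpected way. I would handle this by arguing from the faces: any qubit removed must have all its incident edges removed; trivalence of the result forces the total count of removed qubits to have a fixed parity relation with removed edges (by a handshake/Euler-type count restricted to the affected region); and 3-face-colourability together with the sub-lattice and no-new-edges constraints forces the affected region to be exactly the union of the faces around $v$, whence $w$ must be a neighbour of $v$ and the five-edge count is forced. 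The remaining steps — the per-tiling edge counts and the verification that the algorithm's output is a valid sub-lattice — are routine given Theorem \ref{thm:validity-qubit-removal}.
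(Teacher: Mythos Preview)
Your upper-bound count is wrong, and this undermines the whole matching argument. You assert that the algorithm removes exactly the five connections incident to $\{v,w\}$ and that ``the edges of the shrink plaquettes that get deleted are precisely the edges incident to $v$ and $w$''; this is false. When Step~4 adds a new edge $(E,T)\in C$ parallel to some $(E,D)\in E_{S,D}$, each endpoint of $E$ acquires a fourth incident edge, and trivalence is restored only because the edge of the shrink plaquette on the \emph{far side} of $E$ from the defect---an edge shared with a merge plaquette and \emph{not} incident to $v$ or $w$---is dropped from every plaquette and hence from the lattice. For the honeycomb each weight-$6$ shrink plaquette contributes one such extra removal, so the algorithm deletes $5+2=7$ connections (the paper itself counts ``seven edges'' removed in the remarks following \cref{fig:example-defect}); with the appropriate defect-edge choice the count is $7$ on the $4.8.8$ lattice and $6$ on the $4.6.12$ lattice. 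Your lower bound of $5$ is therefore not tight on any of the three tilings, and the matching you claim never occurs.

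The missing mechanism is the one the paper's proof is built on. After excising the five edges around $\{v,w\}$ one is left with four degree-$2$ vertices; because any added edge must lie along an existing connection, raising one such vertex to degree $3$ pushes a neighbour to degree $4$, forcing a further deletion, and this propagates. The paper packages this as a path argument: the four degree-$2$ vertices split into two pairs $S_G,S_B$ (by the colour of the edge they lost), and each pair must be joined by a path in the original lattice along which one alternately adds a parallel edge and removes the next original edge. The minimum number of extra removals is then governed by the lengths of the shortest such paths, realised by the shrink-plaquette boundaries and equal to $(K_i-4)/2$ per shrink plaquette of weight $K_i$; one then checks lattice-by-lattice that no shorter alternating path exists. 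That path-length analysis---not a bare incidence count at $\{v,w\}$---is what gives the correct lower bound, explains why the three lattices yield different totals, and is where the ``certain choice of defect edge'' actually enters (it selects the shrink-plaquette weights $K_1,K_2$, not the number of edges at $w$).
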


There are a few further remarks we can note about this strategy. First of all, we can verify that no additional logical degrees of freedom have been created. To see this, note that two physical qubits are removed for the defect. After applying the strategy in the example of \cref{fig:example-defect}, four merge plaquettes are combined into a single super-plaquette, but two shrink plaquettes are split into two weight-2 plaquettes, meaning that the number of plaquettes has only decreased by one. We have also removed seven edges (one red, three blue, three green) and added four new edges (two blue, two green). Overall this means that we have removed one edge of each type from the graph, so after each sub-round the number of stabilisers formed by the edges of that sub-round also decreases by one. As a result, the size of the instantaneous stabiliser group after each sub-round has decreased by two generators, meaning that the number of logical qubits---the number of physical qubits minus the size of the instantaneous stabiliser group during each sub-round---remains unchanged.

We can also see that for multiple defects this algorithm can be implemented iteratively, by choosing a defect edge for each defective qubit. If multiple defects are close to each other, this can mean the super-plaquettes are themselves merged together to create even larger plaquettes. An example of two super-plaquettes merged together is shown in \cref{fig:double-defect}. Note that in this example the two defect edges selected are in the same basis, however this does not need to be true in general, and in certain cases it can be beneficial to choose different bases (see Appendix~\ref{app:edge-selection}).

\begin{figure}
    \centering
    \begin{subfigure}{0.3\linewidth}
        \includegraphics[width=\linewidth]{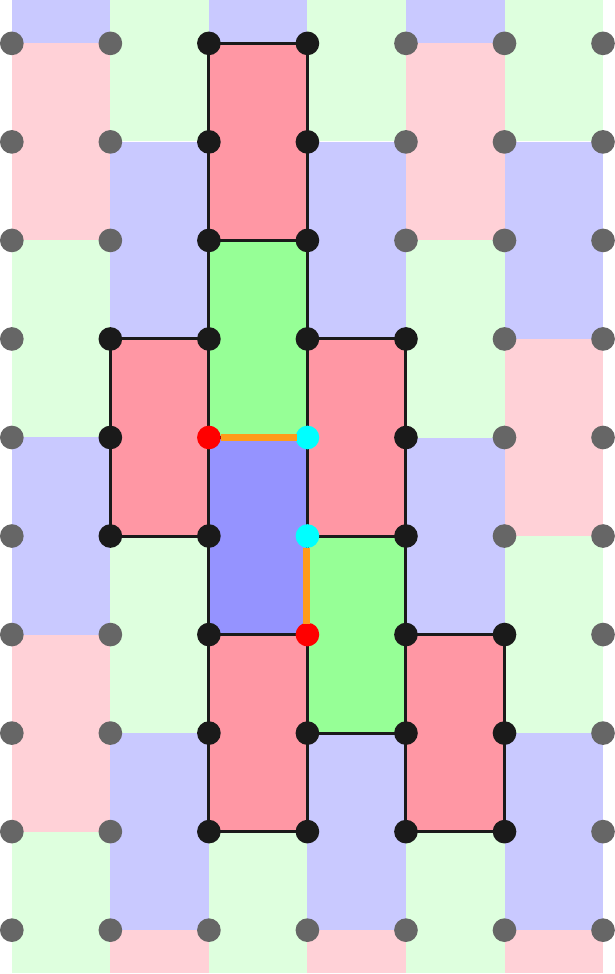}
        \caption{\label{fig:double-defect-initial}}
    \end{subfigure}\hfill
    \begin{subfigure}{0.3\linewidth}
        \includegraphics[width=\linewidth]{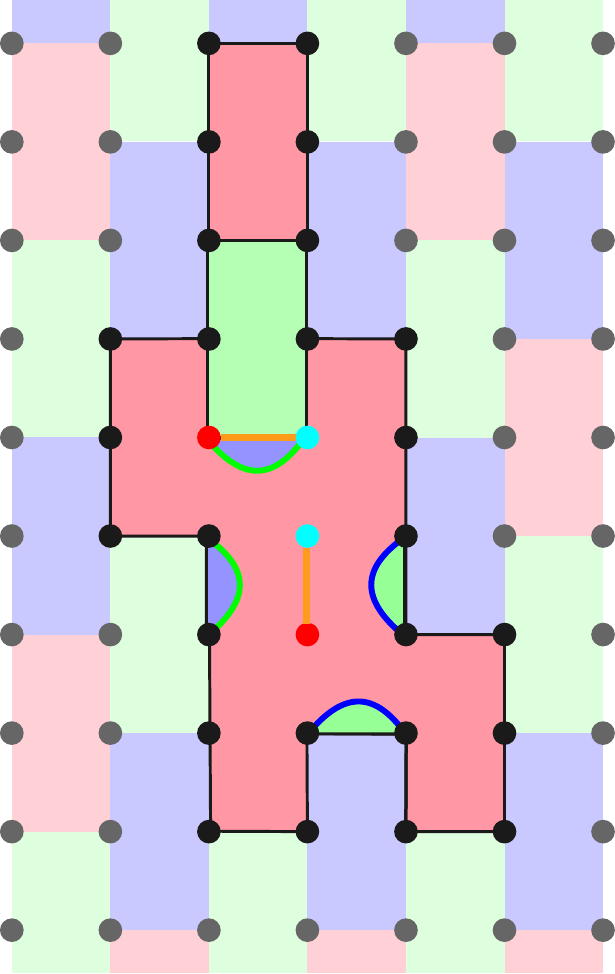}
        \caption{\label{fig:double-defect-first-removed}}
    \end{subfigure}\hfill
    \begin{subfigure}{0.3\linewidth}
            \includegraphics[width=\linewidth]{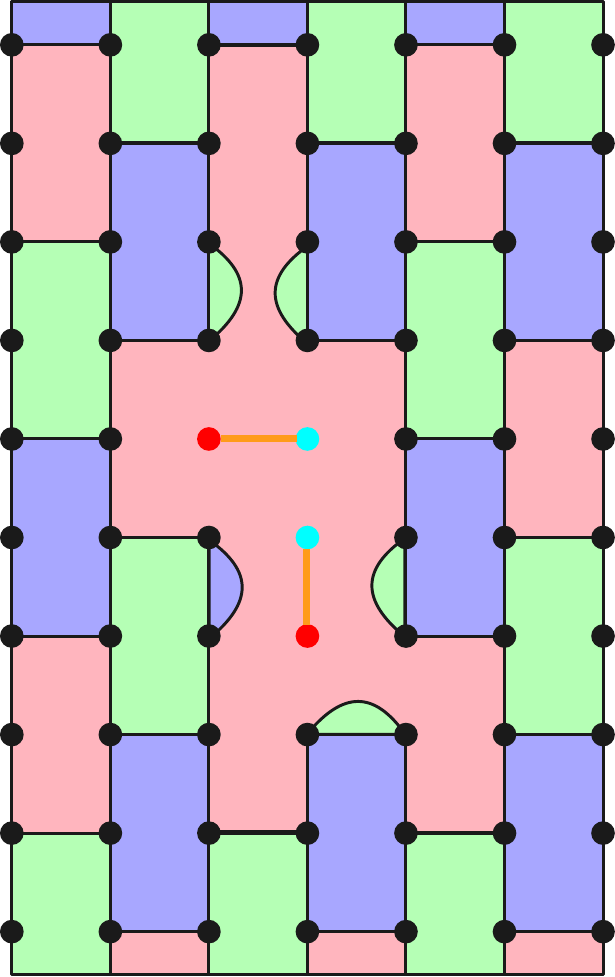}
        \caption{\label{fig:double-defect-both-removed}}
    \end{subfigure}
    \caption{\label{fig:double-defect} Removing two defective qubits from the bulk of the honeycomb code. (a) Defective qubits are in red, defect edges in orange and the neighbouring qubits to remove in blue. Merge and shrink plaquettes are highlighted for both defects, with shrink plaquettes in blue \& green and merge plaquettes in red. (b) The modified lattice after the bottom defect edge is removed, cf. \cref{fig:example-defect}. (c) The modified lattice after both defect edges are removed.}
\end{figure}

It is interesting to note the similarities between the resulting weight-two stabilisers shown in \cref{fig:defect-removed} and the two-gon method for applying boundaries to the honeycomb code presented in Refs.~\cite{Haah2022boundarieshoneycomb, Paetznick2023PerformancePlanarFloquetCodes}. Hastings and Haah showed that implementing a boundary in this way provides a non-trivial cycle of two types of edge, meaning that the measurement schedule needs to be modified in order to avoid measuring the logical observable \cite{Hastings2021dynamically, Haah2022boundarieshoneycomb}. Here, however, we \emph{do} want to measure the cycle generated by this algorithm, in order to measure a super-stabiliser around the defective qubit(s). Note also that these weight-two plaquettes can be measured directly, in addition to the way they are measured in the Floquet code (through two rounds of edge check measurements). Doing so would require an extra idling layer for the remaining qubits in the code, which would likely reduce the overall logical performance, at least for low defect densities. However, experimenting with this could result in useful improvements, and we leave this to future work.

\subsection{Defective qubits on the boundary}
\label{ssec:boundary-defects}

So far we have only considered fabrication defects which appear in the bulk of the code. Whilst this will work for periodic boundary conditions, we also need to consider what happens to defects which lie on the boundary of the code in order for this technique to be practical to implement on current quantum hardware.

In \cite{Gidney2022benchmarkingplanar}, Gidney \etal showed how boundaries for the honeycomb code can be formed by cutting a planar patch out of a larger honeycomb lattice. We can also use this technique to handle fabrication defects on the planar honeycomb code: first, we embed the patch back into the larger lattice; second, we apply the algorithm described in \cref{ssec:superplaquette-algorithm} to modify the honeycomb code; and third, we cut the patch out along the same boundaries again. We give some examples of applying this method to defects on the horizontal and vertical boundaries of the code in \cref{fig:defect-boundary}. Just like the periodic boundary conditions case, this approach requires no extra connectivity and can be implemented using the same measurement schedule as the defect-free planar honeycomb code.

\begin{figure}
    \centering
    \begin{subfigure}{0.3\linewidth}
        \includegraphics[width=\linewidth]{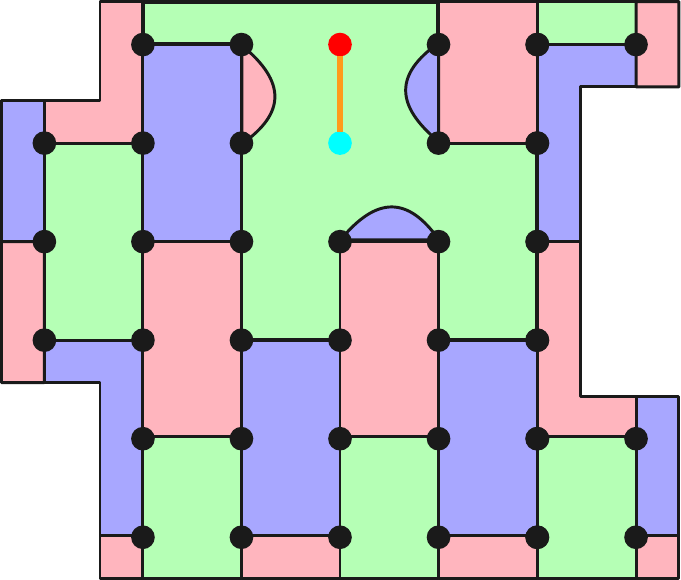}
        \caption{\label{fig:defect-boundary-good-edge}}
    \end{subfigure}\hfill
    \begin{subfigure}{0.3\linewidth}
        \includegraphics[width=\linewidth]{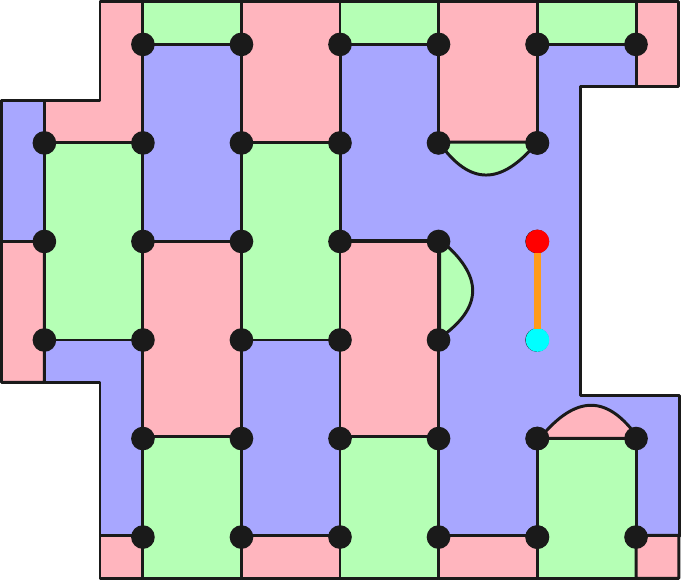}
        \caption{\label{fig:defect-boundary-vertical}}
    \end{subfigure}\hfill
    \begin{subfigure}{0.3\linewidth}
        \includegraphics[width=\linewidth]{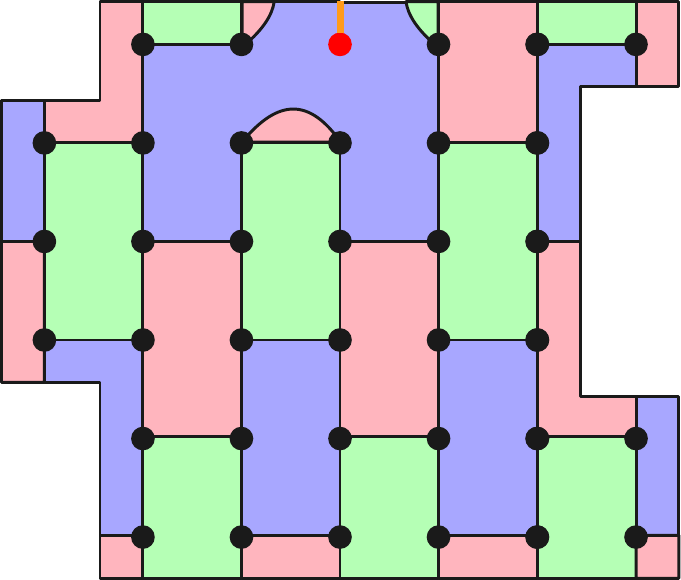}
        \caption{\label{fig:defect-boundary-bad-edge}}
    \end{subfigure}
    \caption{\label{fig:defect-boundary} Removing defective qubits on the (a) horizontal and (b) vertical boundaries of the honeycomb code. (c) A boundary edge is selected as a defect edge, which has the unintended effect of introducing two corner plaquettes (of weight $1$) in red and green. Thus choosing this edge as the defective edge is not possible.}
\end{figure}

It is worth noting that in this situation there are constraints on which edge we can choose to be the defect edge. In particular, the defect edge basis must not match the basis of the boundary. As shown in \cref{fig:defect-boundary-bad-edge}, choosing the boundary edge as the defect edge leads to weight-one plaquettes being defined on the boundary. These plaquettes are formed by two non-commuting single-qubit measurements, and are therefore non-deterministic throughout the whole computation. Introducing these non-deterministic plaquettes will create additional logical operators, potentially reducing the code distance or logical fidelity. Similar behaviour can be seen in the corner plaquettes on the defect-free planar honeycomb code \cite{Gidney2022benchmarkingplanar}.

The fact that boundary plaquettes on planar Floquet codes are non-deterministic also leads to other considerations we might want to take into account when selecting a defect edge. This is because while detectors can be formed for plaquettes in the bulk on average every three sub-rounds, detectors for plaquettes on the boundaries can only be formed once every six rounds, and can never be formed for plaquettes on the corners. As a result, our choice of defect edge could mean introducing a large super-plaquette that is only deterministically measured once every six sub-rounds, or even not measured at all. We describe the means of avoiding such cases, when possible, in Appendix~\ref{app:edge-selection}.

\subsection{Changing the logical operators}
\label{ssec:logical-observables}
The planar honeycomb code with the boundaries indicated above stores a single logical qubit, with logical operators (evolving in time) shown in \cref{fig:example-honeycomb}. Strings that cross the lattice non-trivially, between vertically or horizontally opposite boundaries, serve as our logical operators~\cite{Haah2022boundarieshoneycomb, Gidney2022benchmarkingplanar, Paetznick2023PerformancePlanarFloquetCodes}. In the periodic honeycomb code there are two logical qubits, with each having logical operators made up of one horizontal cycle and one vertical cycle such that the two anti-commute \cite{Hastings2021dynamically, Gidney2021faulttolerant}.

If the super-plaquettes formed by the algorithm in \cref{ssec:superplaquette-algorithm} are not incident to these strings then we do not need to make any changes to the logical operators. However, if the strings use an edge removed by the algorithm, or if the strings cross a defective qubit, then we can use a path-finding algorithm to construct alternative logical operators. If we cannot find strings that connect opposite boundaries in the modified graph, the defects cannot be resolved using the above algorithm because the graph does not percolate.

It is important to ensure that the new operators commute with the measurements as they evolve, in the same way that the logical operators in the original code transform in order to commute with each round of measurements. We give examples of how the logical operators can evolve around a newly formed super-plaquette on both the periodic and planar honeycomb codes in \Cref{app:observable-evolution}.

\subsection{Choosing a defect edge}
\label{ssec:choosing-defect-edge}

In order to run the algorithm described in \cref{ssec:superplaquette-algorithm}, we need to pick suitable defect edges incident to the defective qubits. For $n$ defective qubits, the number of possible defect edge assignments scales as $O(3^n)$, making it infeasible to exhaustively search over all possible edges to find an optimal selection.

Instead we use a number of heuristics to find a good set of defect edges. Our aim with these heuristics are two-fold. First, we want to keep super-plaquettes small, to ensure that removing the defects does not reduce the distance too significantly. And second, we want to keep super-plaquettes in the bulk of the code where possible, to ensure they are deterministically measured throughout the whole experiment. These heuristics were employed for the numerical simulations described in \cref{sec:simulation-results}. We give full details on the heuristics used in \Cref{app:edge-selection}.

\section{Simulation results}
\label{sec:simulation-results}

We shall now discuss our simulations of the planar honeycomb code with fabrication defects. Our simulations used multi-qubit Pauli product measurement (MPP) gates, which are native to Majorana devices \cite{Karzig2017MajoranaZeroModes, Paetznick2023PerformancePlanarFloquetCodes}, as well as single-qubit reset and measurement gates. For a Pauli noise parameter $p_{\textrm{Pauli}}$, we assume two-qubit depolarising noise with probability $p_{\textrm{Pauli}}$ after each MPP gate, single-qubit depolarising noise with probability $p_{\textrm{Pauli}}/10$ after each single-qubit reset or measurement gate, and that a measurement result flips with probability $p_{\textrm{Pauli}}$. This is similar to the standard depolarising with entangling measurements (SDEM3) noise model considered by Gidney \etal \cite{Gidney2022benchmarkingplanar}.

Our simulations are run as follows. For each ``target distance'' $d$ (the code distance for that lattice size in the absence of defects) we construct a $2d \times 3d$ planar honeycomb patch (see \cref{fig:example-honeycomb} for a $d=4$ example). We then sample defective data qubits from this patch with probability $p_{\textrm{defect}}$, and adapt the honeycomb code according to the algorithm discussed above. We then generate Stim circuits for running a $3d$-sub-round quantum memory experiment using this code for both logical observables \cite{Gidney2021Stim}. Finally, the measurement results for these experiments are decoded using PyMatching \cite{Higgott2023PyMatching}.

Because each sample of fabrication defects requires a different Stim circuit, providing a complete set of Stim circuits generated in these experiments would be impractical. Instead, we provide some example Stim circuits in Ref.~\cite{StimCircuitsZenodo}.

\subsection{Percolation threshold}

First, we benchmark how well this method preserves a logical qubit under fabrication defects. As a reminder, the logical observable on the honeycomb code is a product of Pauli operators on qubits in a non-trivial cycle for the periodic case, or in a path between two boundaries of the same type for the planar case. Such a cycle or path can still be created when using this method to account for fabrication defects as long as a super-plaquette does not form a non-trivial cycle or path between two boundaries itself. Hence we are interested in the probability for the modified graph to percolate at various fabrication defect probabilities.

We estimate the probability of successfully preserving the logical observables using Monte Carlo sampling. First, we generate instances of the honeycomb code lattice for target code distances ranging from $d=3$ to $d=7$. We then select fabrication defect probabilities up to $14\%$. For each distance and defect rate, we generate up to 10,000 lattice samples, and use the method described in \cref{sec:strategy} to adjust the code for them. We then estimate the probability that, when the algorithm is applied, the resulting graph does not percolate. Finally, we estimate a percolation threshold using second-order approximation methods, as described in Chapter 4 of Ref.~\cite{Harrington2004Thesis}. Since this is only a property of the graph and the algorithm described above, the resulting percolating probabilities should apply to all Floquet codes defined on this lattice, not just the honeycomb code.

The results are presented in \cref{fig:percolation-threshold}, with close-to-threshold data in \cref{fig:close_thresh_perc}. We find that applying these techniques to the honeycomb code achieves a percolation threshold of approximately $13.1 \pm 0.1\%$, which is comparable to the percolation threshold found for the surface code of just under $16\%$ \cite{Auger2017FabricationSurfaceCode}.

\begin{figure}
    \centering
    \begin{subfigure}{0.5\linewidth}
        \includegraphics[width=\linewidth]{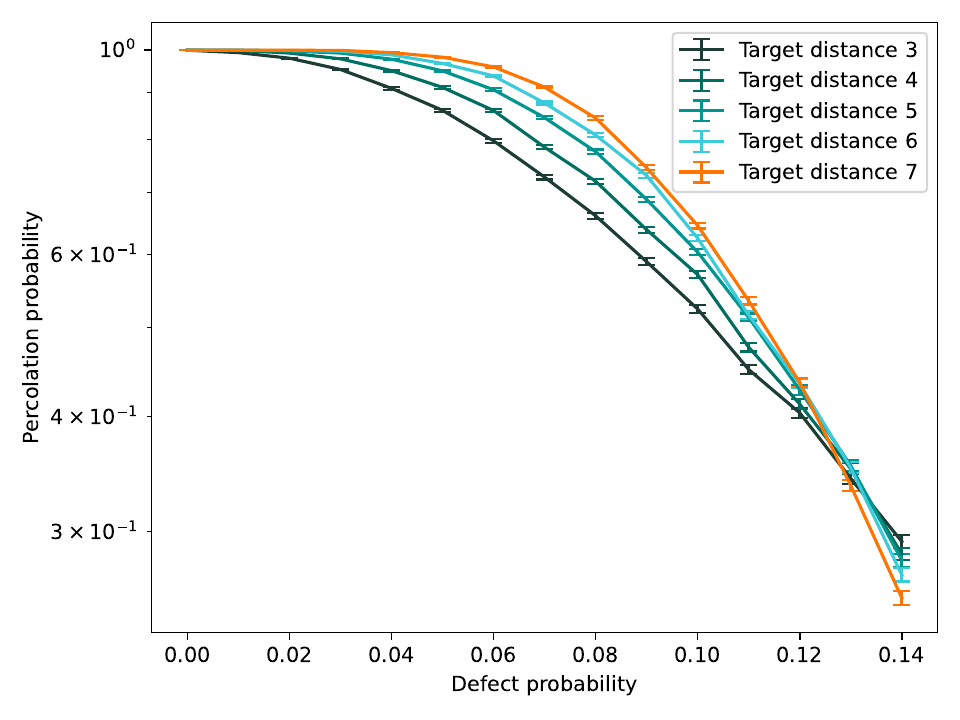}
        \caption{}
    \end{subfigure}\hfill
    \begin{subfigure}{0.5\linewidth}
    \includegraphics[width = \linewidth]{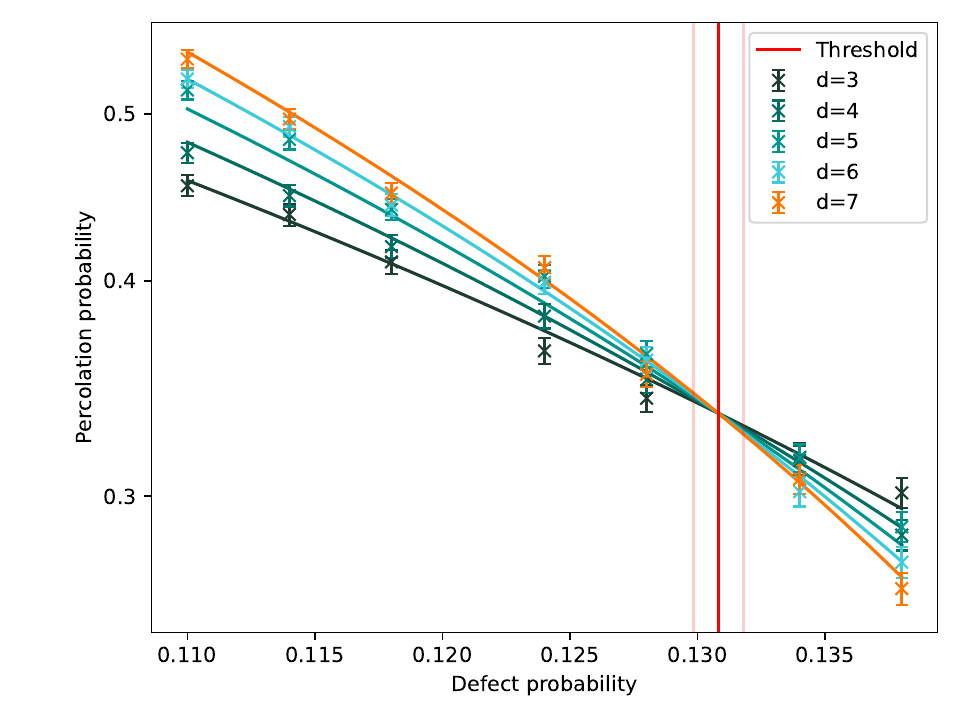}
    \caption{\label{fig:close_thresh_perc}}
    \end{subfigure}
    \caption{Probabilities for the graph to percolate when the super-plaquette strategy is used to remove fabrication defects for varying defect probabilities and target distances $d$. Error bars indicate standard error of the mean. (a) Unfitted data averaged over 10,000 realisations. (b) Curve fits to the close-to-threshold data (averaged over 6,000 realisations) are shown. The critical defect probability is around $13.1 \pm 0.1\%$, as indicated by the red lines.}
    \label{fig:percolation-threshold}
\end{figure}

\subsection{Average effective Pauli distance}

Next, we benchmark the effective Pauli distance for each sample. As before, we start with target Pauli distances ranging from 3 to 7, defect probabilities ranging from $1\%$ to $14\%$, and generate up to 10,000 defect samples for each target distance and defect probability. Each circuit is then implemented in Stim with Pauli noise inserted, and the distances for both the horizontal and vertical observables are estimated using Stim's functionality for finding the shortest graph-like error \cite{Gidney2021Stim}. Note that if a code graph fails to percolate, then the effective distance is taken to be zero.

The results are presented in \cref{fig:effective-distance}. By plotting the average effective distance on a log scale we find that for a code with a fixed target distance, the effective distance decreases exponentially with the defect probability up to the percolation threshold. However, for small defect probabilities the expected distance grows linearly with the target distance, indicating the possibility of a non-zero Pauli threshold. We will show below that such a threshold exists for defect probabilities up to roughly the percolation threshold in \cref{ssec:pauli-error-threshold}. This means that if the probability of a qubit being defective is sufficiently small, achieving a desired Pauli distance is simply a case of creating a sufficiently large patch.

\begin{figure}
    \centering
    \includegraphics[width=0.6\linewidth]{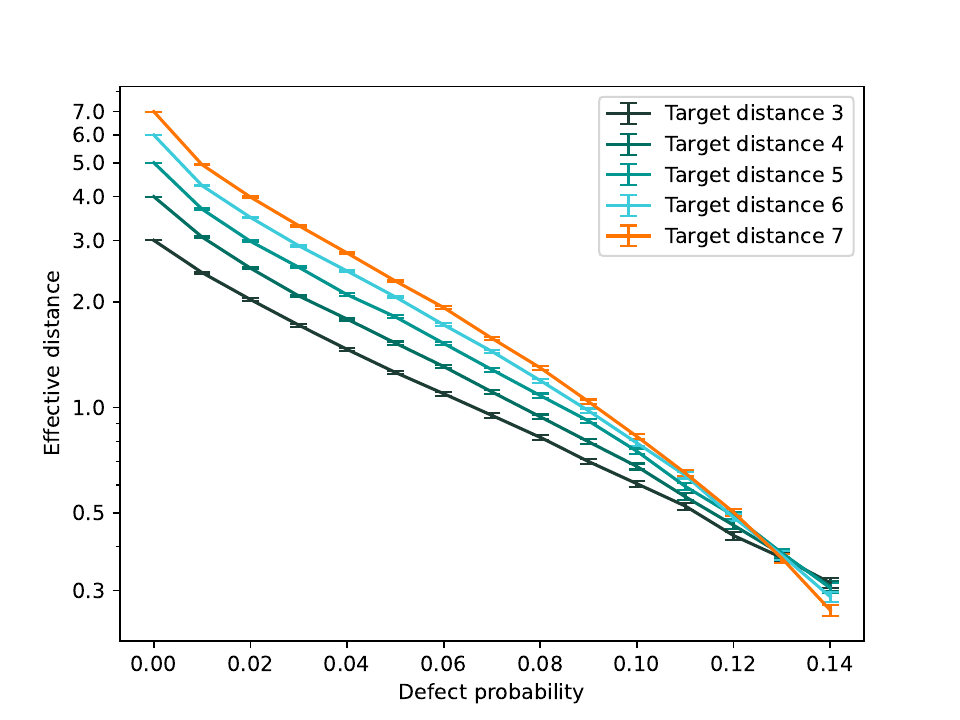}
    \caption{\label{fig:effective-distance} Average effective Pauli distance found for given target distances and defect probabilities. Error bars indicate standard error of the mean.}
\end{figure}

\subsection{Pauli error threshold}
\label{ssec:pauli-error-threshold}

Finally, we investigate how the Pauli error threshold changes depending on the fabrication defect probability. For each target distance $d$ and fabrication defect probability $p_{\textrm{defect}}$, we generate 1,000 defect samples and adapt the code to these defects. For each code instance, if the resulting graph percolates, we next generate Stim \cite{Gidney2021Stim} circuits for a $3d$-sub-round quantum memory experiment and insert Pauli noise with varying error probabilities $p_{\textrm{Pauli}}$. We then collect 100,000 quantum memory experiment samples, for each code instance and for each observable (both horizontal and vertical). We then decode each experiment. For horizontal and vertical logical error probabilities $p_H$ and $p_V$ we estimate an overall logical error probability for a given experiment as

$$p_L = 1 - (1 - p_H)(1 - p_V),$$

\noindent which is the estimated probability of either logical error occuring. We average $p_L$ across all generated defect samples whose resulting code graphs percolate. Since this results in an estimate to the logical failure probability given that the graph percolates, we can use this to estimate the overall failure probability $p_{F}$ for a given defect probability and target distance as

$$p_{F} = p_{\textrm{perc}}p_L + (1 - p_{\textrm{perc}}),$$

\noindent where $p_{\textrm{perc}}$ is the probability that the code manages to successfully percolate. In the above, we consider any non-percolating case as a logical failure.

Finally, from these logical error probabilities a threshold is estimated. This is done by approximating logical error rates close to threshold as quadratically dependent on $x = (p_{\textrm{Pauli}}-p_0)d^{1/\nu_0}$, where $p_0$ is the Pauli error threshold and $\nu_0$ is a critical exponent for the transition (see Chapter 4 of Ref.~\cite{Harrington2004Thesis}). An example fit at $p_\text{defect} = 6\%$ is shown in \cref{fig:rescaled-harrington-fit-6-percent}, with the resulting threshold presented alongside Pauli error rates in \cref{fig:harrington-fit-6-percent}. Further data at other defect probabilities are provided in \Cref{app:additional_numerics}.

\begin{figure}
    \centering
    \begin{subfigure}{0.5\linewidth}
        \includegraphics[width=\linewidth]{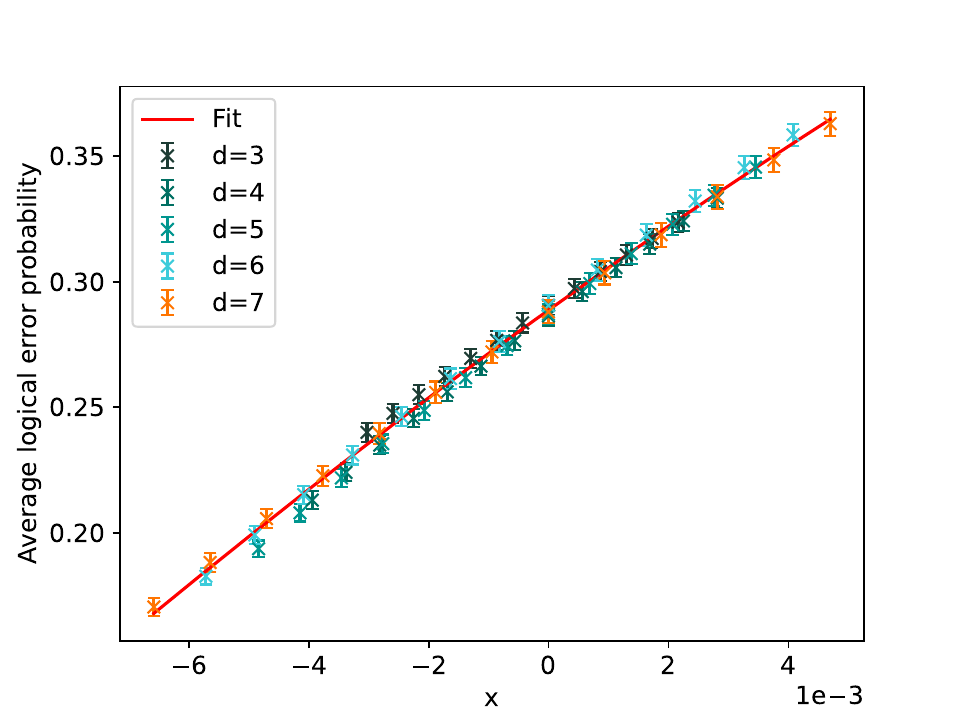}
        \caption{}
        \label{fig:rescaled-harrington-fit-6-percent}
    \end{subfigure}\hfill
    \begin{subfigure}
    {0.45\linewidth}
        \includegraphics[width=\linewidth]{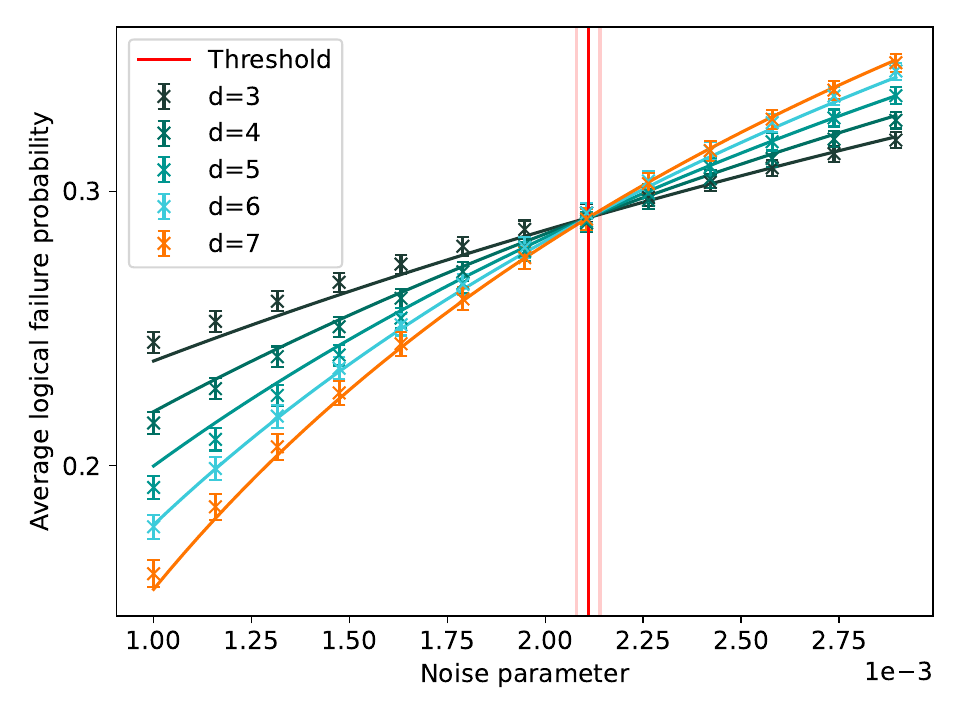}
        \caption{}
        \label{fig:harrington-fit-6-percent}
    \end{subfigure}
    \caption{Estimating a Pauli error threshold at $p_\text{defect} = 6\%$. (a) Rescaled data $x = (p-p_0) d^{1/\nu_0}$ is plotted against average logical error probability, along with a quadratic best fit. Error bars indicate standard error of the mean for the 1,000 $p_L$ values sampled; the errors associated with each $p_L$ estimate, taken from 100,000 quantum memory experiment samples, is comparatively negligible. (b) Average logical error probability for various Pauli noise parameters, with estimated Pauli threshold (and error) in red.}
    \label{fig:harrington-fit}
\end{figure}

\begin{figure}
    \centering
        \includegraphics[width=0.6\linewidth]{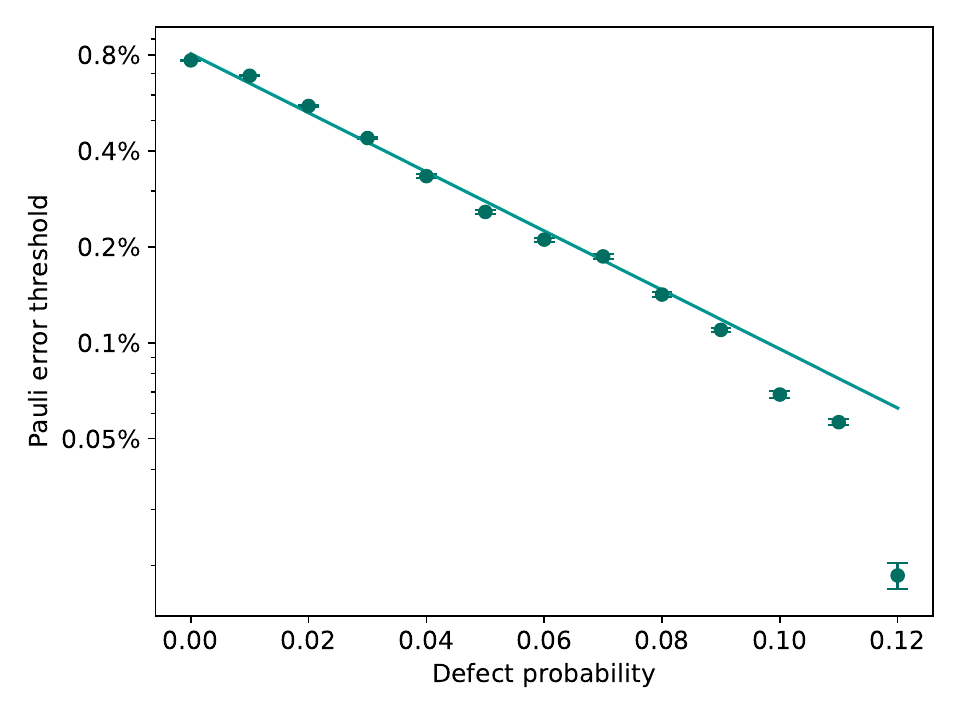}
    \caption{\label{fig:pauli-thresholds} Pauli error thresholds for varying qubit defect probabilities. The points are fitted to the exponential function $A \exp(B p_{\textrm{defect}})$, where $p_{\textrm{defect}}$ is the defect probability, $A=(8.1\pm 0.2) \times 10^{-3}$ and $B=-21.4\pm 0.9$. As can be seen, the accuracy of the exponential fit is significantly reduced close to the percolation threshold. At $p_\text{defect} \approx 13\%$, no Pauli error threshold exists.}
\end{figure}

The results are presented in \cref{fig:pauli-thresholds}. We fit the Pauli error threshold to an exponential function of $p_\textrm{defect}$, where $p_{\textrm{defect}}$ is the independent probability of a qubit being defective. This exponential function provides a reliable fit up to a defect probability of $9\%$. At $p_{\textrm{defect}} \geq 10\%$, all target distances simulated had an effective Pauli distance $\lesssim 1$ (see \cref{fig:effective-distance}), which is likely contributing to this drop-off in threshold. However, note that the exponential fit is guaranteed to be inaccurate around the percolation threshold, where the Pauli error threshold goes to zero exactly.  

The performance found is similar to that seen in the surface code, where Pauli thresholds have been shown to scale as $7\times 10^{-3} \exp(-22p_{\textrm{defect}})$~\cite{Auger2017FabricationSurfaceCode}. Note that the surface code simulations in Auger \etal \cite{Auger2017FabricationSurfaceCode} use a slightly different circuit-level noise model to that used here. The parameters we find for the exponential fit may change if one were to replace MPP gates with measurement circuits. It is notable, however, that for hardware with native MPP gates, such as Majorana Fermions \cite{Paetznick2023PerformancePlanarFloquetCodes, Karzig2017MajoranaZeroModes}, the code performance with finite $p_\text{defect}$ is very close to that of the surface code.

Despite the small average effective distances for $p_\textrm{defect} \geq 10\%$, we find clear evidence for sub-threshold behaviour at $p_\textrm{defect} = 10\%$ (see \Cref{app:additional_numerics}). The evidence for $p_\textrm{defect} > 10\%$ is less clear, although our fits, and the increase in effective distance with target distance for $p_\textrm{defect} > 10\%$ (\cref{fig:effective-distance}), indicate a positive threshold.

\section{Hardware-specific improvements}
\label{sec:hardware-improvements}

Throughout this paper, we have made minimal assumptions about the underlying hardware: the constraints we have for implementing our modified Floquet codes are the same as the constraints for implementing the original codes in a defect-free lattice.

In this section we will now discuss what extensions can be made when we know more about the quantum hardware. First, we propose an improvement that can be made if the Floquet code is being implemented on a square lattice---which is common in superconducting quantum computers \cite{Krinner2022SurfaceCodeSuperconducting, GoogleSuppressing, RigettiAnkaa}---by taking advantage of extra connectivity which is not used in the original code. And second, we propose an improvement that can be made on the heavy hex lattice used in IBM quantum processors \cite{IBMHeavyHex}, by treating defective auxiliary qubits as a defective connection rather than a defective data qubit. Further details on using additional connectivity and accommodating defective connections can be found in \Cref{app:extra_connectivity,app:defective-connections}, respectively.

\subsection{Accommodating defective qubits on a square lattice}
\label{ssec:square-grid-connectivity}

One key benefit of Floquet codes compared to other error-correcting codes is the low connectivity requirements: each data qubit only needs to be connected to three other data qubits, and each measurement involves only two data qubits. But what if the device does have extra connectivity? Many quantum devices, for instance, have connections based on a square lattice rather than a hexagonal lattice \cite{Krinner2022SurfaceCodeSuperconducting, GoogleSuppressing, RigettiAnkaa}.

It is possible to implement Floquet codes such as the honeycomb code on a square lattice, through only using a subset of available connections on the device. An example of this is given in \cref{fig:initial-defect-square-lattice}, with unused connections shown in white. We are also therefore able to accommodate defective qubits without these additional connections using the strategy described in \cref{sec:strategy}, as shown in \cref{fig:defect-removed-square-lattice}. As before, this techique involves reducing the shrink plaquettes to weight-2 and combining four merge plaquettes into one super-plaquette.

However, we can improve on this strategy by using these additional connections in the quantum hardware. In \cref{fig:defect-removed-square-lattice-extra-connections}, we instead reduce the shrink plaquettes to weight-4, and combine only two merge plaquettes together. This can be done by using two additional connections available on the square lattice. The result is a smaller super-plaquette, leading to a higher distance: the horizontal distance is reduced by one and the vertical distance is unchanged, compared to both being reduced by one when not using extra connectivity.

It is worth noting that this ability to take advantage of additional connectivity is not unique to the square lattice, and can in fact be used to better accommodate defects in a number of different hardware layouts. We provide further details on how to take advantage of extra connections in hardware in \Cref{app:extra_connectivity}.

\begin{figure}
    \centering
    \begin{subfigure}{0.3\linewidth}
        \includegraphics[width=\linewidth]{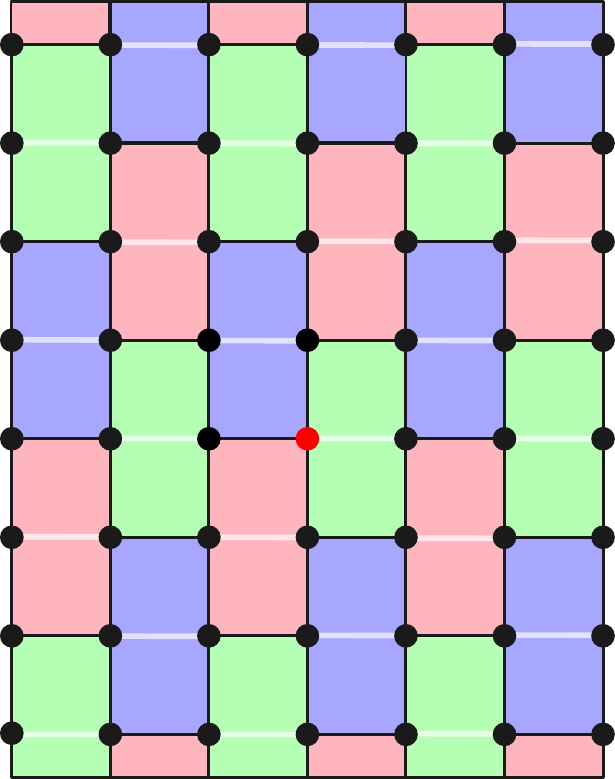}
        \caption{\label{fig:initial-defect-square-lattice}}
    \end{subfigure}\hfill
    \begin{subfigure}{0.3\linewidth}
        \includegraphics[width=\linewidth]{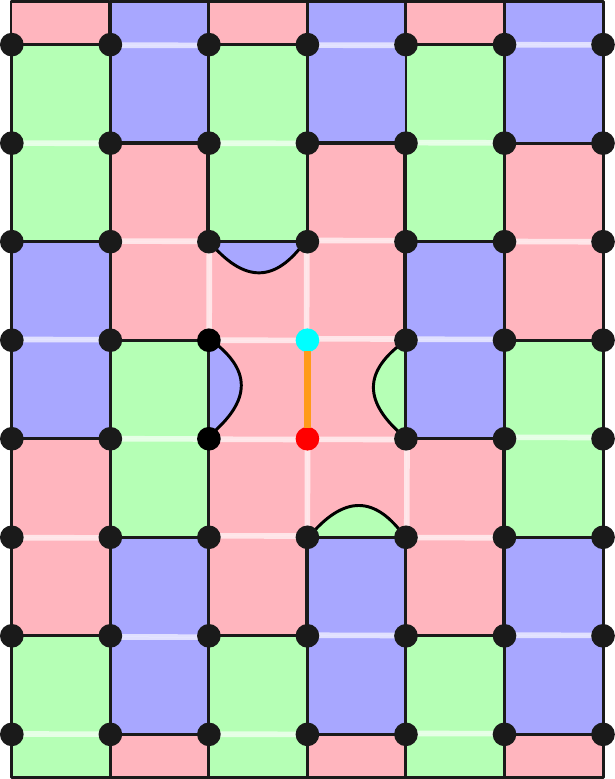}
        \caption{\label{fig:defect-removed-square-lattice}}
    \end{subfigure}\hfill
    \begin{subfigure}{0.3\linewidth}
        \includegraphics[width=\linewidth]{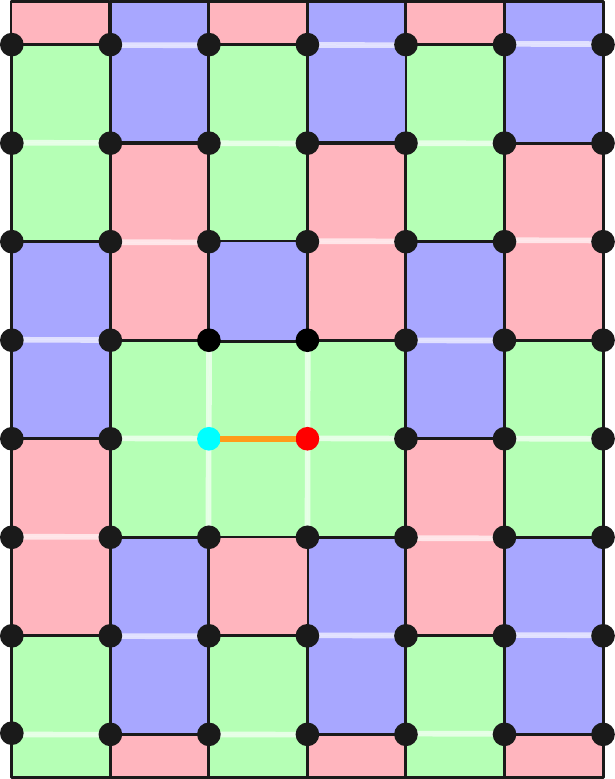}
        \caption{\label{fig:defect-removed-square-lattice-extra-connections}}
    \end{subfigure}
    \caption{\label{fig:square-lattice} Removing a defective qubit from a Floquet code implemented on a square lattice. (a) An implementation of the honeycomb code on a square lattice. White lines denote connections which are not used in the experiment. A defective qubit is shown in red. (b) Removing the defective qubit using the algorithm described in \cref{sec:strategy}. This method uses the same connections as used by the original honeycomb code. (c) Removing the defective qubit using two edges which are not used in the honeycomb code but are available in the square lattice to reduce the size of the super-plaquette.}
\end{figure}

\subsection{Accommodating defective qubits on a heavy hex lattice}
\label{ssec:defective-connections}

So far we have only considered defective data qubits on the honeycomb code. But what if a device has faulty connections rather than defective qubits? This is especially relevant for implementing Floquet codes on, for example, the heavy hex lattice used in IBM quantum processors \cite{IBMHeavyHex}. The heavy hex lattice is similar to the hexagonal lattice used for the honeycomb code, but with ``auxiliary qubits'' on the edges of the lattice. An example of a heavy hex lattice with the honeycomb code colouring is presented in \cref{fig:initial-defect-heavy-hex}. Two-body measurements between ``data qubits'' on the vertices can then be measured using an auxiliary syndrome extraction circuit \cite{Shor1996SyndromeExtraction}, thus allowing us to implement the honeycomb code on the heavy hex lattice \cite{Gidney2021faulttolerant, Gidney2022benchmarkingplanar}. Defective data qubits can be removed using the same strategy as in \cref{sec:strategy}, and therefore will not be the focus of this section.

If an auxiliary qubit is found to be defective, we can accommodate this by treating the neighbouring data qubits as defective and removing them using the algorithm in \cref{ssec:superplaquette-algorithm}. The result of this approach is presented in \cref{fig:defect-removed-heavy-hex-qubit-strategy}. This involves removing two data qubits and seven auxiliary qubits. We also split two plaquettes into weight-two plaquettes and merge four plaquettes into one super-plaquette.

Alternatively, we are able to treat the defective auxiliary qubit as a defective connection. In \Cref{app:defective-connections}, we present an approach for removing defective connections from Floquet codes while maintaining the same properties of tri-valence and 3-face-colourability. Our strategy works by selecting one plaquette incident to the defective connection to be the shrink plaquette, and the other to be a merge plaquette. Other plaquettes which are adjacent to the shrink plaquette and of the same colour as the merge plaquette are also selected to be merge plaquettes. With the shrink plaquette and merge plaquettes chosen, Steps 4-7 of the algorithm in \cref{ssec:superplaquette-algorithm} can then be followed, with $D$ being the basis of the defective edge. In \Cref{app:defective-connections}, we prove the following:

\begin{theorem}[for defective auxiliary qubits]
    The above process produces a lattice such that (a) each data qubit is incident to three auxiliary qubits, (b) each auxiliary qubit is incident to two data qubits, and (c) the lattice is 3-face-colourable.
    \label{thm:validity-defective-connections}
\end{theorem}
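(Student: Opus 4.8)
The plan is to reduce the statement to the corresponding structural claim about the underlying honeycomb-type graph and then verify it by a local bookkeeping argument mirroring the proof of \Cref{thm:validity-qubit-removal}. The starting point is that the heavy hex lattice is exactly the edge-subdivision of the honeycomb lattice: each auxiliary qubit is a degree-two vertex sitting at the midpoint of a honeycomb edge. Subdividing an edge changes neither the set of faces nor their adjacencies, so a heavy-hex-type lattice satisfies (a) and (b) precisely when it is the subdivision of a graph in which every data vertex has degree three, and it satisfies (c) precisely when that underlying graph is $3$-face-colourable. Hence it suffices to show that the defective-connection process, viewed as an operation on the underlying honeycomb graph $\Gamma$---delete the defective honeycomb edge and apply the analogue of Steps~4--7 of \cref{ssec:superplaquette-algorithm} with the prescribed shrink and merge plaquettes and with $D$ the basis of that edge---returns a tri-valent, $3$-face-colourable graph; re-subdividing then gives the lattice of the theorem.

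First I would fix the local picture. The defective honeycomb edge $e$ bounds exactly two faces: the chosen shrink plaquette $S$ (of some type $P$, with the basis of $e$ different from $P$) and the face designated as the first merge plaquette. Since $S$ is an even cycle whose edges alternate between the two bases different from $P$, the merge plaquettes---the neighbours of $S$ of the prescribed colour---are exactly the faces lying across the three edges of $S$ that carry the basis of $e$, one of which is $e$ itself. Steps~4--7 then (i) split $S$ into weight-two plaquettes built from the two non-defective such edges of $S$ together with the newly added edges collected in the set $C$, each new edge joining a pair of data qubits that are already adjacent in $\Gamma$; (ii) delete $e$; and (iii) fuse the merge plaquettes into a single super-plaquette by gluing along $C$. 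Exactly as in the discussion of the super-plaquette algorithm preceding \Cref{thm:validity-qubit-removal}, I would show that the edges of $C$ and the surviving boundary edges of the merge plaquettes form a single cycle whose bases alternate, so the super-plaquette is a genuine plaquette of the merge-plaquette colour, and that every surviving edge still bounds exactly two faces while every retained data qubit keeps degree three---an edge lost at a data vertex being compensated either by an incident edge of $C$ or by that vertex being removed from the lattice together with $e$.

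Given this, (a)--(c) follow by finite case analysis. Property (b) is immediate: each added edge joins data qubits already adjacent in $\Gamma$, hence corresponds to one new auxiliary qubit incident to exactly two data qubits, and the auxiliary qubit on $e$ becomes incident to none once $e$ is deleted, while all other auxiliary qubits are untouched. Property (a) is the degree-three count recorded above. For (c) I would colour the new weight-two plaquettes with the type of $S$ and the super-plaquette with the type of the merge plaquettes, leaving all other faces' colours unchanged, and then check that no two faces sharing an edge receive the same colour. The only edges whose two incident faces have changed lie in the modified region, so this reduces to checking the super-plaquette against each neighbour across its boundary edges---the surviving merge-plaquette edges and the edges of $C$---and each weight-two plaquette against its two neighbours; in each case the alternating-colour structure inherited from the original code forces the neighbours to differ.

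The step I expect to be the main obstacle is the well-definedness argument of the second paragraph: checking that the merge plaquettes together with $C$ really do close up into one face with an alternating two-colour boundary and that deleting $e$ leaves no edge on a single face and no retained data qubit with degree two. This is the same cyclic-structure analysis underlying \Cref{thm:validity-qubit-removal}, but it must be re-run for the asymmetric choice used here---one shrink plaquette rather than two---and it is where the precise combinatorics of which edges of $S$ and of the merge plaquettes survive versus are removed has to be tracked carefully.
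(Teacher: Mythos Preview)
Your overall reduction---pass to the underlying honeycomb graph, then check tri-valence and $3$-face-colourability locally---is exactly how the paper proceeds. The gap is that you have misidentified what the process actually does to the edges of the shrink plaquette $S$, and with that misidentification the tri-valence count fails.

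The key structural difference from the defective-qubit algorithm is that here the merge plaquettes are \emph{not} of type $D$. The defective edge $e$ has basis $D$ and bounds $S$ (type $P$) and a merge plaquette of the remaining type $T\notin\{D,P\}$; the other merge plaquettes, being of the same colour, also meet $S$ across its $D$-edges. So the edges of $S$ shared with merge plaquettes are the three $D$-edges (including $e$), not the $T$-edges. The paper's process therefore removes \emph{all} $D$-edges of $S$ and adds a new $D$-edge parallel to each of the three $T$-edges of $S$, producing \emph{three} weight-two plaquettes (each a $T$-edge together with its new $D$-edge). No data qubit is removed: every vertex of $S$ loses exactly one incident $D$-edge and gains exactly one new $D$-edge, which is precisely the tri-valence argument in the paper's proof.

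Your version---weight-two plaquettes built from the two non-defective $D$-edges of $S$ plus new edges in $C$---breaks both ways. At an endpoint $v$ of a non-defective $D$-edge, the original $T$-edge of $S$ through $v$ lies on an unmodified plaquette of the third colour and is retained, so $v$ ends up with four incident edges. At an endpoint of $e$ itself, the only lost edge is $e$ and there is no edge of $C$ there, so the degree drops to two; your suggestion that this vertex is ``removed from the lattice together with $e$'' is not part of the construction (the paper removes no data qubits). The super-plaquette boundary also fails to close up: the three $T$-edges of $S$ separate adjacent merge plaquettes, but neither they nor parallel new edges appear in your super-plaquette. Once you swap to the correct edge bookkeeping, the remainder of your plan goes through and matches the paper's proof.
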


By using this approach to remove just the defective connection rather than the data qubits incident to that connection, we are able to create the Floquet code shown in \cref{fig:defect-removed-heavy-hex-connection-strategy}. This approach removes no data qubits, and only removes two additional auxiliary qubits alongside the defective one. In \Cref{app:defective-connections}, we show that this is also optimal for uniform heavy lattices without requiring additional connectivity:

\begin{theorem}[for defective auxiliary qubits]
    Consider a defective auxiliary qubit in the bulk of a Floquet code defined on a heavy hex, heavy 4.8.8 or heavy 4.6.12 lattice (without boundary); i.e., a uniform tiling of the plane. Under the constraint that we cannot form an edge between any two qubits not connected by an edge in the original lattice, the algorithm described above removes the defective auxiliary qubit with the minimal number of removed auxiliary qubits from the original lattice.
    \label{thm:minimal-auxiliary-qubit-removal}
\end{theorem}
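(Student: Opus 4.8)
The plan is to prove \Cref{thm:minimal-auxiliary-qubit-removal} by the same two‑sided‑bound argument used for \Cref{thm:minimal-edge-removal}, adapted to the ``heavy'' setting. The first step is a translation: in a heavy lattice each hardware connection between two data qubits carries exactly one auxiliary qubit, so auxiliary qubits are in bijection with connections, and deleting an auxiliary qubit is exactly deleting a connection; moreover the ``new'' edges $(E,T)$ that Step~4 of \Cref{ssec:superplaquette-algorithm} introduces lie on pairs $E$ that already carried an edge $(E,D)$, so they re‑use an existing auxiliary qubit rather than demand a new connection. Under the hypothesis that no edge may be formed between qubits that were not already adjacent, the adapted code's set of connections is therefore obtained from the original one purely by deletions, together with possible relabelling of the Pauli type carried by the surviving connections. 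Consequently the number of deleted auxiliary qubits depends only on the underlying tri‑valent, $3$‑face‑colourable skeleton, and the theorem splits into two claims: (i) the algorithm of \Cref{app:defective-connections} deletes the defective connection plus exactly two further connections; and (ii) any adaptation whose skeleton again satisfies properties (a)--(c) of \Cref{thm:validity-defective-connections} while using only pre‑existing connections must delete at least two connections beyond the defective one.

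Claim (i) is a finite check. Validity of the output is already supplied by \Cref{thm:validity-defective-connections}, so only the bookkeeping remains: one runs the procedure on each of the three uniform skeletons (hexagonal, 4.8.8, 4.6.12), records which connections survive---those of the chosen shrink plaquette whose Pauli type is merely switched in Step~4, and those inherited by the new super‑plaquette in Step~6---and which do not. In all three cases exactly the defective connection together with two further connections are dropped, as in \cref{fig:defect-removed-heavy-hex-connection-strategy}; and since the procedure alters only a bounded neighbourhood of the defect, nothing outside that patch contributes.

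Claim (ii), which I expect to be the real work, I would establish locally around the defective connection $e=uv$. Deleting $e$ merges the two faces $F_1,F_2$ of the skeleton bordering it; in the original $3$‑face‑colouring these carry different colours, and their union is adjacent to faces of all three colours, so the merged map cannot be properly $3$‑face‑coloured by relabelling alone---the colouring obstruction forces some face to be split or absorbed, i.e.\ forces further connection deletions (or type changes). One then sets up the analogue of the balance used in \Cref{app:minimal-qubit-removal} to prove \Cref{thm:minimal-edge-removal}: as connections are deleted and types relabelled within a fixed‑radius ball around $e$, one tracks the number of data qubits failing property (a) and the faces failing to admit a proper colouring, and checks by a finite case analysis that removing $e$ plus only one additional connection can never clear both deficits simultaneously while keeping every auxiliary qubit incident to two data qubits. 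Because the perturbation caused by the defect is confined to that ball and the remainder of the skeleton is already a valid heavy lattice, any deletions made elsewhere can only help a region that needs no help, so a globally cheaper solution than the local optimum cannot exist; combining with (i) yields the stated minimality.

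The main obstacle is the finite case analysis in (ii): one must rule out a re‑routing that pays only one extra connection by exploiting slack somewhere in the plaquette structure, not necessarily next to $e$. I would dispose of this exactly as in the proof of \Cref{thm:minimal-edge-removal}: reduce to skeletons that agree with the uniform tiling outside a small ball about $e$, enumerate the finitely many ways to restore properties (a)--(c) inside the ball using only connection deletions and Pauli‑type relabellings, and verify each costs at least two extra deletions. The enumeration is the same for heavy hex, heavy 4.8.8 and heavy 4.6.12, because the relevant balls around an edge are combinatorially identical in the three uniform tilings.
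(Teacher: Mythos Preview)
Your reduction from auxiliary qubits to connections is correct and matches the paper exactly. However, the argument that follows contains two concrete errors and misses the structural lemma the paper actually uses.

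First, claim~(i) is false as stated. In the 4.8.8 skeleton, a defective edge may border a weight-4 plaquette (in which case choosing it as the shrink plaquette removes only \emph{one} additional connection) or may border two weight-8 plaquettes (in which case the algorithm removes \emph{three} additional connections). In the 4.6.12 skeleton the count is one or two depending on whether a weight-4 or weight-6 plaquette is adjacent. The honeycomb value of two is not universal, so your uniform ``defective connection plus exactly two further connections'' bookkeeping cannot hold. Relatedly, your final assertion that ``the relevant balls around an edge are combinatorially identical in the three uniform tilings'' is simply wrong: the local neighbourhood of an edge in 4.8.8 depends on which faces border it, and differs again from both the hexagonal and 4.6.12 cases. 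Any case analysis built on that premise collapses.

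Second, your lower-bound sketch in~(ii) is both vaguer and weaker than what is needed. The colouring obstruction you describe (merging $F_1,F_2$ produces a face adjacent to all three colours) does force \emph{some} further modification, but it does not by itself yield a numerical lower bound that matches the lattice-dependent counts above. The paper bypasses case analysis entirely with a structural argument: deleting the defective edge $e=v_1v_2$ leaves $v_1,v_2$ with no edge of colour~$D$, and restoring tri-valence and 3-face-colourability using only existing connections forces one to lay down an \emph{alternating path} between $v_1$ and $v_2$ (non-$D$, $D$, non-$D$, \ldots) avoiding~$e$, adding new $D$-edges along the non-$D$ segments and deleting the $D$-edges encountered. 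The number of additional deletions is then $(K-2)/2$ for a path through $K$ vertices, so minimality reduces to showing the shrink plaquette supplies a shortest such path---which is an easy per-lattice check on cycle lengths. This path characterisation is the missing idea; with it, both the correct counts and their optimality fall out simultaneously, and no separate enumeration is required.
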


This technique can also be used to accommodate a defective connection between a data qubit and an auxiliary qubit, by treating the auxiliary qubit incident to the connection as defective. It is interesting to compare this to the surface code, where defective connections are mapped to a defective data qubit rather than a defective auxiliary qubit \cite{Auger2017FabricationSurfaceCode}.

\begin{figure}
    \centering
    \begin{subfigure}{0.3\linewidth}
        \includegraphics[width=\linewidth]{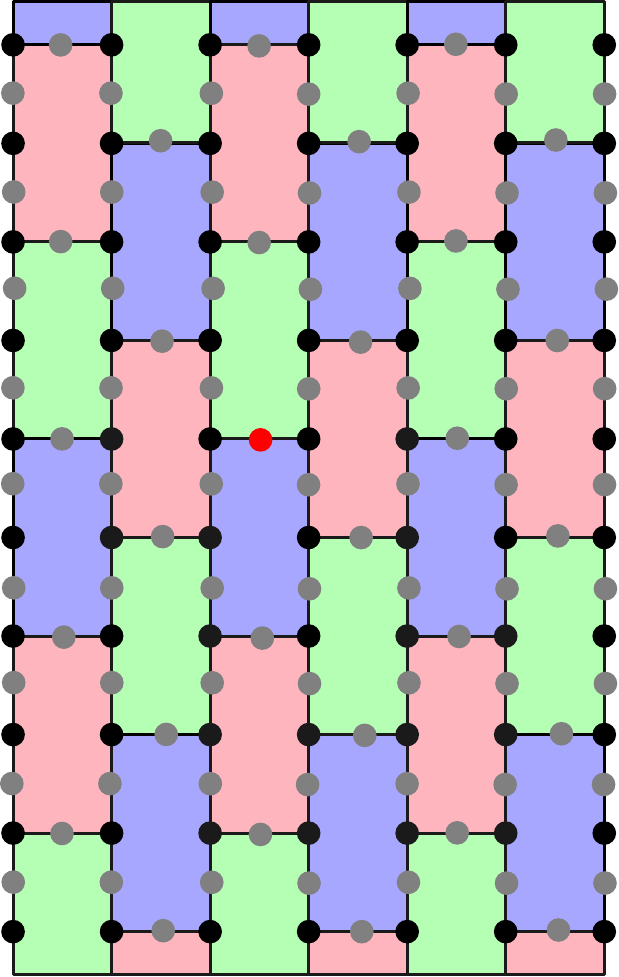}
        \caption{\label{fig:initial-defect-heavy-hex}}
    \end{subfigure}\hfill
    \begin{subfigure}{0.3\linewidth}
        \includegraphics[width=\linewidth]{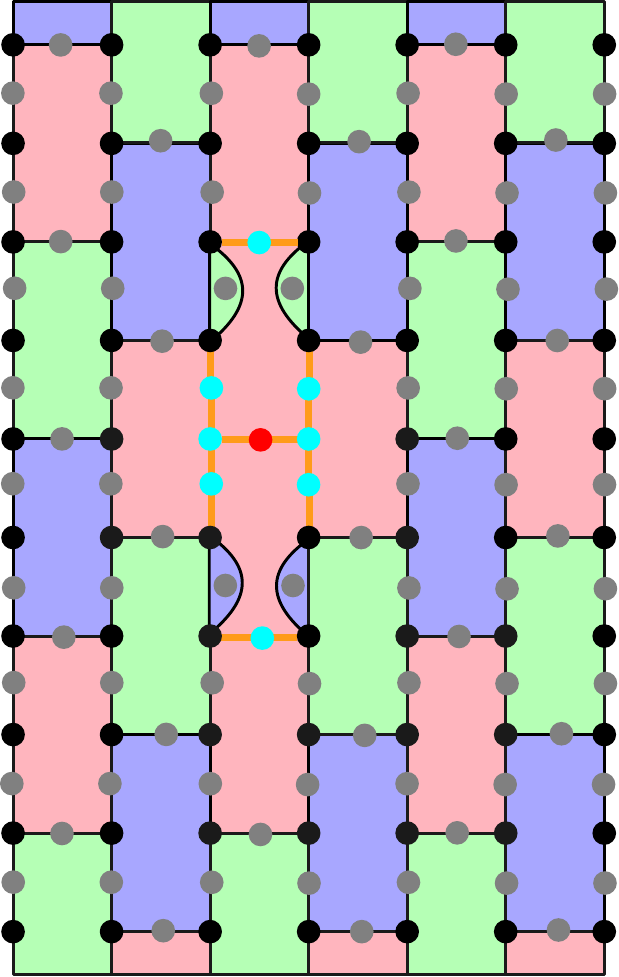}
        \caption{\label{fig:defect-removed-heavy-hex-qubit-strategy}}
    \end{subfigure}\hfill
    \begin{subfigure}{0.3\linewidth}
        \includegraphics[width=\linewidth]{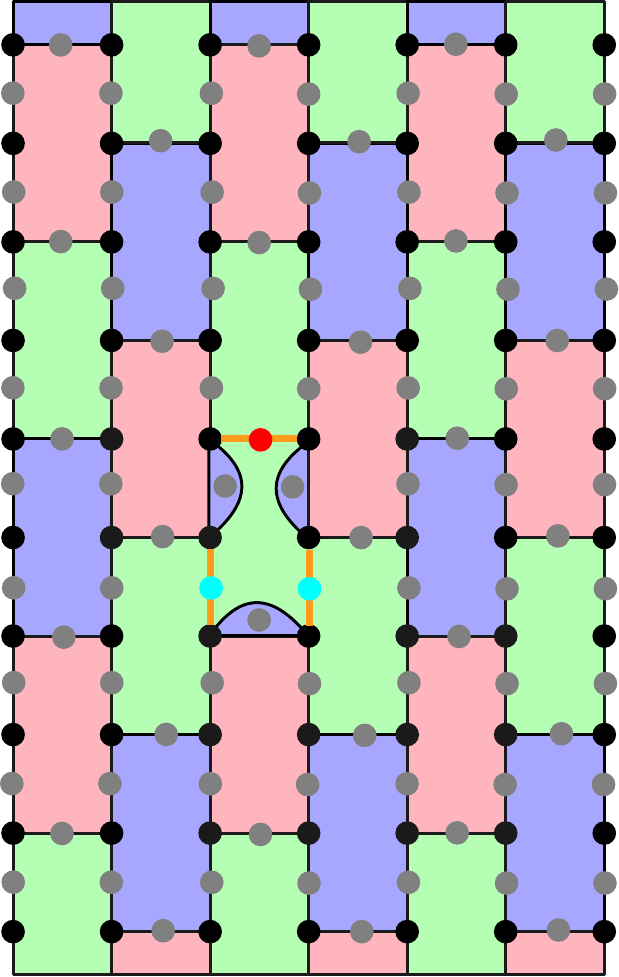}
        \caption{\label{fig:defect-removed-heavy-hex-connection-strategy}}
    \end{subfigure}
    \caption{\label{heavy-hex} Removing a defective auxiliary qubit from a Floquet code implemented on a heavy hex lattice. (a) An implementation of the honeycomb code on the heavy hex lattice. Auxiliary qubits on the edges of the hexagonal lattice are shown in grey, with one defective auxiliary qubit shown in red. (b) Removing the defective auxiliary qubit by using the algorithm described in \cref{ssec:superplaquette-algorithm} to remove the two adjacent data qubits. (c) Removing the defective auxiliary qubit by treating it as a defective connection and following the approach described in \Cref{app:defective-connections}. Removed qubits alongside the qubit are shown in blue, with their connections shown in orange. For weight-2 plaquettes, the auxiliary qubits are placed in the middle of the plaquettes to indicate that the same auxiliary qubit can be used to measure both edges; this requires no modifications to the heavy hex lattice.}
\end{figure}

\section{Conclusion}
\label{sec:conclusion}

In this work, we have provided new methods for accommodating fabrication defects on the honeycomb code. In doing so, we do not require any additional connectivity beyond what is required for the original code, nor do we require any modifications to the measurement schedule. We have also shown that our method is optimal in terms of the number of qubits, stabilisers, and connections removed from the lattice. Through simulations, we have shown the code to be scalable up to a defect probability of $13\%$. Finally, we have shown that further improvements can be made depending on the underlying hardware, by taking advantage of extra connectivity or defective connections.

Our work is generalisable beyond the honeycomb code, owing to the adapted lattices maintaining tri-valence and 3-face-colourability and imposing no additional requirements on the measurement schedule of a Floquet code implemented. Thus, any Floquet code which can be implemented on such a lattice can be adapted for implementation on a lattice with defects removed (see \Cref{app:alternative-codes}).
However, moving beyond these codes considered would require adapting our method. Examples of Floquet codes outside of this class are the 2-D ``Floquet colour code'' of Ref.~\cite{Dua_2024_rewinding}, all 3-D Floquet codes \cite{Davydova2023CSSFloquet, Zhang2023Floquet3D,Dua_2024_rewinding} and Floquet codes with twist defects \cite{Ellison2023FloquetTwist}. There are also other static codes which can be defined on a tri-valent, 3-face-colourable graph, with the best-known example being the colour code \cite{Bombin2006ColourCode, Gidney2023ColourCode}. It is likely possible to adapt our method to such static codes.

We assumed the need to create a fixed lattice with key properties on which the updated Floquet code is defined. There may be improvements to be obtained by relaxing this assumption~\cite{aasen2023faulttolerant}. In particular, the adaptation of surface code techniques~\cite{Nagayama2017DefectiveLattice,Strikis2023ScalableQC,Siegel2023adaptivesurfacecode,debroy2024lucisurfacecodedropouts} to Floquet codes may prove possible. However, there are difficulties associated with this. The equivalence between a Floquet code at a single time step and the surface code does not allow one to easily determine how to form detectors, or how to define the measurement schedule to allow for boundaries and a low logical error rate. There are significant challenges to adapting surface code techniques that our strategy sidesteps, allowing for easy application to a range of Floquet codes. 

While we have discussed how to implement this strategy when there are hardware defects known at the start of the experiment, one area that needs further work is accommodating defects which occur in the middle of an experiment, such as via high-energy radiation, qubit loss or erasure noise. Qubit loss in the middle of the experiment is common in both photonic and neutral atom systems \cite{Barrett2010MBQCErasure, Whiteside2014Loss, Auger2018FusionFailures, Sahay2023BiasedErasure}. Different proposals for correcting erasure events (which can also model qubit loss) in Floquet codes have been studied, through either modifying the stabilisers around the erased qubits \cite{Paesani2023CSSFloquetFoliate}, or through introducing additional Pauli error mechanisms \cite{Gu2023ErasureHoneycomb}. A comparison of these methods to our code modifications around defects would be useful for Floquet code implementation in hardware. Aasen \etal also have proposals for identifying defective qubits mid-experiment \cite{aasen2023faulttolerant}. We believe our work can also be adapted to accommodate defects which occur mid-experiment, but leave this for future work.

There are also other more damaging noise models one can consider. Two such models are leakage, where a qubit enters a state outside of the computational subspace \cite{Brown2019Leakage}, and high-energy events where large numbers of qubits can be impacted at the same time \cite{McEwen2022CosmicRays}. Both of these forms of error can be accommodated through a combination of identifying the error locations and adapting the codes accordingly, for which our method could be applied \cite{Miao2023OvercomingLeakage, Siegel2023adaptivesurfacecode}. Leakage can also be tackled using dedicated operations known as leakage reduction units~\cite{Aliferis2005Leakage,Fowler2013Leakage}. Future work could adapt these techniques to correct for leakage in Floquet codes.

\section*{Acknowledgements}

We thank Matthew P.\ Stafford, whose helpful explanations of how fabrication defects are handled on the surface code inspired this work. We also thank Setiawan for helpful feedback on an earlier version of this manuscript. We thank Benjamin B\'{e}ri, Dan Browne and our colleagues at Riverlane for insightful discussions. Finally, we thank the reviewers and editor at Quantum for helpful comments during the review process.

\bibliography{references}

\appendix

\section{Improvements with extra connectivity}\label{app:extra_connectivity}

We now demonstrate the improvements that are possible given extra connectivity in the quantum processor: that is, given the ability to measure check operators between pairs of qubits not connected in the honeycomb lattice.
We show how extra connectivity allows us to achieve optimal edge removal for a single defect.
This is interesting as the underlying quantum hardware on which one intends to perform a honeycomb code experiment, for example, may be a square grid \cite{GoogleSuppressing, Krinner2022SurfaceCodeSuperconducting, RigettiAnkaa}. 
\Cref{fig:defects_extra_con} shows how the size of super-plaquettes can be reduced when extra connectivity is usable. 
In \cref{fig:defects_extra_con_hor}, we shrink the green and blue hexagonal plaquettes to the weight-4 plaquettes shown and merge two red plaquettes into a weight-10 plaquette. This is possible with a square-grid-connected hardware.
Alternatively, as shown in \cref{fig:defects_extra_con_ver}, we can form the same plaquettes around the defect but choose a vertical defect edge.

\begin{figure}
    \centering
    \begin{subfigure}{0.45\linewidth}
        \includegraphics[width=\linewidth]{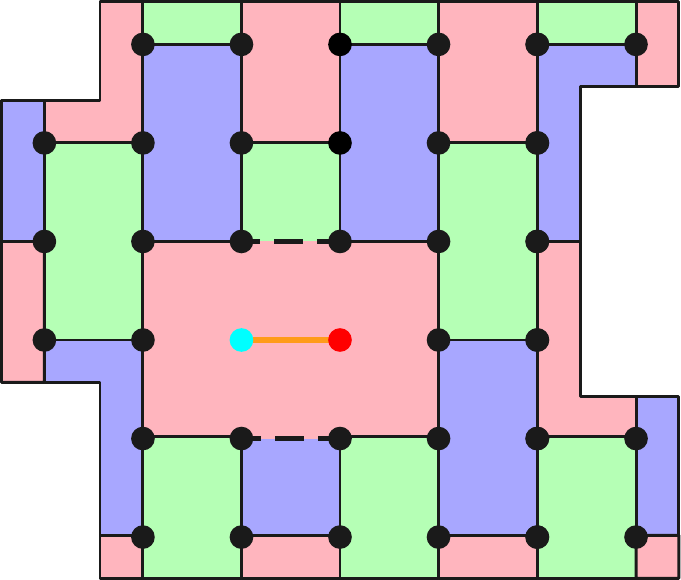}
        \caption{\label{fig:defects_extra_con_hor}}
    \end{subfigure}\hfill
    \begin{subfigure}{0.45\linewidth}
        \includegraphics[width=\linewidth]{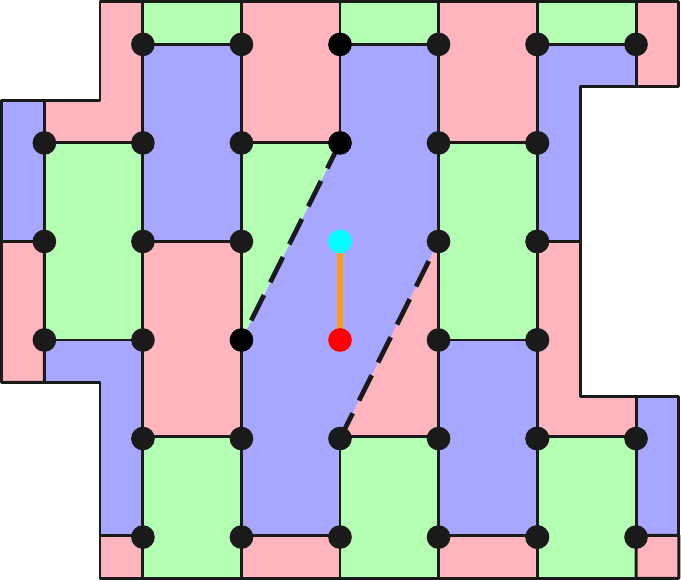}
        \caption{\label{fig:defects_extra_con_ver}}
    \end{subfigure}
    \caption{Super-plaquettes and weight-4 plaquettes for a single defect when extra connectivity (dashed lines) is present in the quantum hardware. (a) A horizontal defect edge is chosen. The resulting check operators are measurable for a square-grid connectivity. (b) A vertical defect edge is chosen.}
    \label{fig:defects_extra_con}
\end{figure}

With the above strategy, a single defect reduces \textit{either} the horizontal or vertical Pauli distance by one, while leaving the other unchanged.
Moreover, we can accommodate the defect while removing five connections from the code lattice, and adding two, resulting in a net loss of three connections. We show in \Cref{app:minimal-qubit-removal} that any strategy for removing a defective qubit must remove at least three edges, meaning that this approach is optimal in this regard.
Given these improvements, we therefore expect that the performance of Floquet codes with defective qubits, given this extra connectivity, would be improved: the percolation threshold would likely be higher, and the performance under Pauli noise likely better.

\section{Super-plaquette algorithm and minimal removal of qubits, plaquettes and edges}
\label{app:minimal-qubit-removal}

In this Appendix we prove certain features of the algorithm presented in \cref{ssec:superplaquette-algorithm}. We begin by commenting on why the algorithm preserves the important features of the lattice for defining the updated Floquet code: tri-valence and 3-face-colourability. We do this for the bulk of the code and note that the tri-valence and 3-face-colourability result follows for the boundary by cutting out the patch used, as explained in the main text. We also prove optimality, in terms of removed qubits, edges and plaquettes, in the bulk. There may be subtleties associated with optimality at the boundary (e.g., one could redefine the boundary around the defect), which we do not deal with here.

\begin{proof}[Proof of \Cref{thm:validity-qubit-removal}]
The first three steps of the algorithm simply involve selecting the defect edge, and the shrink and merge plaquettes. There will be exactly two shrink plaquettes\footnote{If $K_1$ (resp. $K_2$) is the weight of the first (resp. second) shrink plaquette, then the total number of merge plaquettes is $(K_1 + K_2)/2 - 2$. The first (resp. second) shrink plaquette is adjacent to $K_1/2$ (resp. $K_2/2$) merge plaquettes, with two merge plaquettes being incident to both shrink plaquettes.}.
Now notice that the edges in shrink plaquettes of type $T\neq D$ (where $D$ is the defect edge basis) do not feature in any plaquette after the algorithm is applied: they are removed from the lattice (see \cref{fig:example-defect}). However, every qubit incident to such an edge in the original lattice (apart from the two removed qubits) is incident to an \textit{additional} edge introduced in Step 4 of the algorithm. Hence, each vertex in the updated lattice remains incident to three edges as required.

Now, for 3-face-colourability, note that the two-body plaquettes are guaranteed to be adjacent to the super-plaquette, call it $S_1$, and one other plaquette from the original lattice, call it $S_2$. By 3-face-colourability of the original lattice, since a two-body plaquette inherits the colouring of the parent shrink plaquette, it is coloured differently to $S_1$ (which, by construction, is the same colour as a merge plaquette---the merge plaquettes are coloured differently to both shrink plaquettes) and to $S_2$ (which was also adjacent to the parent shrink plaquette). Meanwhile, the super-plaquette is adjacent to two-body plaquettes (which we have seen must be coloured differently to it) and to plaquettes of the original lattice adjacent to a merge plaquette, which we assume are coloured differently to it. Hence both types of newly added plaquette are adjacent to no other plaquettes of the same colour in the updated lattice. This is sufficient to conclude that the resulting lattice remains 3-face-colourable.

We will now show that the algorithm in \cref{ssec:superplaquette-algorithm} is optimal in terms of qubits, plaquettes and edges removed.
We assume (as above) that the lattice has periodic boundary conditions.
Since our scheme maintains tri-valence and 3-face-colourability, the number of removed plaquettes and edges are constrained by the number of removed qubits.
Tri-valence on the torus requires that $E = 3V/2$, where $E$ is the number of edges and $V$ is the number of vertices in the graph. 
This constraint is incompatible with an odd number of vertices, and hence we cannot remove just a single qubit.
Our algorithm is therefore optimal in terms of the number of removed qubits, removing as it does two qubits. The rest follows by necessity. 
If two qubits are removed, the number of vertices becomes $\Tilde{V} = V-2$, and tri-valence dictates that the number of edges becomes
$\Tilde{E} = 3\Tilde{V}/2 = 3V/2 - 3$. 
Removing more vertices results in more edges being removed: hence our algorithm is optimal in terms of the number of removed edges as well.

If the removal of the defect does not change the number of logical qubits then the Euler characteristic of the graph must not change. The Euler characteristic on a torus is $\chi = V - E + F = 0$, where $F$ is the number of plaquettes. Having removed two qubits and three edges, the Euler characteristic becomes

\begin{equation}
\begin{split}
    \Tilde{\chi} &= \Tilde{V} - \Tilde{E} + \Tilde{F} \\
    &= (V - 2) - (E - 3) + \Tilde{F} \\
    &= V - E + \Tilde{F} + 1,
\end{split}
\end{equation}

\noindent meaning that in order to maintain $\Tilde{\chi} = 0$, we must have $\Tilde{F} = F - 1$. Once again, removing more vertices requires the removal of more faces.
In summary, a scheme for removing a defective vertex from the lattice while maintaining 3-face-colourability and tri-valence must remove at least two vertices, three edges and one face. Therefore, the algorithm outlined in \cref{ssec:superplaquette-algorithm} is optimal in terms of qubit, edge and plaquette removal.
\end{proof}

In our scheme, we remove certain edges from the honeycomb lattice and add extra edges along existing connections in the hardware, meaning that we effectively remove some connections from the lattice (see white highlighted edges in \cref{fig:defect-removed}).
This is necessary if we do not rely on extra connectivity in the hardware (i.e., we do not have edges between qubits that are not connected in the honeycomb lattice).
Indeed, our scheme is also optimal in terms of the connections that are removed from the lattice, at least when considering a single defect removed from the honeycomb, 4.8.8 or 4.6.12 lattices (the tri-valent, 3-face-colourable lattices based on a uniform tiling of the plane by regular polygons).

\begin{figure}
    \centering
    \begin{subfigure}{0.4\linewidth}
        \includegraphics[width=\linewidth]{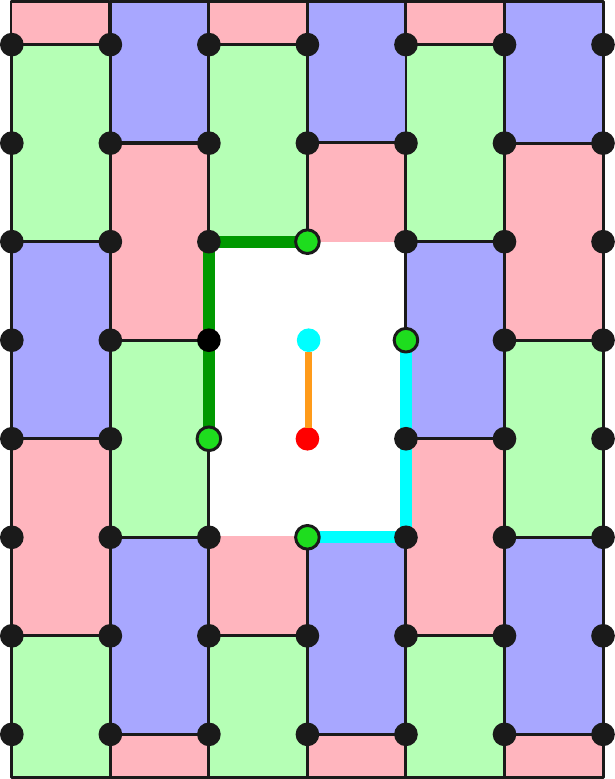}
        \caption{\label{fig:degree-two-vertices}}
    \end{subfigure}\hfill
    \begin{subfigure}{0.4\linewidth}
        \includegraphics[width=\linewidth]{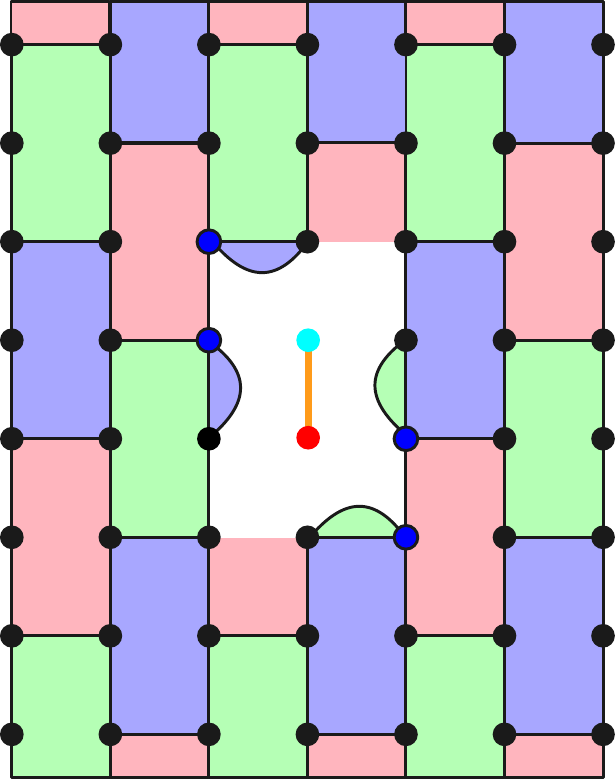}
        \caption{\label{fig:degree-four-vertices}}
    \end{subfigure}
    \caption{Vertices which violate the tri-valence requirements of Floquet codes when removing a defective qubit. (a) After removing edges incident to the defective qubit and its neighbour, four non-adjacent vertices are degree-2, thus requiring additional edges. These vertices are coloured green. These are connected in pairs by shortest paths in the lattice (coloured dark green and light blue respectively). (b) After adding more edges to correct the degree-2 vertices, four vertices in blue are degree-4, meaning that edges incident to these vertices must be removed.}
    \label{fig:vertex-degrees}
\end{figure}

\begin{proof}[Proof of \Cref{thm:minimal-edge-removal}]
From the above, we must remove two qubits per defect in the lattice. We first note that we must remove edges that are adjacent to these two removed qubits. An example is shown in \cref{fig:degree-two-vertices}. Note that we now have four vertices of degree two (coloured green in \cref{fig:degree-two-vertices}).
There are four of these because there are exactly five removed edges (one of which is the defect edge) and each pair of qubits is connected by at most one edge in the original lattice. 
Call the set of these degree-2 vertices $S$.
To maintain tri-valence after this edge removal, we must include extra edges incident to these vertices in $S$. We assume the edges in the original lattice, before the defect removal, account for all possible available connections in the device.
Hence, the extra edges we add must lie along existing connections in the lattice in order to avoid requiring additional connections in hardware.

In order to avoid introducing any vertices of degree-4 upon adding extra edges (see the blue vertices in \cref{fig:degree-four-vertices}), we need to remove some more edges from the graph. The way this must work is as follows. Firstly, without loss of generality, let the defect edge be coloured red. Then two of the vertices in $S$ were adjacent to a green removed edge and two were adjacent to a blue removed edge (see \cref{fig:degree-two-vertices}; the former lie on green plaquettes in the figure and the latter on blue plaquettes). Let these pairs of vertices belong to subsets $S_G$ and $S_B$ respectively. Notice that the vertices in $S_G$ ($S_B$) are not incident to any green (blue) edge (see \cref{fig:degree-two-vertices}). Thus we must add more green (blue) edges to the graph. 
To do so while maintaining tri-valence and 3-face-colourability (and to avoid self-loops in the graph), we must draw a path of edges between vertices in $S_G$ ($S_B$), where the path passes through non-green (non-blue) then green (blue) edges in an alternating fashion. We will add extra edges to the non-green (non-blue) parts of this path, while the green (blue) edges in this path will need to be removed in order to maintain tri-valence. 
For example, in \cref{fig:degree-two-vertices}, the dark green path connects vertices in $S_G$ and the light blue path connects vertices in $S_B$. The vertices in $S_G$ are not adjacent to a green edge, and hence we must add in green edges at these vertices, shown in \cref{fig:degree-four-vertices}. But this necessitates removing a green edge, and so on. The paths drawn in this figure are of length three: they each pass through two added edges and one removed edge.
Finding the shortest such path minimises the number of removed connections.

Now consider how the algorithm from \cref{ssec:superplaquette-algorithm} forms such a path: it constructs it from the edges around a shrink plaquette. 
The two vertices in $S_G$ $(S_B)$ will be separated by 3 edges that pass through the removed qubits.
Hence, the length of the path constructed by the algorithm will be $K - 3$ for a shrink plaquette of weight $K$. 
Hence the number of removed edges along the path will be $(K-4)/2$. 
In the honeycomb lattice, both shrink plaquettes will be weight 6, so the total number of additional removed edges (on top of the five adjacent to a removed qubit pair) will be 2.
In the 4.8.8 lattice we also achieve 2 additional removed edges by choosing a defect edge such that one shrink plaquette has weight $K_1 = 8$ and another has weight $K_2 = 4$ (see \cref{fig:488-green-merge,fig:488-blue-merge} in \Cref{app:alternative-codes}).
In the 4.6.12 lattice we can achieve only 1 additional removed edge ($K_1 = 6$, $K_2 = 4$).
In all cases it is obvious that there is no shorter path of edges (that does not pass through the removed qubits) that connects the vertices in $S_G$ and similarly for $S_B$. For example, in \cref{fig:degree-two-vertices}, a more optimal path would involve a direct connection between a pair of valence-2 vertices: no such edge can be formed. In the 4.8.8 case, a shorter path would imply a length-6 cycle of edges, which does not exist in the original lattice (assuming it is large enough that the non-trivial cycles are longer than six edges).
\end{proof}

In the above proof we have focused on uniform tilings on the plane, however this work can apply more generally to other surfaces as well. Below we remark how this optimality can also be shown for the 8.8.8 lattice, which is used in hyperbolic Floquet codes \cite{Higgott2023HyperbolicFloquetCodes, Fahimniya2023Hyperbolic}.

\begin{remark}
    Consider a defective qubit in the bulk of a Floquet code defined on an 8.8.8 lattice (without boundary), which is a particular uniform tiling of a hyperbolic surface. Under the constraint that we cannot form an edge between any two qubits not connected by an edge in the original lattice, the algorithm described in \cref{ssec:superplaquette-algorithm} removes the defective qubit with the minimal number of removed edges from the original lattice.
\end{remark}
\begin{proof}
This proof follows from the fact that the proof for \Cref{thm:minimal-edge-removal} does not rely on the lattice being on a plane. In the 8.8.8 lattice, the two shrink plaquettes will always be of weight-8, meaning that there will be 5 vertices in the paths connecting the vertices in $S_G$ and $S_B$, and a total of 4 edges are removed from the lattice (in addition to the five incident to the removed qubit pair). A more optimal path in the 8.8.8 lattice would require two degree-2 vertices being connected by a path shorter than length 5 (that does not pass through the removed qubits). This is not possible under the constraint that additional edges between non-connected vertices cannot be formed.
\end{proof}

Hence, our scheme is also optimal in terms of the connections that are removed from the lattice.
Note that we do not claim optimality for non-trivial combinations of defects, since then one must, for example, carefully choose the defect edges to achieve the optimal result. We do not have an optimal algorithm for such a choice, rather we use simplifying heursitics towards optimality (see \Cref{app:edge-selection}).
Moreover, it may be true that there exists a path between vertices in $S_G$ or $S_B$ that is shorter than that constructed from the corresponding shrink plaquette, if that shrink plaquette is very large-weight (see \cref{fig:defect-on-superplaquette-bad} in \Cref{app:edge-selection}).
We expect that finding optimality for non-trivial combinations of defects would not greatly improve code performance for small values of the defect probability.

It is worth noting that the method in Sec.~IIC of Ref.~\cite{aasen2023faulttolerant} does correct for a defective qubit on the honeycomb code while only removing one vertex, one edge, and no faces. This is achieved by violating the tri-valence requirement. The resulting Floquet code requires both extra connectivity and a more complex measurement schedule. This is an intriguing method, but the necessary modifications of the lattice and measurement schedule make its applicability to other Floquet codes, or to the planar honeycomb code, non-trivial. It is also unclear how to accommodate multiple defects using that method.

\section{Applicability to other Floquet codes}
\label{app:alternative-codes}

In this appendix, we discuss how the techniques described in \cref{sec:strategy} can be applied to other 2-D Floquet codes. Many 2-D Floquet codes can be defined on a colour code lattice, including the 4.8.8 Floquet code, the hyperbolic Floquet codes and the CSS Floquet code~\cite{Paetznick2023PerformancePlanarFloquetCodes, Higgott2023HyperbolicFloquetCodes, Fahimniya2023Hyperbolic,Davydova2023CSSFloquet,Kesselring2022AnyonCondensation}. These codes can be adapted to accommodate defective qubits by selecting a defect edge incident to the defective qubit, along with selecting the merge and shrink plaquettes in the same way as in Steps 2 and 3 as described in \cref{ssec:superplaquette-algorithm}, with the rest of the super-plaquette algorithm following naturally. An example of applying this approach to the 4.8.8 Floquet code is given in \cref{fig:488}. The only step that may change is Step 4, wherein we choose bases for the added edges. However, if one can make an identification of edge basis with edge colour at any given measurement sub-round (see, e.g., Ref.~\cite{Kesselring2022AnyonCondensation}), then this modification is simple. 
For example, in the CSS Floquet code, one measures operators along coloured edges in the cyclical pattern $R\rightarrow G \rightarrow B$ with the basis of the measurement operator alternating cyclically as $X \rightarrow Z$. Hence, if we use our algorithm to adapt this Floquet code, the measurement schedule remains unchanged since we can consistently colour the newly added edges in the lattice.

In cases where edge measurements are defined using an orientation rather than a colouring (see Ref.~\cite{Hastings2021dynamically}), one simply needs to define the orientation of the newly added lattice edges consistently such that the formed super-plaquette commutes with all edge measurements. This can involve edges of a mixed Pauli basis (e.g., $XY$ measurements), but the code adaptation in such cases is not complicated; just as in the original algorithm, there will only be one possible choice for these edge bases.
For the schedule of the modified code, we assume again that there is an identification between the edges measured in each sub-round and the colour of those edges (e.g., red $XX$ edge measurements performed in the first sub-round of the honeycomb code \cite{Gidney2021faulttolerant}). This ensures that qubits are only involved in one measurement during each sub-round. If this assumption holds, adapting the schedule to the new lattice is straight-forward.

\begin{figure}
    \centering
    \begin{subfigure}{0.3\linewidth}
        \includegraphics[width=\linewidth]{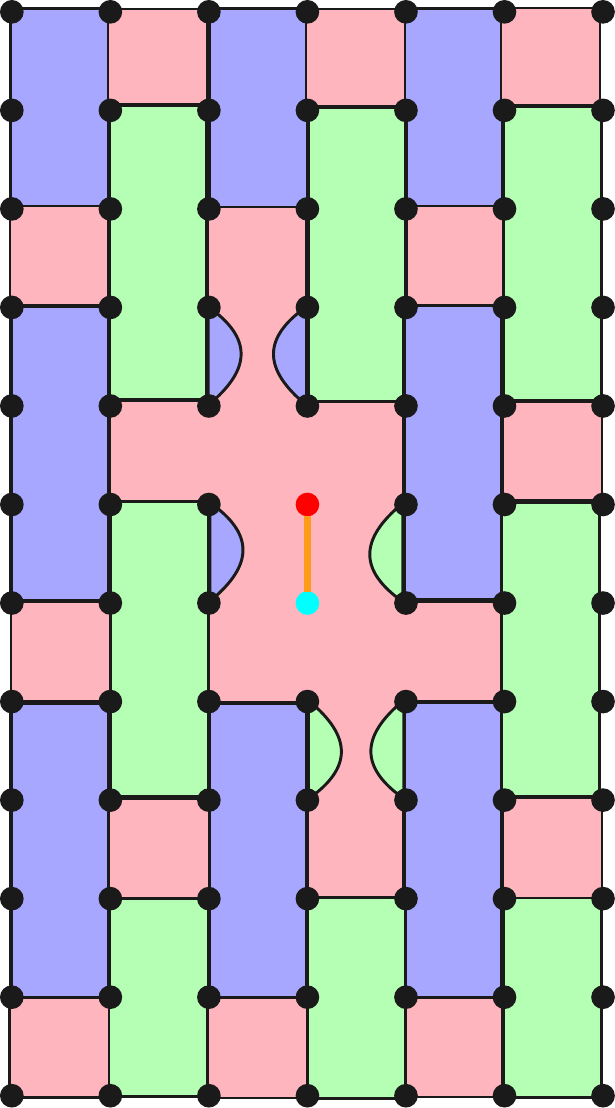}
        \caption{\label{fig:488-red-merge}}
    \end{subfigure}\hfill
    \begin{subfigure}{0.3\linewidth}
        \includegraphics[width=\linewidth]{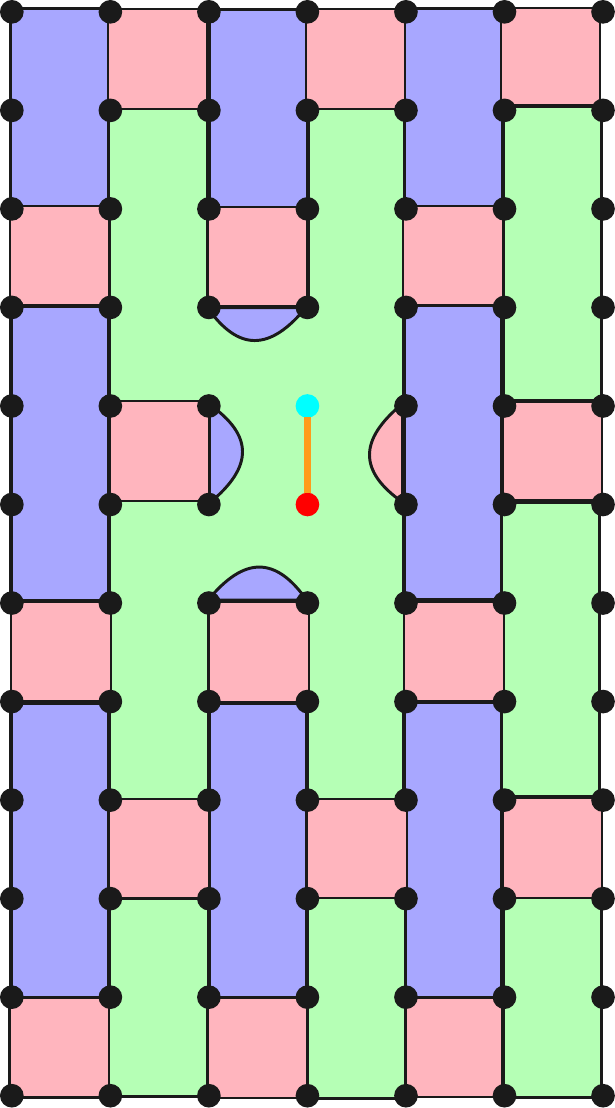}
        \caption{\label{fig:488-green-merge}}
    \end{subfigure}\hfill
    \begin{subfigure}{0.3\linewidth}
        \includegraphics[width=\linewidth]{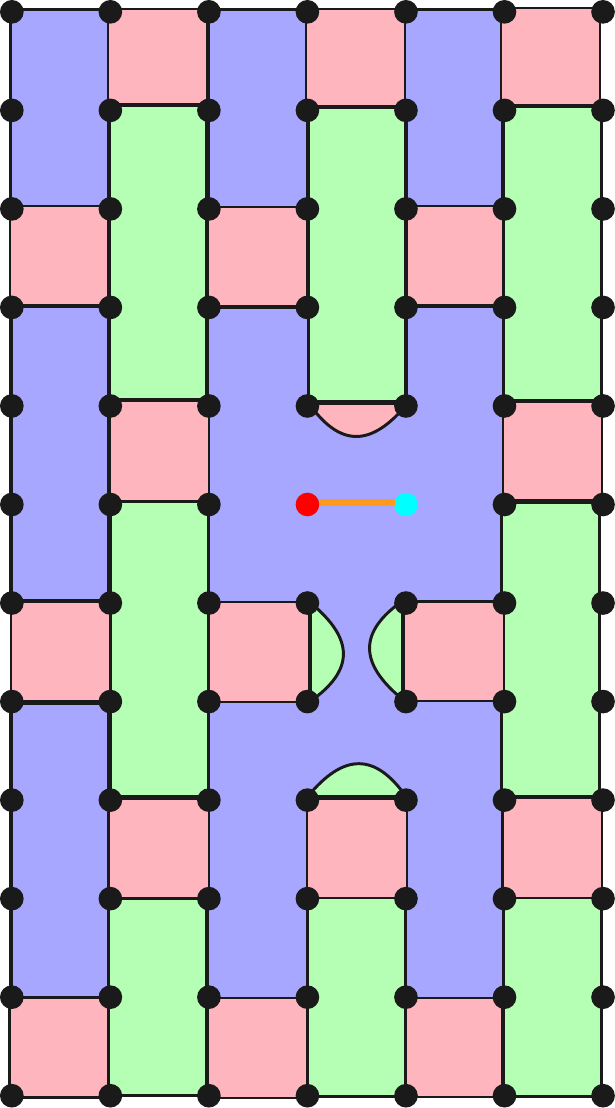}
        \caption{\label{fig:488-blue-merge}}
    \end{subfigure}
    \caption{\label{fig:488} Applying the algorithm in \cref{ssec:superplaquette-algorithm} to remove a (a) red, (b) green, and (c) blue defect edge incident to a defective qubit in the bulk of the 4.8.8 Floquet code.}
    
\end{figure}

Even if the bulk of the lattice does not change there are different ways of implementing the boundaries of Floquet codes. In this paper we have focused on the boundary method described by Gidney, Newman and McEwen \cite{Gidney2022benchmarkingplanar}. However, there is also an alternative method for implementing boundaries used in Refs.~\cite{Haah2022boundarieshoneycomb, Paetznick2023PerformancePlanarFloquetCodes}, by introducing additional weight-2 plaquettes. The strategy can be similarly employed to accommodate defects on these instances of the planar honeycomb code through the same process as before: embedding the patch in the bulk of a larger honeycomb code, removing the defects via the algorithm in \cref{ssec:superplaquette-algorithm}, and then re-introducing the boundaries to the code.

\section{Details on defect edge selection}
\label{app:edge-selection}

We will now provide further details on our choice of heuristics for the defect edge selection and why they are beneficial. As a reminder, our aim with these heuristics is to keep the super-plaquettes small and in the bulk of the code where possible.

The first and simplest heuristic is that if two defective qubits are adjacent to each other, it is worth choosing the defect edge to be the edge connecting them. This allows us to remove two defective qubits with the same overhead as just removing a single defect.

Our second heuristic concerns defective qubits which lie on already-formed super-plaquettes. An example of such a defect is provided in \cref{fig:defect-on-superplaquette-initial}. In \cref{fig:defect-on-superplaquette-good}, the red edge incident to the defective qubit is selected, meaning that the super-plaquette is selected as a merge plaquette. If instead the green edge incident to the defective qubit is selected, as in \cref{fig:defect-on-superplaquette-bad}, the original super-plaquette is selected as a shrink plaquette. The result is a green super-plaquette which is larger than the red super-plaquette formed in \cref{fig:defect-on-superplaquette-good}, as well as several pairs of functioning qubits which are now disconnected from the rest of the code. As a result, if a defective qubit is on an already-formed super-plaquette we choose the defect edge to be the same colour as said super-plaquette.

\begin{figure}
    \centering
    \begin{subfigure}{0.3\linewidth}
        \includegraphics[width=\linewidth]{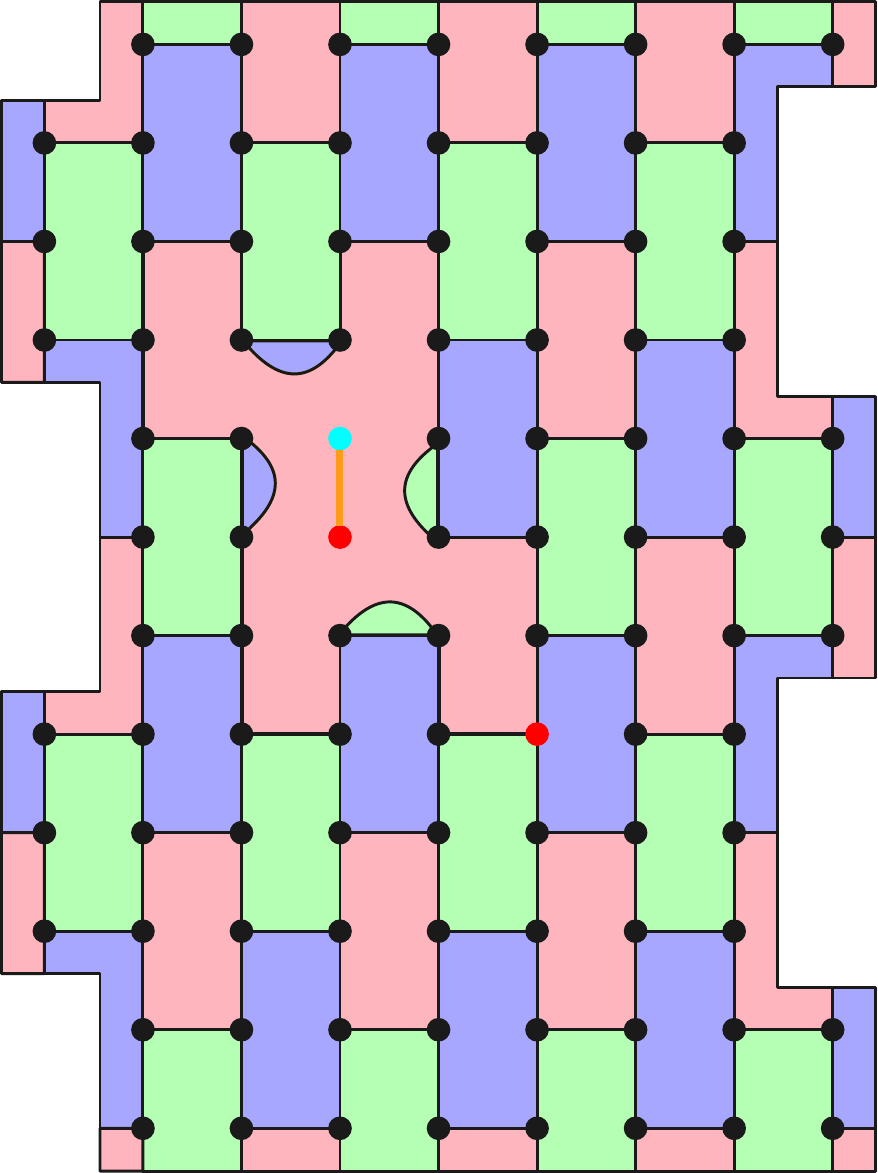}
        \caption{\label{fig:defect-on-superplaquette-initial}}
    \end{subfigure}\hfill
    \begin{subfigure}{0.3\linewidth}
        \includegraphics[width=\linewidth]{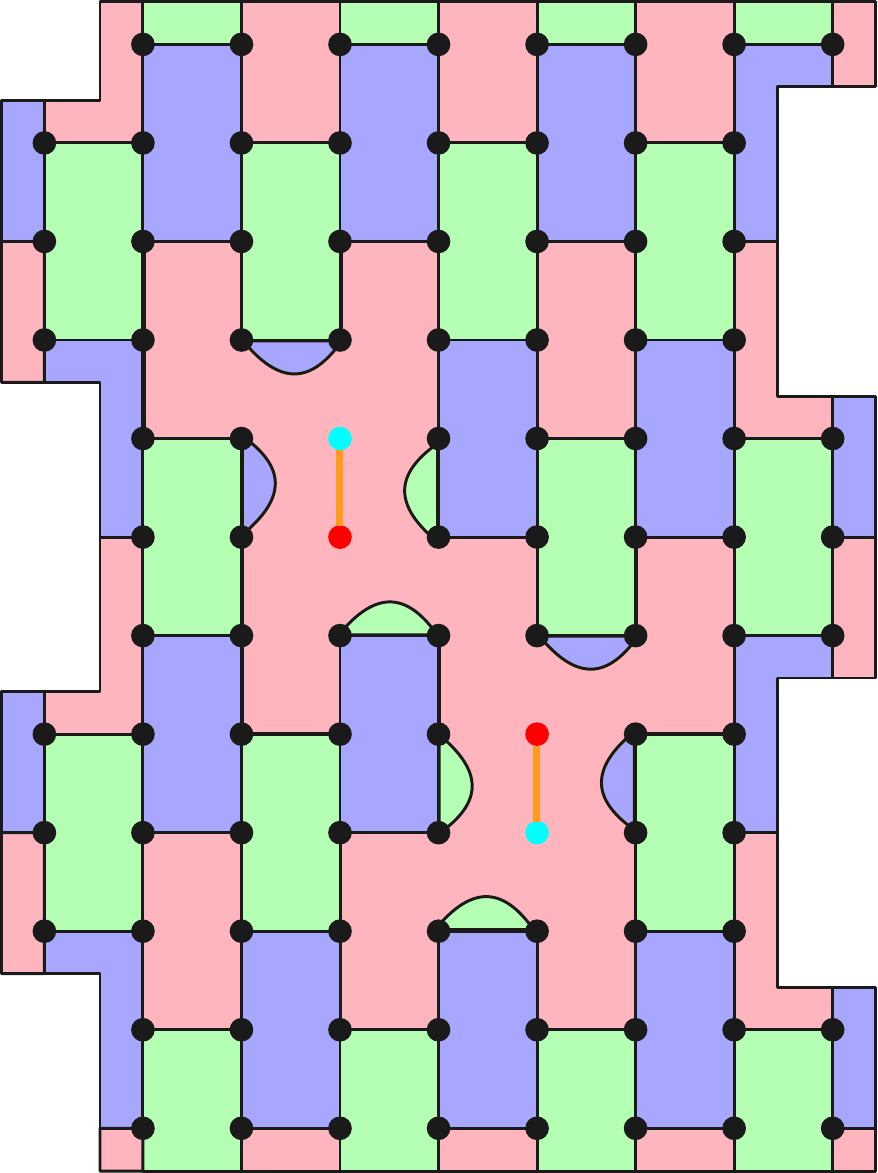}
        \caption{\label{fig:defect-on-superplaquette-good}}
    \end{subfigure}\hfill
    \begin{subfigure}{0.3\linewidth}
            \includegraphics[width=\linewidth]{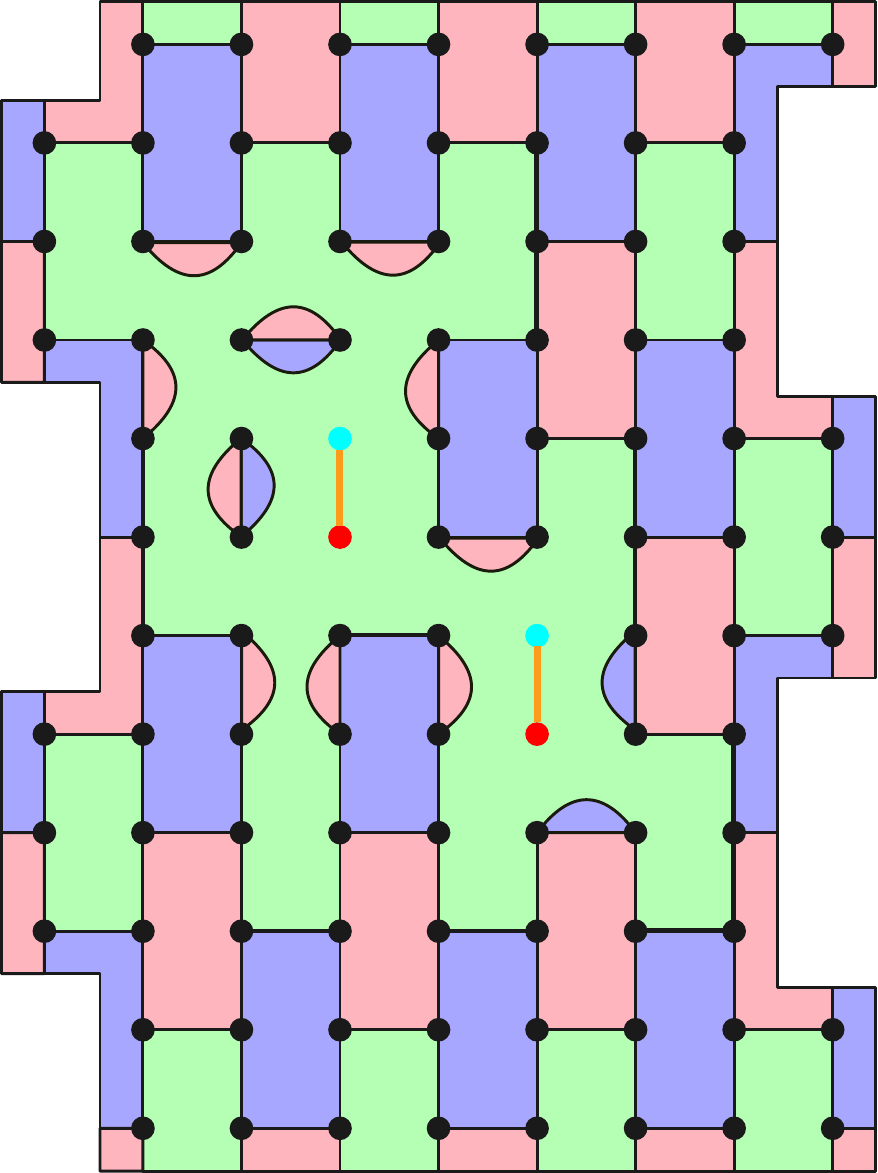}
        \caption{\label{fig:defect-on-superplaquette-bad}}
    \end{subfigure}
    \caption{\label{fig:defect-on-superplaquette} Removing a defective qubit which is on an already-formed super-plaquette. (a) The initial layout, with the left defective qubit already removed, and the right defective qubit on the resulting super-plaquette. (b) The defect edge is selected to be an edge that connects two red plaquettes, matching the colour of the already formed super-plaquette. (c) The defect edge is selected to be an edge that connects two green plaquettes, meaning that the previously formed red super-plaquette now needs to be shrunk.}
\end{figure}

Our next heuristic concerns defective qubits which lie near but not directly on already-formed super-plaquettes. In \cref{fig:defect-on-support-initial}, a defective qubit is directly below an already-formed super-plaquette. In \cref{fig:defect-on-support-good}, we have selected the defect edge to not be incident to the already-formed super-plaquette, resulting in two super-plaquettes being formed in blue and green. Compare with \cref{fig:defect-on-support-bad}, where the defect edge selected is incident to the already-formed super-plaquette. This results in the blue super-plaquette being selected as a merge plaquette, leading to a single larger super-plaquette. A small number of larger super-plaquettes will reduce the code distance faster than a large number of smaller super-plaquettes, hence we prioritise selecting edges which are not incident to already-formed super-plaquettes.

\begin{figure}
    \centering
    \begin{subfigure}{0.3\linewidth}
        \includegraphics[width=\linewidth]{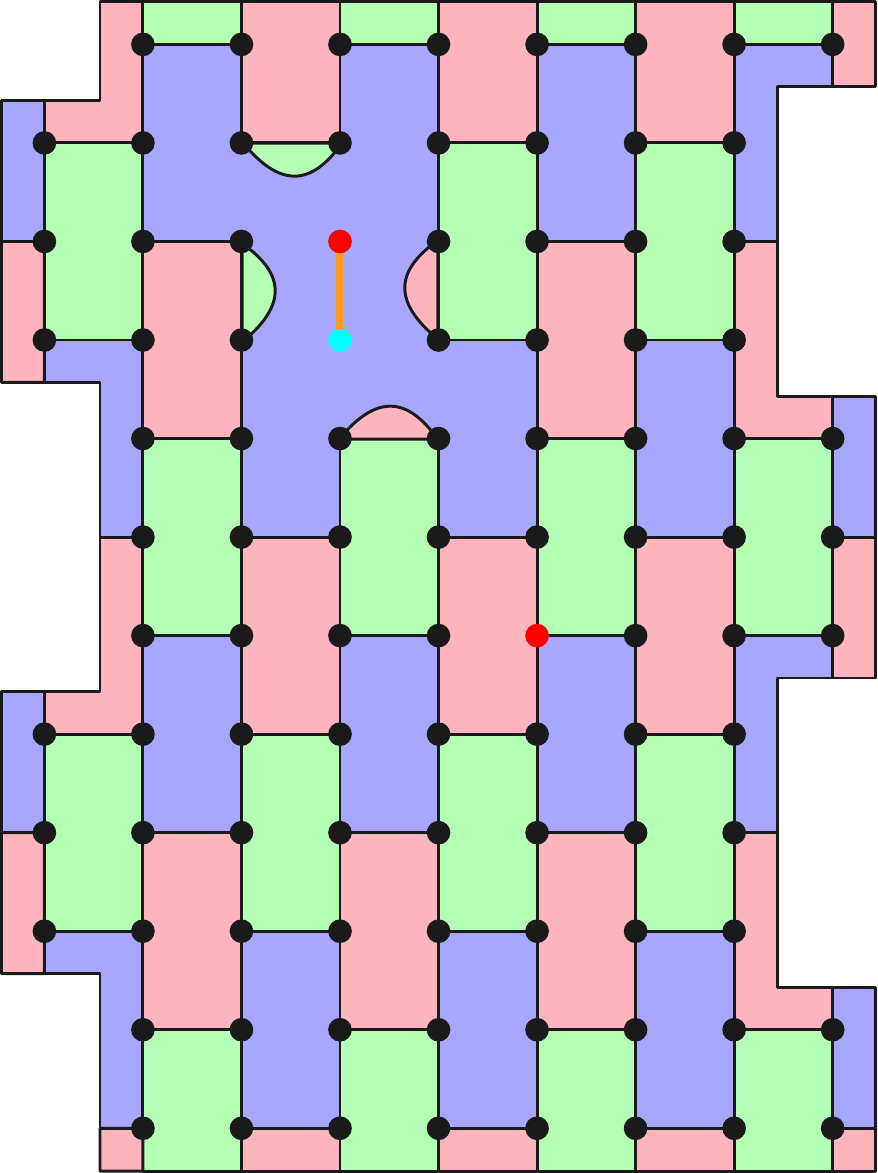}
        \caption{\label{fig:defect-on-support-initial}}
    \end{subfigure}\hfill
    \begin{subfigure}{0.3\linewidth}
        \includegraphics[width=\linewidth]{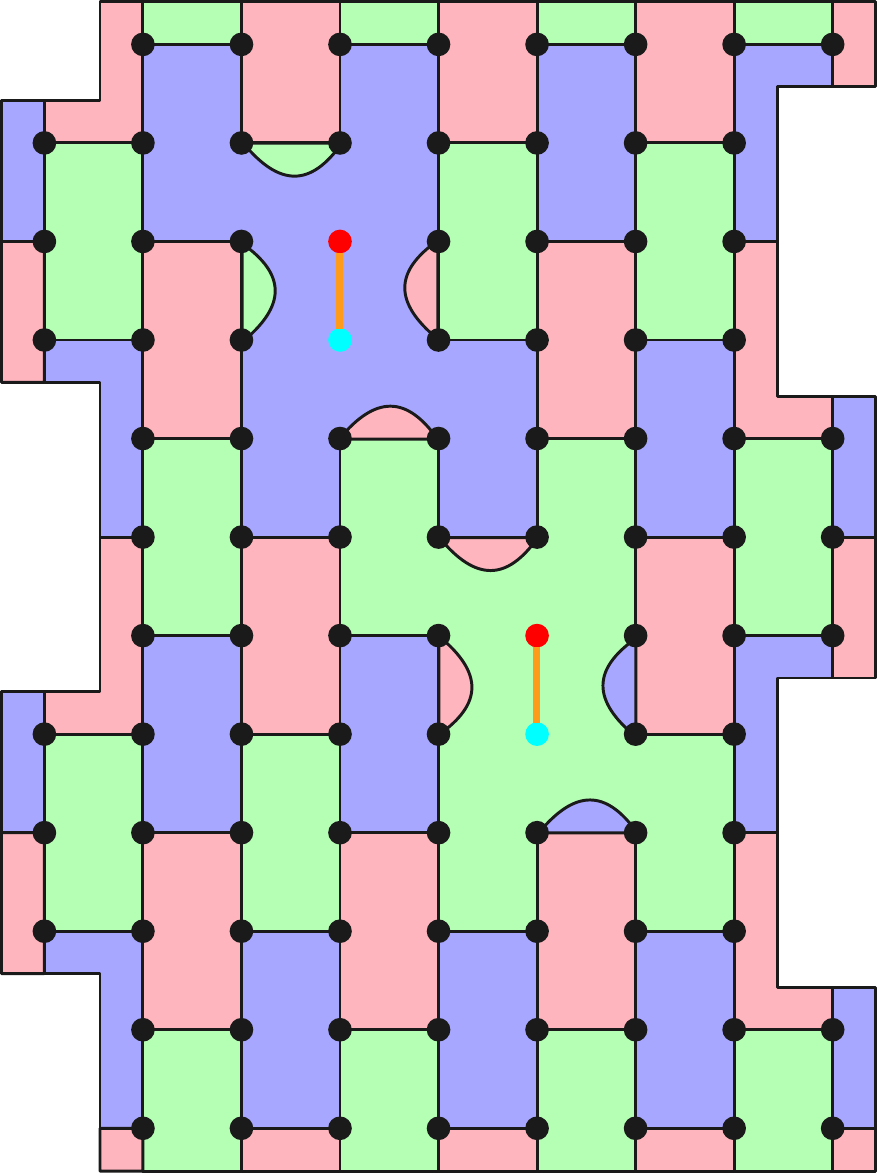}
        \caption{\label{fig:defect-on-support-good}}
    \end{subfigure}\hfill
    \begin{subfigure}{0.3\linewidth}
            \includegraphics[width=\linewidth]{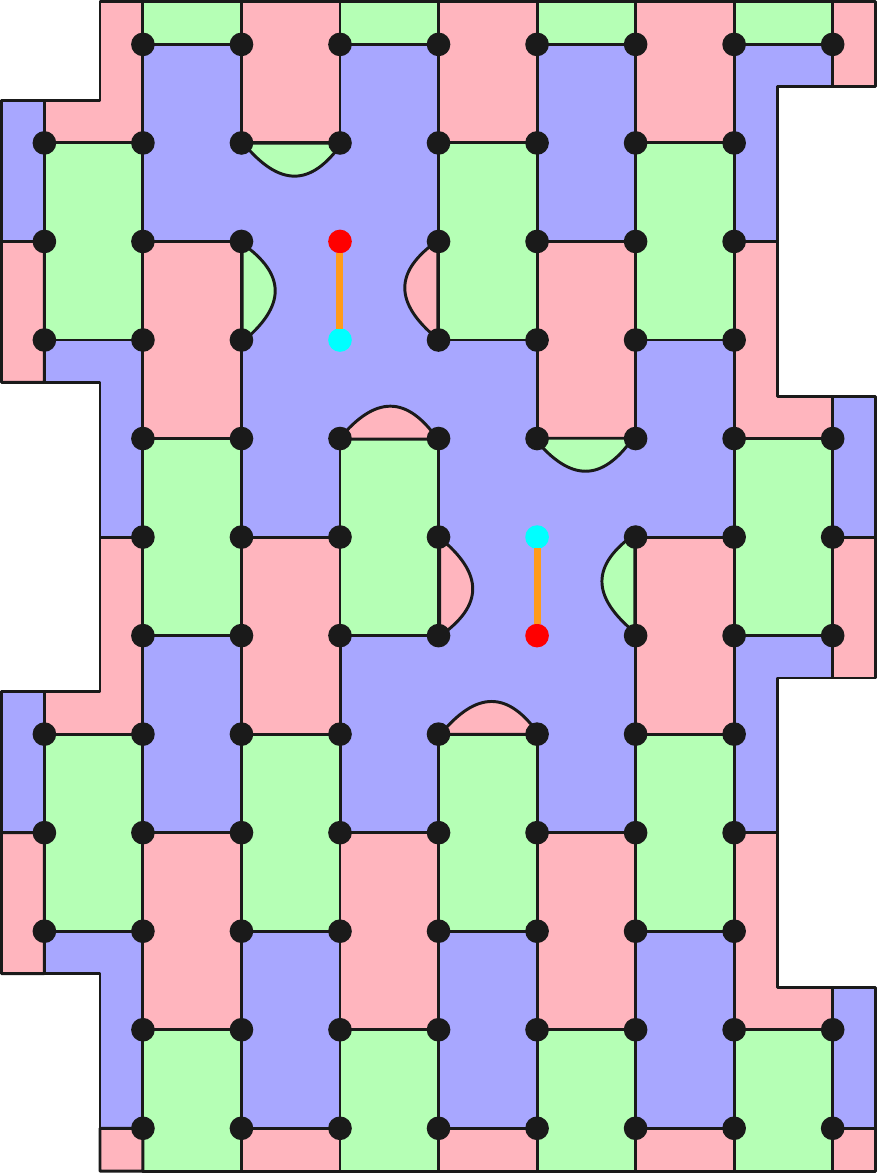}
        \caption{\label{fig:defect-on-support-bad}}
    \end{subfigure}
    \caption{\label{fig:defect-on-support} Removing a defective qubit near an already-formed super-plaquette. (a) The initial layout, with the top defective qubit already removed from the code. (b) A defect edge is selected which is not incident to the already-formed super-plaquette, leading to one blue super-plaquette and one green super-plaquette being formed. (c) A defect edge is selected which is incident to the already-formed super-plaquette, resulting in a single larger blue super-plaquette.}
\end{figure}

If none of the previous cases apply, we consider the orientation of the defect edge. \Cref{fig:edge-orientation} shows two examples of removing defective qubits by selecting an incident red edge as the defect edge. In \cref{fig:edge-orientation-vertical}, the red edge is a vertical edge, whereas in \cref{fig:edge-orientation-horizontal}, the red edge is a horizontal edge. Also shown in red are edges which denote the lowest-weight horizontal and vertical logical errors: if two-qubit $ZZ$ or $YY$ errors were to occur on these edges after measuring the red edges, the result could be a logical error (see \cref{fig:example-honeycomb} for the positions in the measurement schedule at which these errors would be logical). We can note that the horizontal Pauli distance is reduced by one in both cases. However, if a vertical edge is selected, as in \cref{fig:edge-orientation-vertical}, the vertical Pauli distance is only reduced by one. On the other hand, if a horizontal edge is selected, as in \cref{fig:edge-orientation-horizontal}, the vertical Pauli distance is reduced by two. This is because when a horizontal defect edge is selected the merge plaquettes align with the vertical logical operators, whereas if a vertical defect edge is selected the merge plaquettes lie diagonally across the code. As a result, we prefer selecting vertical edges over horizontal edges where possible.

\begin{figure}
    \centering
    \begin{subfigure}{0.3\linewidth}
        \includegraphics[width=\linewidth]{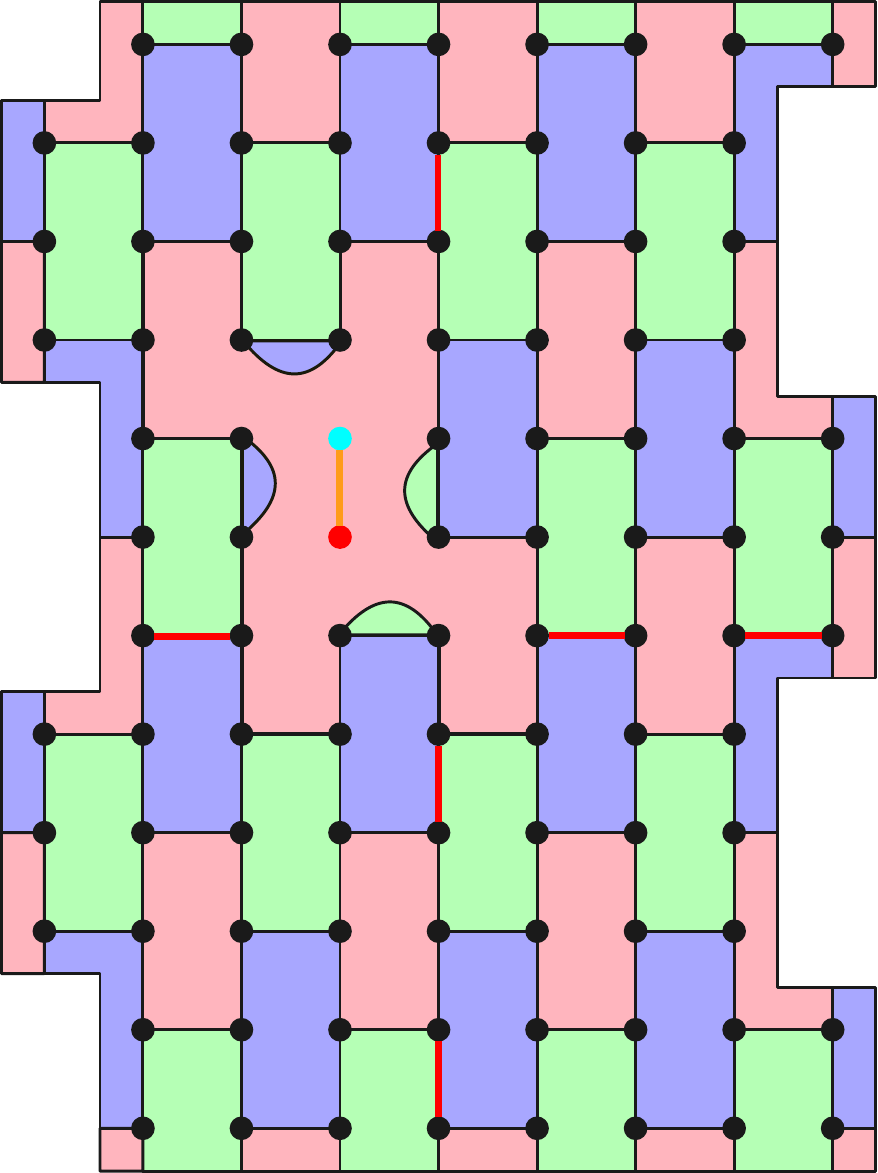}
        \caption{\label{fig:edge-orientation-vertical}}
    \end{subfigure}\hspace{20pt}
    \begin{subfigure}{0.3\linewidth}
            \includegraphics[width=\linewidth]{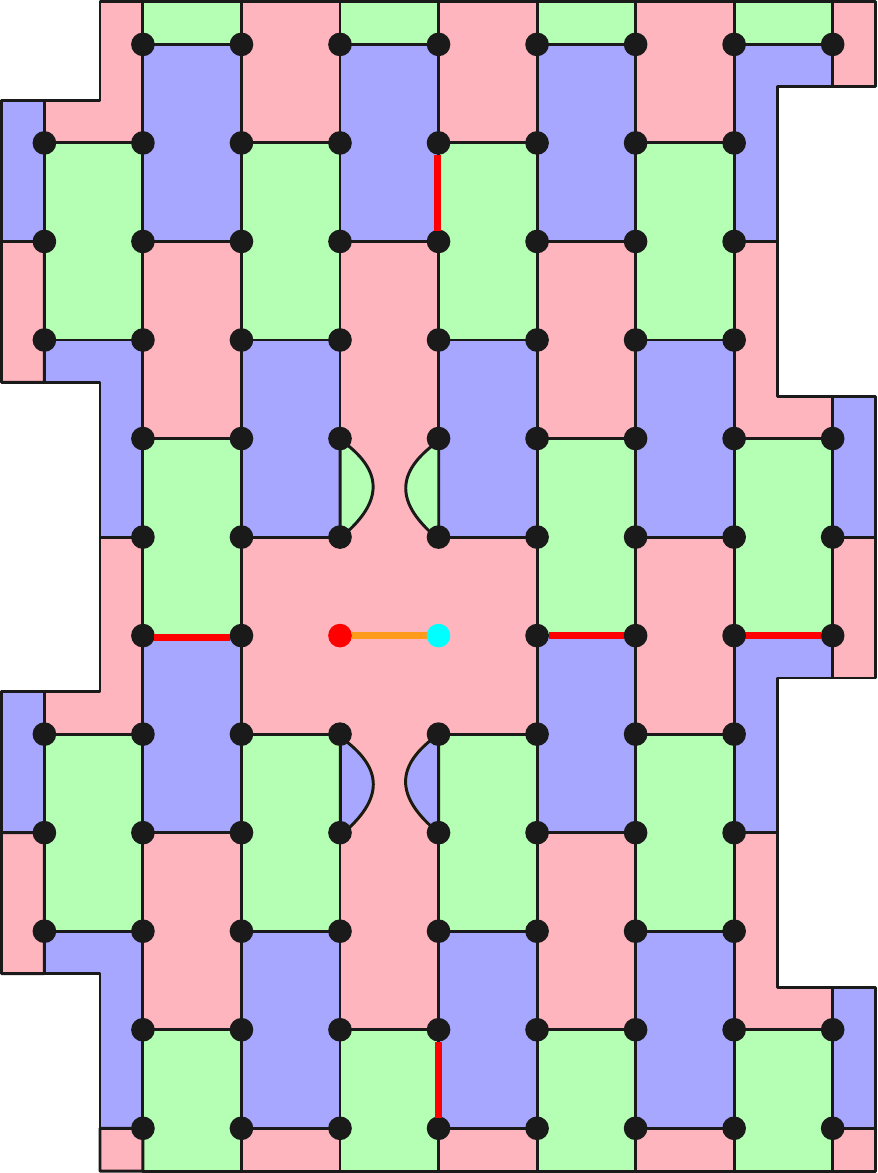}
        \caption{\label{fig:edge-orientation-horizontal}}
    \end{subfigure}
    \caption{\label{fig:edge-orientation} Super-plaquettes formed by selecting a (a) vertical or (b) horizontal defect edge. Red lines denote two-qubit Pauli errors which, depending on the basis of the error and the measurement sub-round, could cause a logical error.}
\end{figure}

Our final heuristic concerns defects which are close to boundaries. \Cref{fig:boundary-heuristic} illustrates two examples of removing a defect close to a horizontal boundary of the honeycomb code. In \cref{fig:boundary-heuristic-deterministic}, the defect edge is selected such that the resulting super-plaquette is in the bulk of the code. As this super-plaquette is in the bulk of the code, it commutes with the measurements in every round and is therefore deterministically measured twice every six sub-rounds. In comparison, \cref{fig:boundary-heuristic-nondeterministic} shows a defect edge selected such that the super-plaquette lies on the boundary of the code. The single-qubit measurements on the boundary anti-commute with the stabiliser formed by this plaquette, meaning that for this stabiliser is deterministically measured only once every six sub-rounds. To avoid having large non-deterministic stabiliser measurements, we prioritise selecting defect edges which are pointing away from the nearest code boundaries where possible.

\begin{figure}
    \centering
    \begin{subfigure}{0.3\linewidth}
        \includegraphics[width=\linewidth]{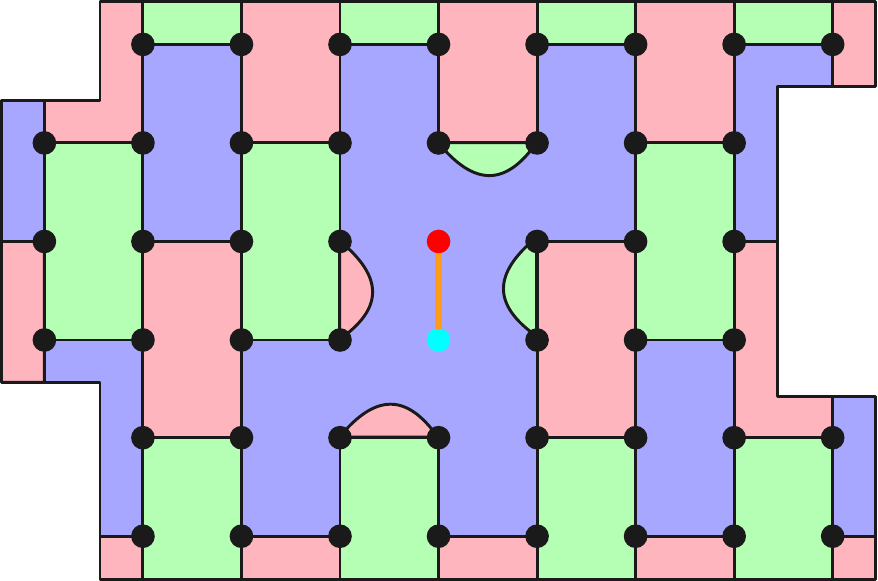}
        \caption{\label{fig:boundary-heuristic-deterministic}}
    \end{subfigure}
    \begin{subfigure}{0.3\linewidth}
            \includegraphics[width=\linewidth]{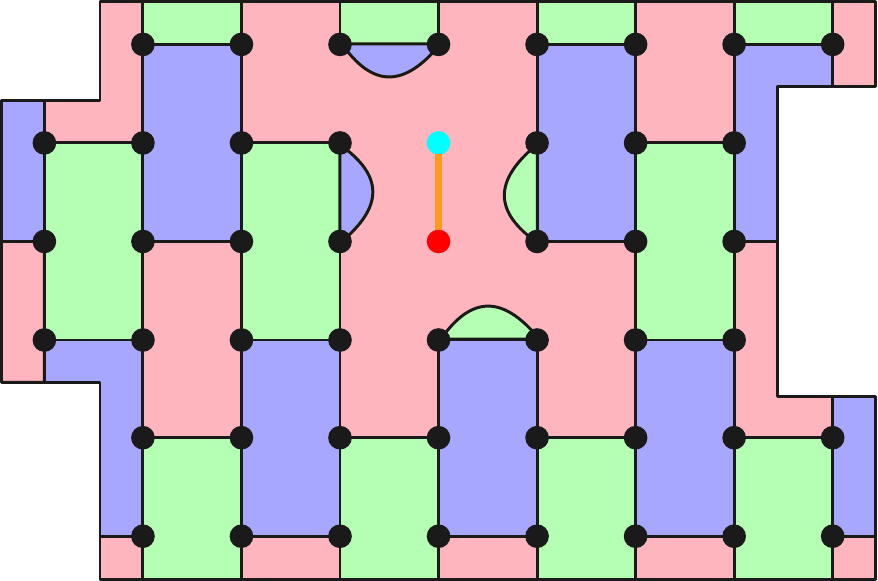}
        \caption{\label{fig:boundary-heuristic-nondeterministic}}
    \end{subfigure}
    \caption{\label{fig:boundary-heuristic} Removing a defective qubit close to the boundary. (a) The defect edge is pointing away from the boundary, resulting in a super-plaquette formed in the bulk of the code. (b) The defect edge is pointing towards the boundary, resulting in a super-plaquette on the boundary which is non-deterministic during certain QEC sub-rounds.}
\end{figure}

\section{Evolution of the logical operators}
\label{app:observable-evolution}

As mentioned in \cref{ssec:logical-observables}, the logical operators are multi-qubit Pauli operators along non-trivial strings in the lattice. If a qubit or edge along a string is removed because of the algorithm in \cref{ssec:superplaquette-algorithm}, a path-finding algorithm can be used to find an alternative string.

An important aspect of Floquet codes is that with each sub-round of measurements, the results of measurements between pairs of qubits on an observable string are multiplied in to the result of the logical observable \cite{Hastings2021dynamically, Gidney2021faulttolerant}. This is to ensure that the logical observable commutes with each subsequent QEC sub-round.

\Cref{fig:periodic-observables} shows how the operator strings evolve around a defective qubit in the periodic honeycomb code. This allows us to verify that during each QEC sub-round the logical operators commute with the measurements in the next sub-round and anti-commute with each other.

\begin{figure}
    \centering
    \includegraphics[width=\linewidth]{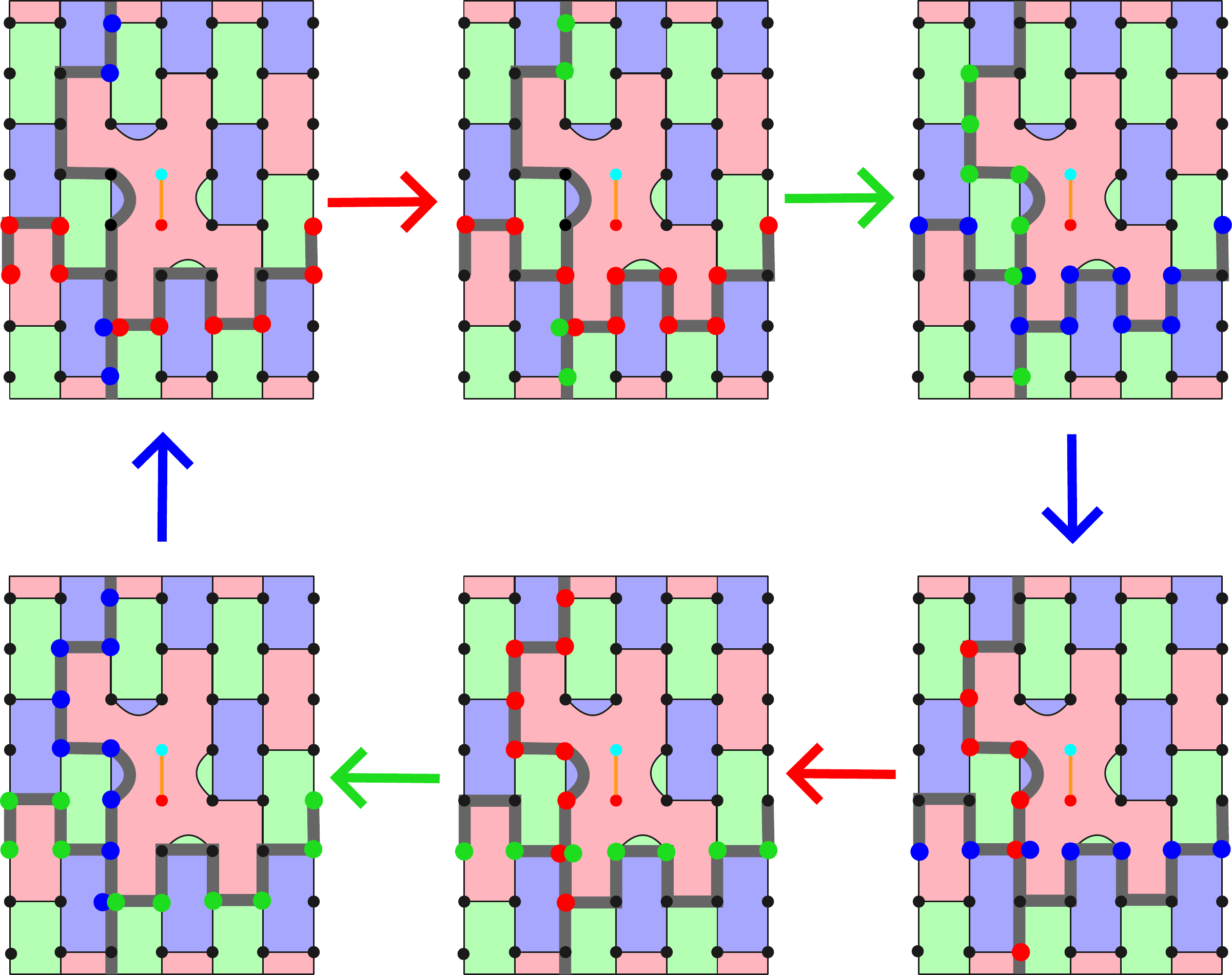}
    \caption{\label{fig:periodic-observables} Evolution of the logical operators around a defective qubit in the periodic honeycomb code. White lines denote the logical strings. The vertical string uses a newly-added edge incident to a weight-2 plaquette, whilst the horizontal string uses an edge that existed in the original code. Circles denote qubits that form the logical operators, with the colour indicating basis. Arrows denote the next round of measurements (red for $X$ edges, green for $Y$ and blue for $Z$). Note that there are two other logical operators along the same paths but offset by three sub-rounds, these have not been drawn here for simplicity.}
\end{figure}

The operators in the honeycomb code with planar boundary conditions evolve differently compared to the honeycomb code with periodic boundary conditions, as the introduction of boundaries leads to measurements which anti-commute with the observable if the periodic schedule is followed \cite{Hastings2021dynamically}. To avoid this, two modifications are made to the code: first, the measurement schedule is modified to become a period-6 schedule of $R \rightarrow G \rightarrow B \rightarrow R \rightarrow B \rightarrow G$; and second, the results of red measurements are not multiplied in to the logical operators \cite{Haah2022boundarieshoneycomb, Gidney2022benchmarkingplanar, Paetznick2023PerformancePlanarFloquetCodes}. \Cref{fig:planar-observables} illustrates how the logical operators evolve on a planar honeycomb code with a defective qubit on the boundary. Again, we can note that the operators commute with each QEC sub-round, and that the two operators anti-commute with each other.

\begin{figure}
    \centering
    \includegraphics[width=\linewidth]{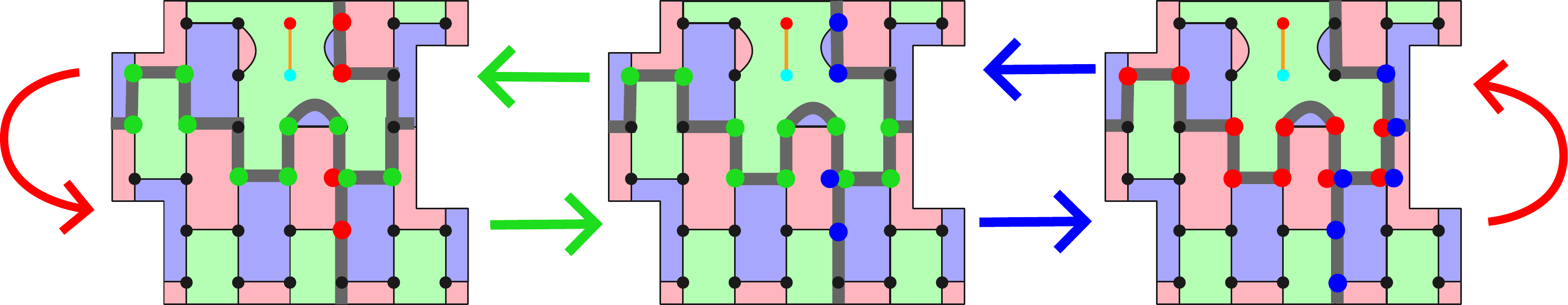}
    \caption{\label{fig:planar-observables} Evolution of the logical operators around a defective qubit in the planar honeycomb code. The horizontal string uses a newly-added edge incident to a weight-2 plaquette, whilst the vertical string uses an edge that existed in the original code. Measurements along red edges do not contribute to the logical operators in the planar case.}
\end{figure}

It is curious to note in \cref{fig:periodic-observables,fig:planar-observables} that when the strings are incident to a weight-2 plaquette there is a choice in which edge to include, as both edges connect the same pair of qubits. For instance, in \cref{fig:periodic-observables} (resp.\ \cref{fig:planar-observables}), the vertical (resp.\ horizontal) logical operator includes one of the newly added edges. As long as measurement results from only one of the two edges is included in the operator, the logical qubit can be preserved. Indeed, the two operator strings that can be formed based on edge choice are equivalent up to the weight-2 stabiliser. This shows that the path-finding algorithm can use both edges in the original code and newly-added edges when constructing the new logical strings.

\section{Additional numerical results}\label{app:additional_numerics}

\begin{figure}
\centering
\begin{subfigure}{0.24\textwidth}
    \includegraphics[width=\textwidth]{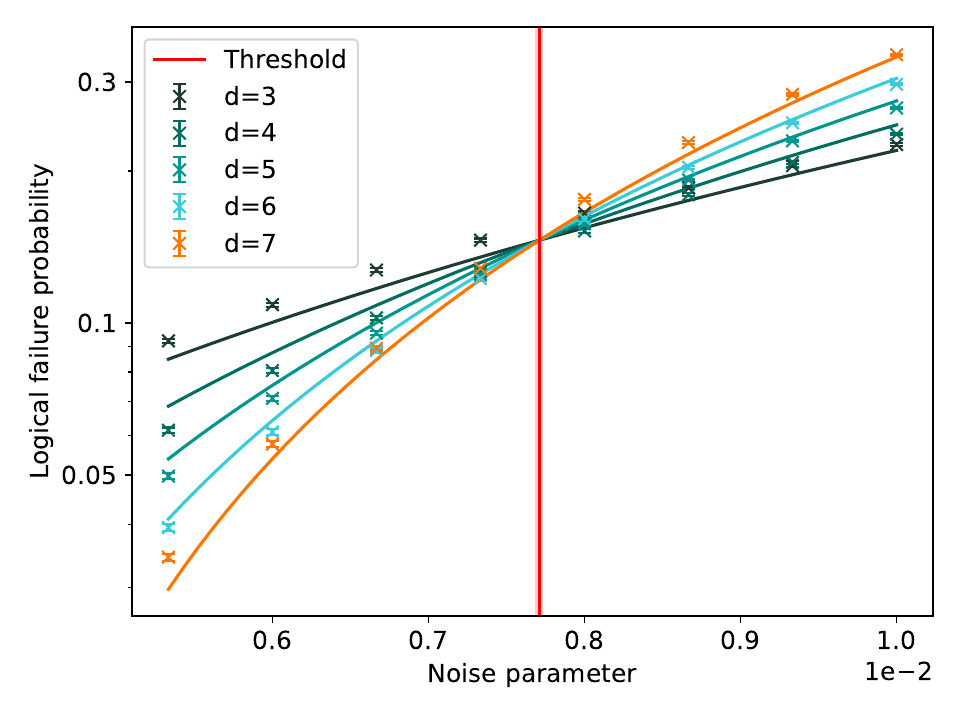}
    \caption{}
\end{subfigure}\hfill
\begin{subfigure}{0.26\textwidth}
    \includegraphics[width=\textwidth]{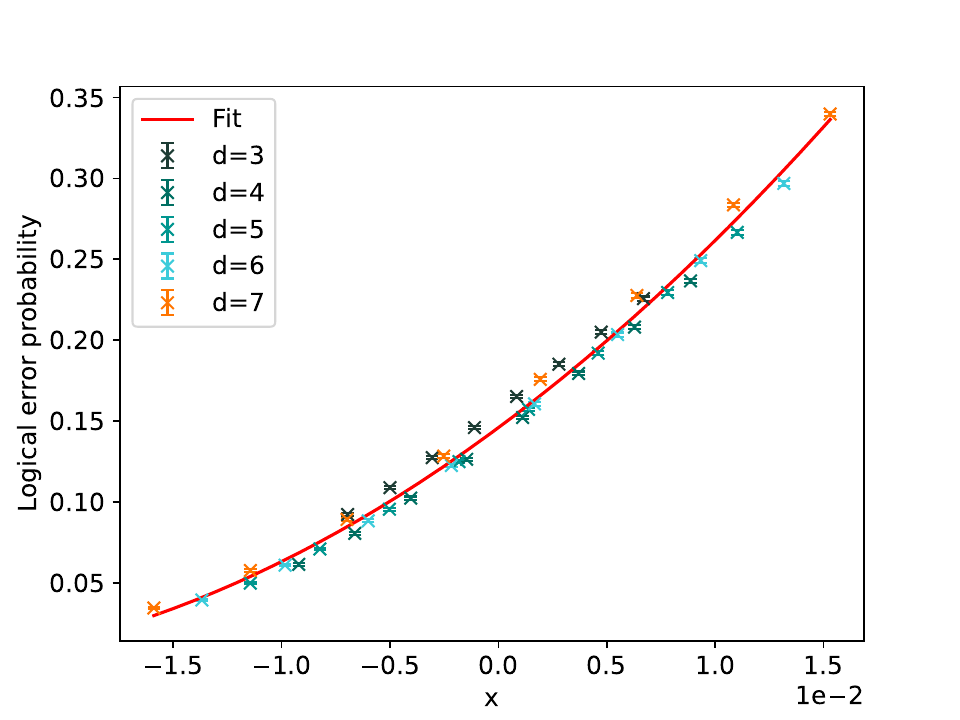}
    \caption{}
\end{subfigure}\hfill
\begin{subfigure}{0.24\textwidth}
    \includegraphics[width=\textwidth]{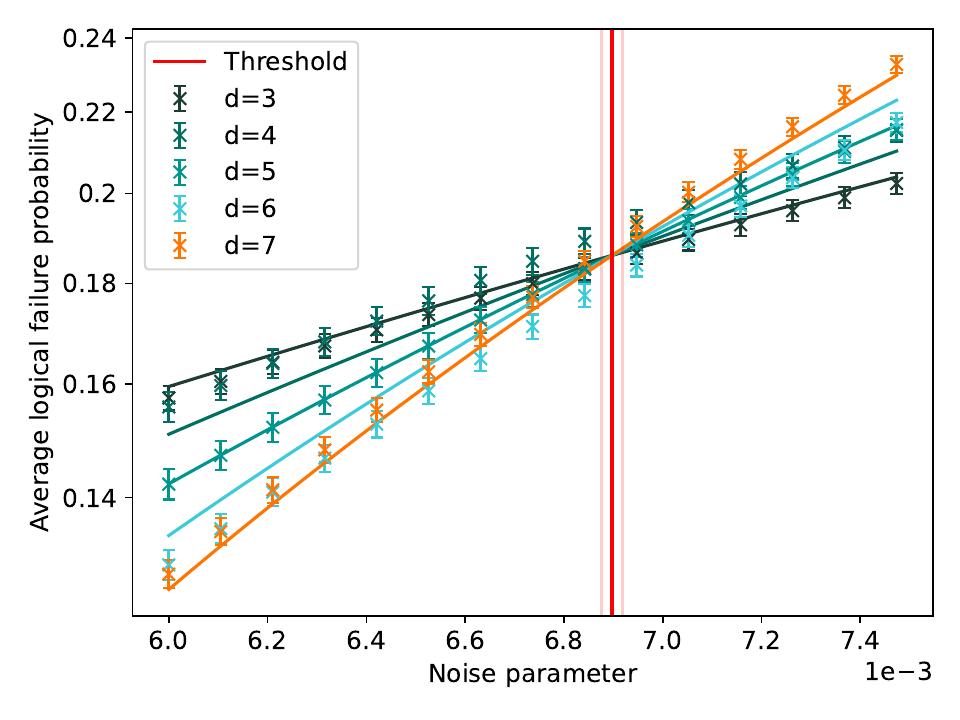}
    \caption{}
\end{subfigure}\hfill
\begin{subfigure}{0.26\textwidth}
    \includegraphics[width=\textwidth]{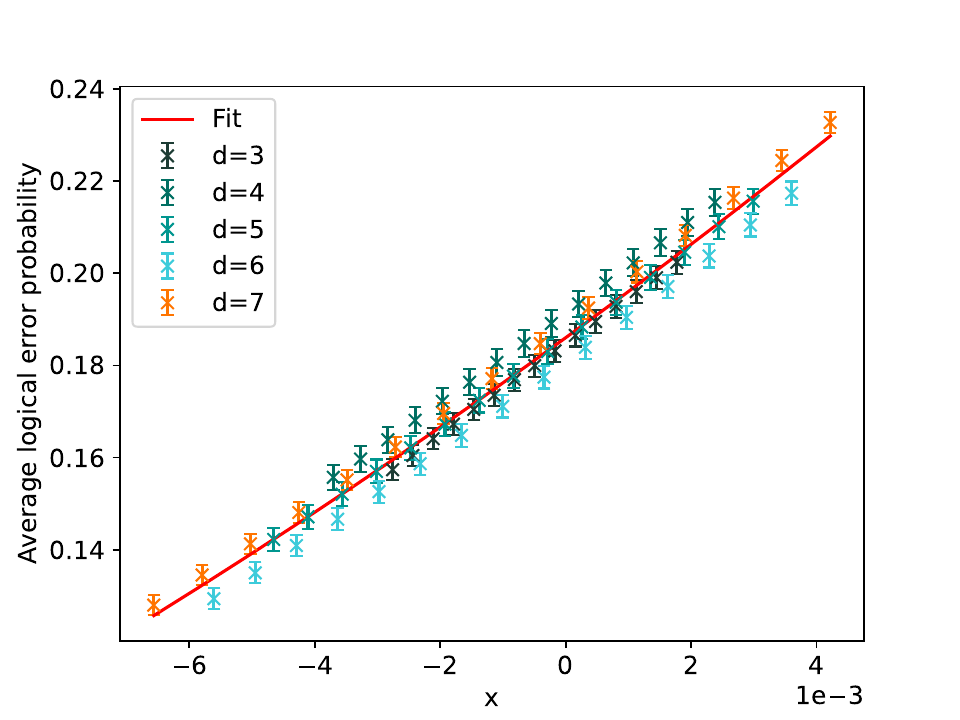}
    \caption{}
\end{subfigure}\\
\begin{subfigure}{0.24\textwidth}
    \includegraphics[width=\textwidth]{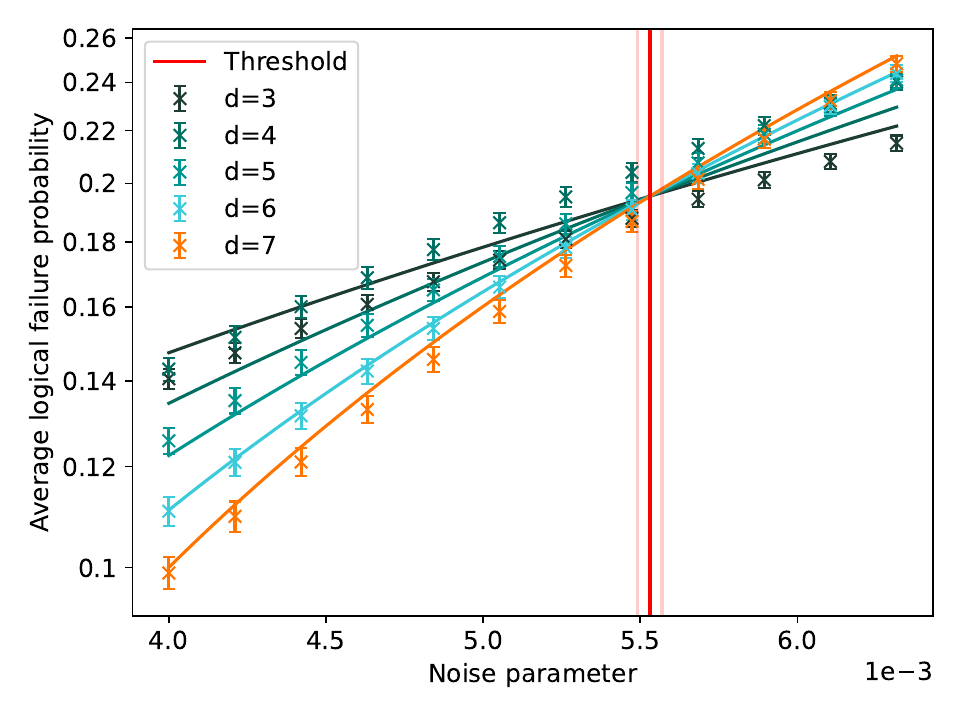}
    \caption{}
\end{subfigure}\hfill
\begin{subfigure}{0.26\textwidth}
    \includegraphics[width=\textwidth]{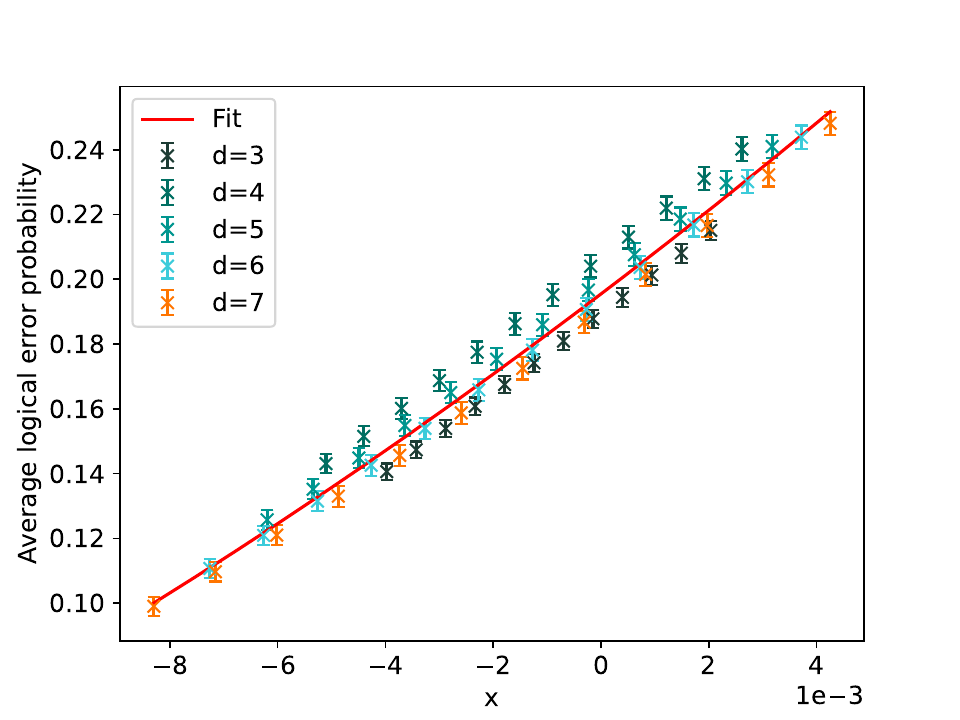}
    \caption{}
\end{subfigure}\hfill
\begin{subfigure}{0.24\textwidth}
    \includegraphics[width=\textwidth]{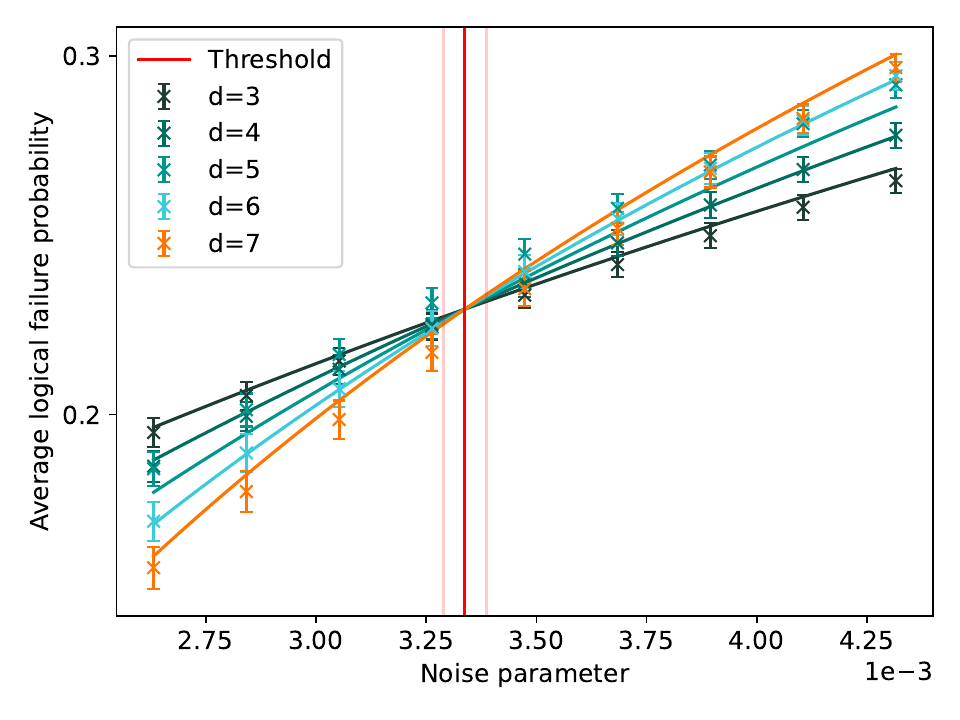}
    \caption{}
\end{subfigure}\hfill
\begin{subfigure}{0.26\textwidth}
    \includegraphics[width=\textwidth]{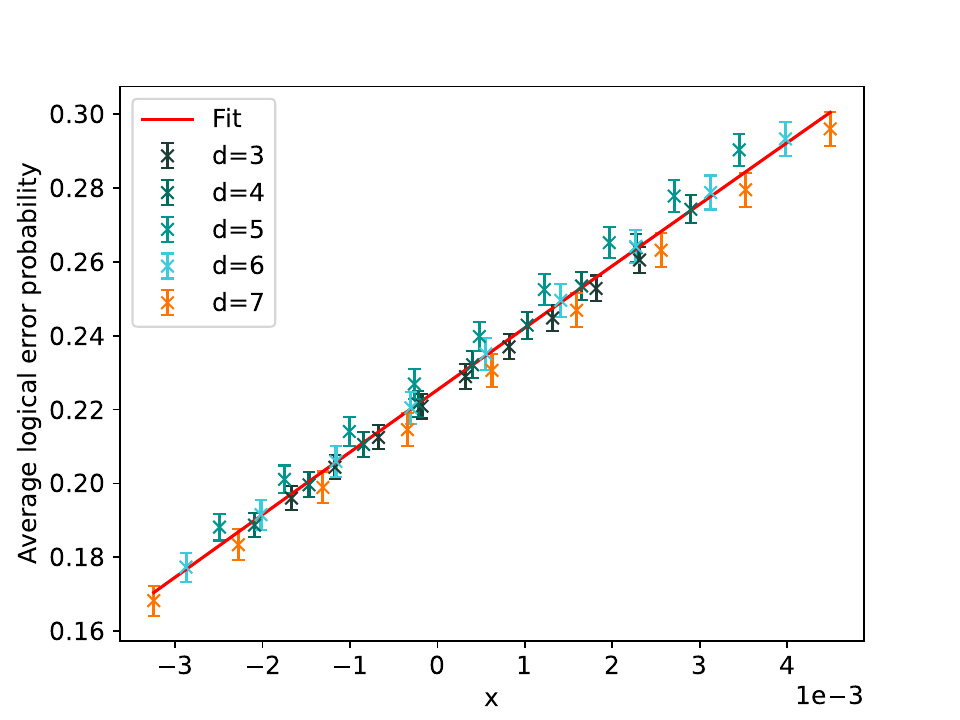}
    \caption{}
\end{subfigure}\\
\begin{subfigure}{0.24\textwidth}
    \includegraphics[width=\textwidth]{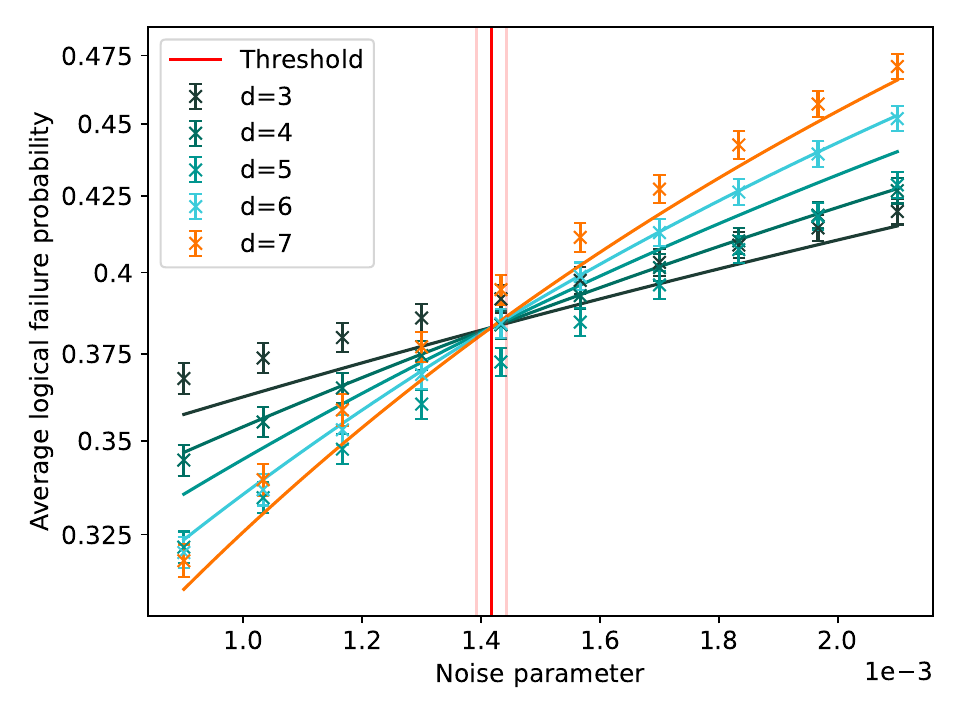}
    \caption{}
\end{subfigure}\hfill
\begin{subfigure}{0.26\textwidth}
    \includegraphics[width=\textwidth]{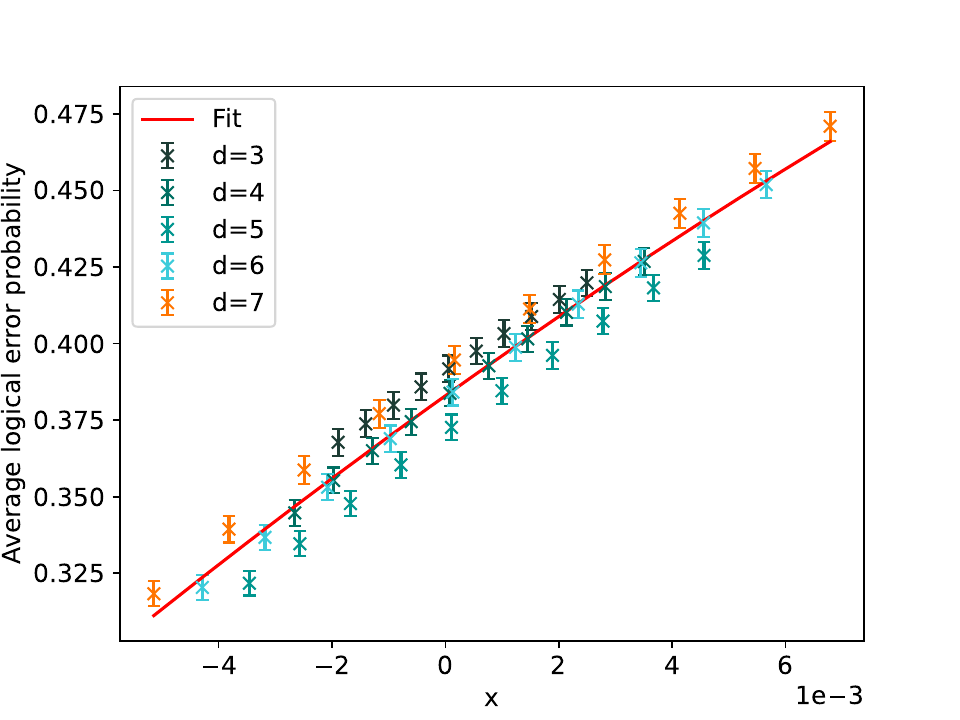}
    \caption{}
\end{subfigure}\hfill
\begin{subfigure}{0.24\textwidth}
    \includegraphics[width=\textwidth]{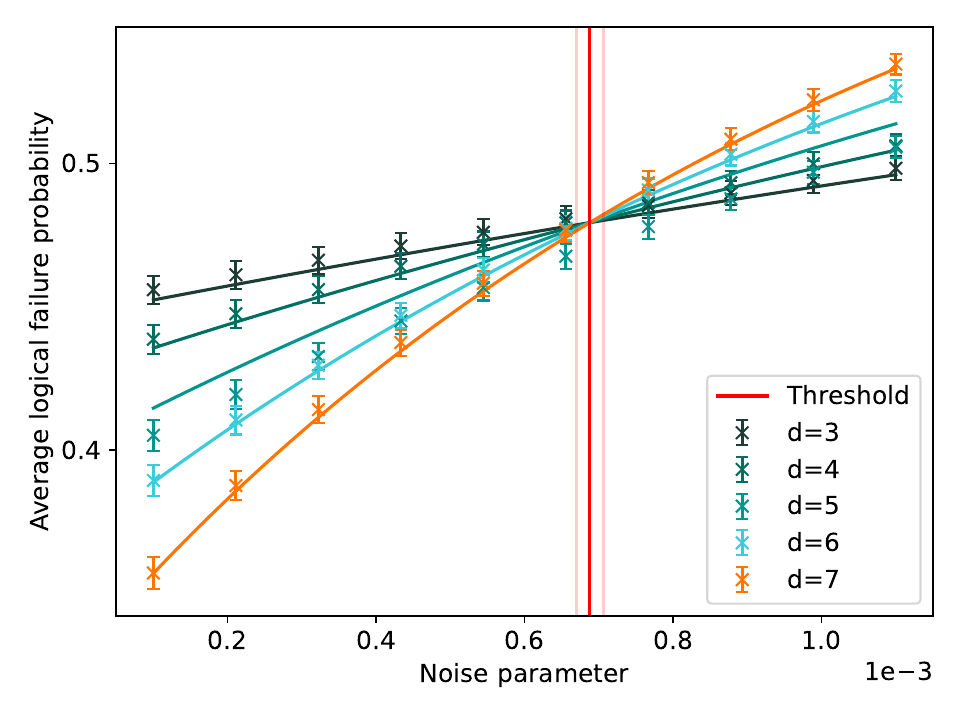}
    \caption{\label{subfig:10_perc_noise}}
\end{subfigure}\hfill
\begin{subfigure}{0.26\textwidth}
    \includegraphics[width=\textwidth]{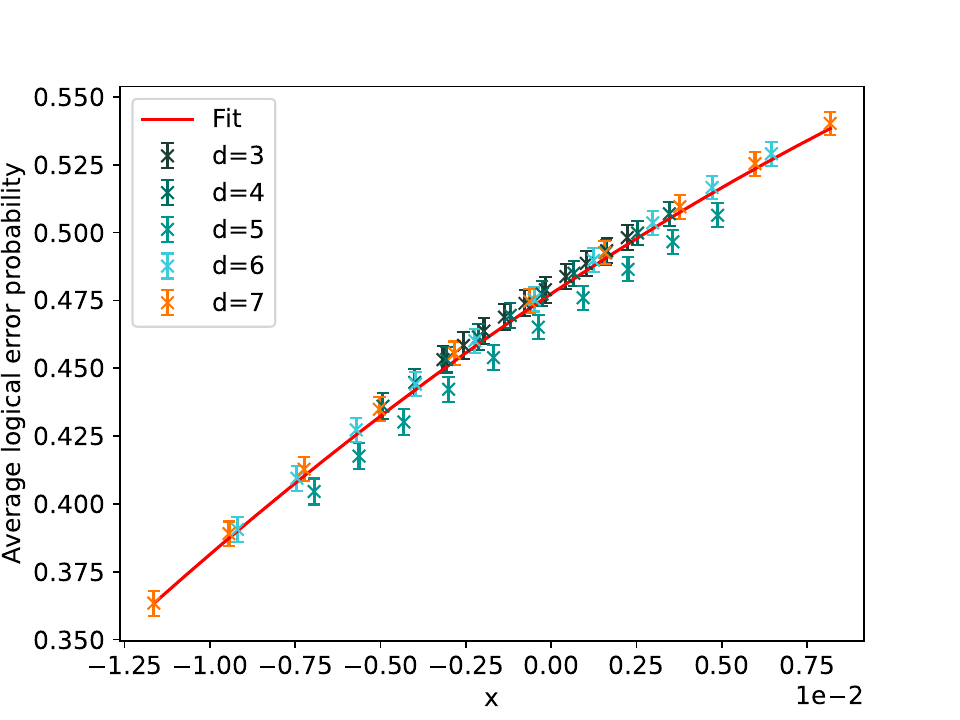}
    \caption{}
\end{subfigure}\\
\begin{subfigure}{0.24\textwidth}
    \includegraphics[width=\textwidth]{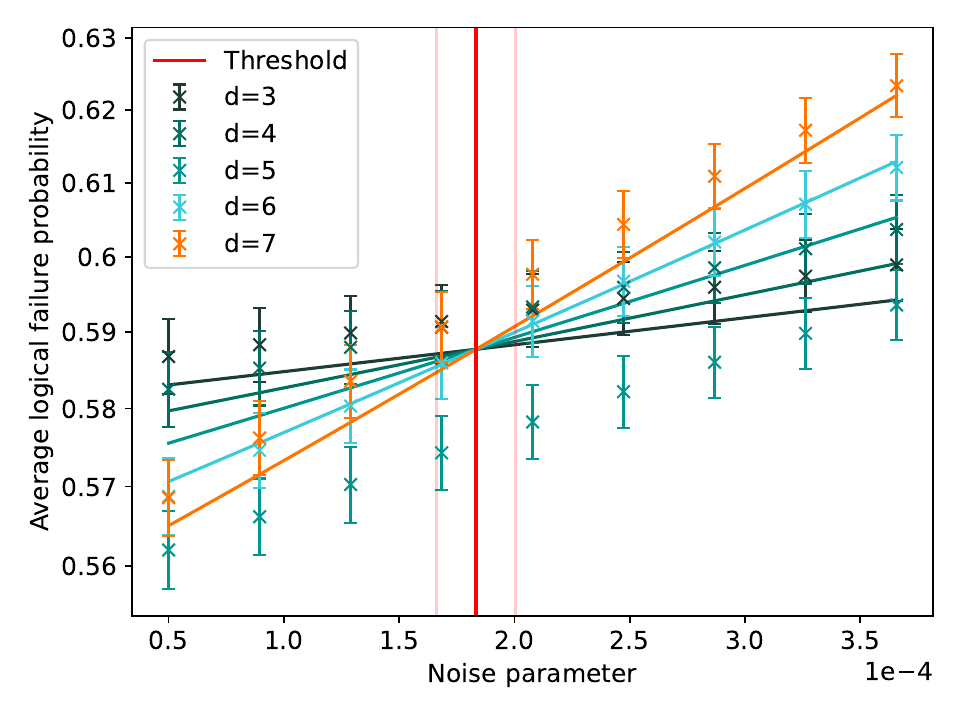}
    \caption{}
\end{subfigure}\hspace{20pt}
\begin{subfigure}{0.26\textwidth}
    \includegraphics[width=\textwidth]{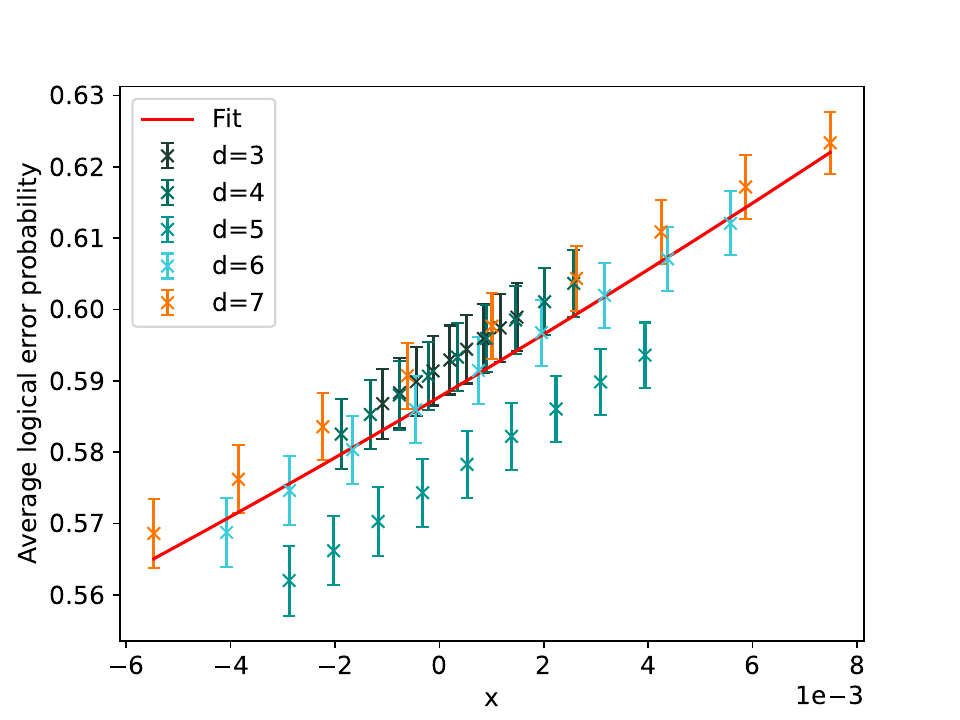}
    \caption{}
\end{subfigure}
\caption{Close-to-threshold data of average logical failure probability for fabrication defect probabilities: (a)-(b) 0\%, (c)-(d) 1\%, (e)-(f) 2\%, (g)-(h) 4\%, (i)-(j) 8\%, (k)-(l) 10\% and (m)-(n) 12\%. Thresholds are indicated with red vertical lines, with light red lines indicating the standard error of the fitted threshold.}
\label{fig:close_threshold_data}
\end{figure}

In this Appendix we present additional data from which we extracted the Pauli thresholds presented in \cref{fig:pauli-thresholds}. In \cref{fig:close_threshold_data}, we display close-to-threshold average logical failure probabilities, averaged over 1,000 realisations with defects randomly sampled, for $p_\text{defect}>0\%$, and 100,000 error correction shots for each realisation. We display both the raw data and the scaling collapse of the data. For the latter, we plot $x = (p_{\textrm{Pauli}}-p_0) d^{1/\nu_0}$ against the (average) logical failure probability, where $p_0$ and $\nu_0$ are the threshold and critical exponent taken from a best fit. We also plot a quadratic best fit for each of these scaling collapse plots~\cite{Harrington2004Thesis}. The scaling collapse plots indicate the presence of some finite-size effects---evident from systematic deviations from the best fit lines---which become particularly noticeable close to the percolation threshold of $p_\text{defect} \approx 13\%$. As such, the standard errors on the thresholds indicated may under-estimate the true error, owing to our inability to simulate larger code distances, since the error incorporates the errors on the data points but not a systematic error due to limited applicability of the scaling assumptions. 

However, it is clear that thresholds exist up to at least $p_\text{defect} = 10\% $ (see \cref{subfig:10_perc_noise}), where a monotonic decrease in average logical failure probability with distance is seen at low $p_{\textrm{Pauli}}$. For $p_\text{defect} = 12\%$, we do not see such a clean demonstration of sub-threshold behaviour: larger system sizes and/or more realisations would likely be required to clearly demonstrate the existence of a threshold. However, we do see some increase in the code performance with target distance.

\section{Strategy for removing defective connections}
\label{app:defective-connections}

We shall now discuss how to adapt a Floquet code when the qubits themselves are functioning properly but there are defective connections between pairs of qubits. As before, we assume that these defective connections are identified in advance through calibration of the device.

In the surface code, a defective connection between a data qubit and a measurement qubit can be removed by treating the data qubit as defective \cite{Auger2017FabricationSurfaceCode}. One can apply the same technique to Floquet codes (see \cref{ssec:defective-connections}), but improvements on this strategy are possible.
With an improved strategy, one can remove defective connections from a Floquet code with no additional connections, no schedule modifications, and without removing \textit{any} data qubits. We will focus on removing a single defective connection in this example, but note that as before this approach can be used iteratively to accommodate multiple defective connections. We also note that this approach can be used in combination with the strategy in \cref{sec:strategy} to correct for a combination of defective qubits and connections. Finally, while this example will focus on the honeycomb code, we note that it can also be applied to Floquet codes defined on other tri-valent and 3-face-colourable lattices.

\begin{figure}
    \centering
    \begin{subfigure}{0.3\linewidth}
        \includegraphics[width=\linewidth]{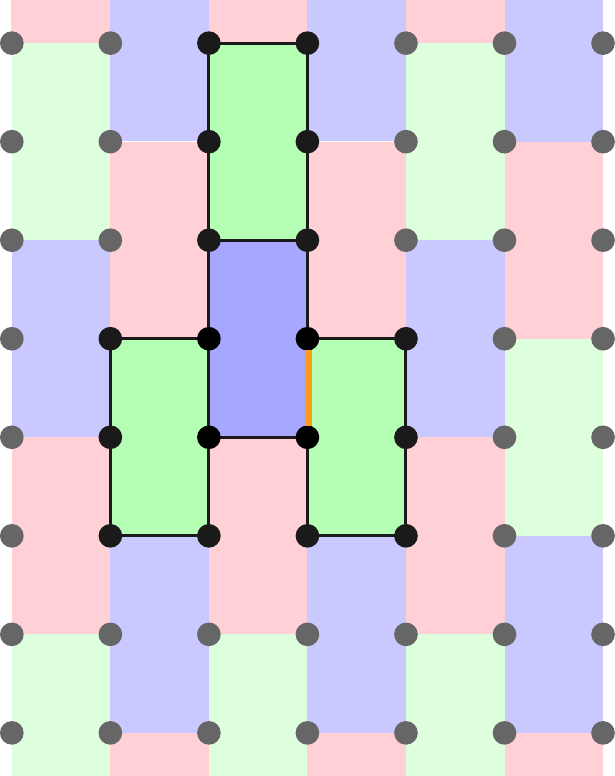}
        \caption{\label{fig:edge-defect-initial}}
    \end{subfigure}
    \begin{subfigure}{0.3\linewidth}
        \includegraphics[width=\linewidth]{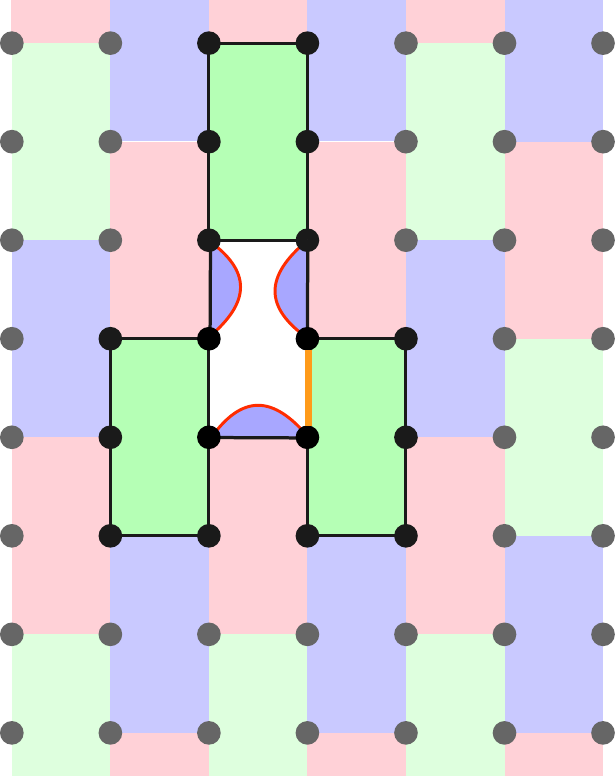}
        \caption{\label{fig:edge-defect-split}}
    \end{subfigure}
    \begin{subfigure}{0.3\linewidth}
        \includegraphics[width=\linewidth]{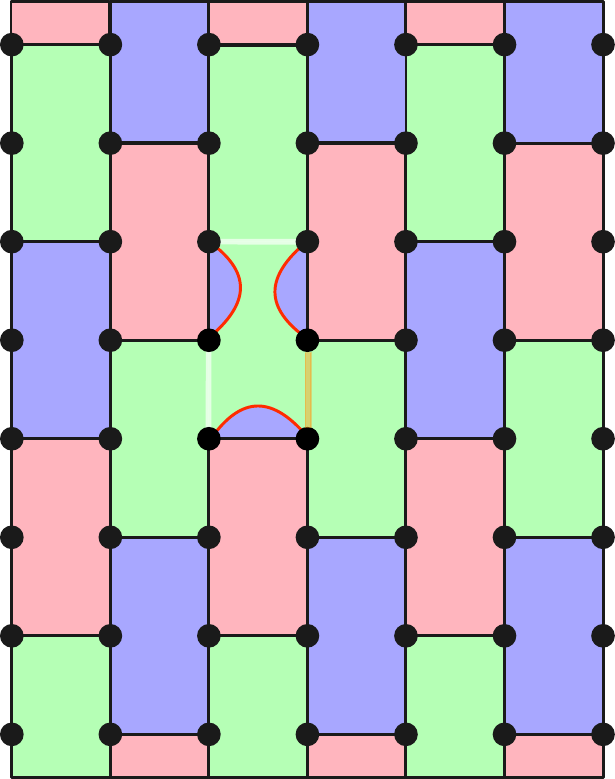}
        \caption{\label{fig:edge-defect-merge}}
    \end{subfigure}
    \caption{\label{fig:edge-defect} Strategy for removing defective connections from the bulk of the honeycomb code. (a) A defective connection between two otherwise functioning qubits is shown in orange. Merge and split plaquettes around the defective connection are highlighted. In this example there is one split plaquette in blue, and three merge plaquettes in green. (b) The shrink plaquette is replaced with three weight-2 plaquettes, formed using newly-added edges highlighted in red. (c) The merge plaquettes are combined into a single super-stabiliser. Newly added edges are highlighted in red, while functioning edges which have been removed are shown in white. The defective connection, now light orange, no longer contributes to any stabilisers and can be removed.}
\end{figure}

Our strategy for accommodating defective connections is shown in \cref{fig:edge-defect}. Initially, the defective connection is shown in orange in \cref{fig:edge-defect-initial}. This edge connects two red plaquettes, and is incident to one green plaquette and one blue plaquette.

The main difference between this strategy and the strategy in \cref{ssec:superplaquette-algorithm} is the choice of merge and split plaquettes. Using the algorithm in \cref{ssec:superplaquette-algorithm}, we would choose to split the blue and green plaquettes incident to the defect edge and merge surrounding red plaquettes. Instead, we choose one plaquette incident to the defective edge to be the split plaquette. The other plaquette incident to the defect edge is chosen to be a merge plaquette, with the other merge plaquettes being the plaquettes of the same type which are adjacent to the split plaquette. In \cref{fig:edge-defect-initial}, we have chosen to split the blue plaquette incident to the defective connection and merge the three green plaquettes adjacent to this blue plaquette.

Having chosen our merge and split plaquettes, we then proceed using Steps 4-7 of the algorithm in \cref{ssec:superplaquette-algorithm}. First, in \cref{fig:edge-defect-split}, we split the blue plaquette into weight-2 plaquettes by introducing additional red edges, akin to Steps 4 and 5. Then, in \cref{fig:edge-defect-merge} we use the newly-added red edges to merge the green plaquettes into a super-plaquette, using the same process as in Steps 6 and 7. The theorem below is equivalent to \Cref{thm:validity-defective-connections}, but looking at standard colour code lattices rather than the ``heavy'' lattices considered in \cref{ssec:defective-connections}. This difference does not affect the proof.

\setcounter{theorem}{2}
\begin{theorem}[for defective connections]
    The above process produces a tri-valent 3-face-colourable lattice.
\end{theorem}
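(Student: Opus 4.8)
The plan is to mirror the proof of \Cref{thm:validity-qubit-removal} as closely as possible. The construction in this appendix differs from the super-plaquette algorithm only in the choice of split and merge plaquettes made before Step 4; Steps 4-7 are then applied verbatim, so most of that proof carries over. What remains to be added is, first, a short check that the new selection feeds a legitimate input into Steps 4-7: there is exactly one split plaquette, every merge plaquette carries the single colour of the plaquette on the far side of the defective connection (necessarily different from the split plaquette's colour), each merge plaquette is adjacent to the split plaquette, and the edges introduced in $C$ link the merge plaquettes into one cycle so that Step 6 produces a single well-defined super-plaquette. As in \cref{ssec:superplaquette-algorithm}, all of this follows from the fact that the boundary of the split plaquette alternates between the two bases different from its own colour.

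For 3-face-colourability I would reuse the inheritance argument of \Cref{thm:validity-qubit-removal} essentially unchanged: each weight-2 plaquette inherits the colour of the split plaquette and is adjacent only to plaquettes not carrying that colour --- the super-plaquette, whose colour is that of the merge plaquettes, and any surviving original plaquette that already bordered the split plaquette and is therefore already differently coloured; dually, the super-plaquette carries the merge colour and is adjacent only to the weight-2 plaquettes and to surviving originals that bordered a merge plaquette. Since no newly created plaquette is ever adjacent to a plaquette of its own colour, the original 3-colouring extends to the modified lattice.

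Tri-valence is where the argument genuinely changes, and I expect it to be the main obstacle, because no data qubit is removed here, so the property must be re-established by a purely local count of edge deletions and additions around the defective connection and the split plaquette. The steps I would carry out are: (1) list every deleted edge --- the defective connection itself, together with those edges of the split plaquette that end up in no surviving plaquette; (2) list every added edge, namely the members of $C$, each running between a pair of qubits already adjacent in the original lattice (in the heavy-lattice reading, as a second edge sharing the auxiliary qubit that mediated the original one, so that no new physical connection is introduced); (3) show that at every vertex the number of incident deletions equals the number of incident additions, so degree three is restored --- the delicate case being the two endpoints of the defective connection, for which the alternation of bases around the split-plaquette boundary guarantees that exactly the right edges of $C$ are incident. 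Since qubits are not deleted, I would also verify that the process leaves no self-loops and no weight-1 plaquettes, that is, that the defective edge is genuinely orphaned and every weight-2 plaquette is a bona fide two-qubit plaquette. Finally, as the sentence preceding the theorem notes, passing from a standard colour-code lattice to a heavy one merely replaces each edge by an edge carrying an auxiliary qubit and alters no incidence; since \Cref{thm:validity-defective-connections} differs from the present statement only in this substitution, the same count proves it as well.
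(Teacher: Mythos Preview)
Your proposal is correct and follows essentially the same approach as the paper: reuse the colouring-inheritance argument of \Cref{thm:validity-qubit-removal} for 3-face-colourability, and establish tri-valence by a local deletion/addition count at the shrink-plaquette vertices. The paper's treatment of tri-valence is more direct than you anticipate---it observes that the removed edges are precisely the $D$-basis edges of the shrink plaquette and, since that plaquette's boundary alternates $D$/$T$, every shrink vertex (including the two endpoints of the defective edge) loses exactly one $D$-edge and gains exactly one newly-added $D$-edge, so no vertex requires special handling and the extra checks you list (well-formed input to Steps~4--7, absence of self-loops and weight-1 plaquettes) are not carried out.
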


\begin{proof}
    Our proof will follow a similar structure to the proof for \Cref{thm:validity-qubit-removal}. We will prove this for removing a single defective edge, whose basis is denoted as $D$.
    
    Note that edges removed from the lattice are edges which contribute to the shrink plaquette and whose basis is $D$. The shrink plaquette consists of a cycle of edges whose bases alternate between $D$ and one other basis $T$. This means that each vertex in the shrink plaquette is incident to one edge which is removed from the lattice. However, each vertex in the shrink plaquette is \emph{also} incident to a newly-added edge of type $D$, therefore vertices in the shrink plaquette are still of degree three. Vertices wich are not in the shrink plaquette are not affected by this, therefore tri-valence is still preserved.

    3-face-colourability follows the same proof as in \Cref{thm:validity-qubit-removal}. The newly-added weight-2 plaquettes are assigned the same colour as the shrink plaquette, and the super-plaquette is assigned the same colour as the merge plaquette. Because the merge plaquettes are adjacent to the shrink plaquette, they must be different colours. The weight-2 plaquettes are adjacent to the super-plaquette and one plaquette which was adjacent to the shrink plaquette in the original graph, both of which must be coloured differently because of the remarks above. The super-plaquette on the other hand is adjacent to weight-2 plaquettes---which are of the same colour as the shrink plaquette---and plaquettes which were incident to a merge plaquette in the original graph, and must therefore also be a different colour. This means that the newly-added plaquettes are guaranteed to not share the same colour as any adjacent plaquette, thus preserving 3-face-colourability.
\end{proof}

Note that unlike the strategy for qubit removal, the total number of vertices, edges and plaquettes remains unchanged compared to the original lattice when removing a defective edge. We can also see that our scheme is optimal in the number of edges removed from the lattice, again considering honeycomb, 4.8.8, or 4.6.12 lattices. This is a rephrasing of \Cref{thm:minimal-auxiliary-qubit-removal}, with a focus on the standard honeycomb, 4.8.8 and 4.6.12 lattices rather than their heavy equivalents. This modification does not affect the proof.

\begin{theorem}[for defective connections]
    Consider a defective edge in the bulk of a Floquet code defined on a honeycomb, 4.8.8 or 4.6.12 lattice (without boundary); i.e., a uniform tiling of the plane. Under the constraint that we cannot form an edge between any two qubits not connected by an edge in the original lattice, the algorithm described above removes the defective edge with the minimal number of removed edges from the original lattice.
    \label{thm:minimal-edge-removal-defective-edges}
\end{theorem}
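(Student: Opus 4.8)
My plan is to follow the template of the proof of \Cref{thm:minimal-edge-removal}, modified by the fact that here no data qubit is removed. Write $u,v$ for the endpoints of the defective edge $e^*$ and $D$ for its basis. Since the defective connection is unusable, every valid scheme must delete $e^*$; afterwards $u$ and $v$ each have degree two, so every valid scheme must then add new edges incident to $u$ and to $v$ in order to restore tri-valence, and by hypothesis those edges can only run along connections already present in the hardware.

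The combinatorial heart of the argument is that this forces the deletion of a whole alternating chain of original edges. Since no new connection may be introduced, the edge added at $u$ must be a second, differently-based edge to one of $u$'s two surviving neighbours $w$ --- that is, a weight-2 plaquette between $u$ and $w$; but then $w$ has degree four, so a valid scheme must delete one of $w$'s original edges, producing a new degree-two vertex, and so on. This chain can only terminate at a vertex that is already degree-deficient, namely $v$. Hence, for some $u$--$v$ path $P$ in the original lattice, a valid scheme must delete the edges of $P$ in even positions (counting from $u$) while adding a parallel edge on each edge of $P$ in odd position; a path of length $\ell$ thus removes $(\ell-1)/2$ edges beyond $e^*$, i.e.\ $(\ell+1)/2$ edges in total (and adds the same number, consistent with the ``unchanged edge count'' remark preceding the theorem). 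Moreover $\ell$ is necessarily odd, since each of the three tilings has only even faces, is therefore bipartite, and $u,v$ lie in opposite parts. So the minimum number of removed edges is $(\ell_{\min}+1)/2$, where $\ell_{\min}$ is the length of a shortest $u$--$v$ path distinct from $e^*$.

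It then remains to compute $\ell_{\min}$ and to verify that the algorithm attains it. Any $u$--$v$ path $Q \neq e^*$ closes up with $e^*$ into a cycle of length $|Q|+1$ through $e^*$, and in a uniform tiling by regular polygons the shortest cycle through a fixed edge equals the weight $m$ of the smaller of the two faces incident to that edge --- a strictly shorter cycle would bound a region smaller than any face of the tiling, which cannot occur, and for the non-vertex-transitive tilings 4.8.8 and 4.6.12 one rules out the intermediate short cycles (e.g.\ hexagons in 4.8.8) exactly as in the proof of \Cref{thm:minimal-edge-removal}. Hence $\ell_{\min} = m-1$, giving a lower bound of $m/2$ removed edges. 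Taking that smaller plaquette as the shrink plaquette, the algorithm described above deletes precisely its $m/2$ basis-$D$ edges --- $e^*$ together with $m/2-1$ others --- and adds $m/2$ new basis-$D$ edges parallel to the plaquette's basis-$T$ edges, so every vertex of the plaquette loses one edge and gains one while all other vertices are untouched; this meets the bound with equality. Instantiating: in the honeycomb lattice both incident faces are hexagons, so $m=6$ and the algorithm removes $3$ edges; in 4.8.8 one takes a weight-4 square when $e^*$ borders one ($m=4$, two edges removed) and is otherwise forced to $m=8$; in 4.6.12 one always takes the smallest bordering face among the square, hexagon and dodecagon.

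The delicate step, I expect, is the rigorous form of the alternating-chain claim: that \emph{every} tri-valence-preserving, connectivity-respecting modification must remove at least one full such $u$--$v$ path, so that no cleverer reconfiguration does better. As in \Cref{thm:minimal-edge-removal}, this has to be extracted from a careful accounting of per-vertex degree changes, with attention to the fact that the chain really does close up at $v$ rather than wandering off elsewhere; the remaining ingredient --- that the shortest cycle through a fixed edge equals the smallest incident face, for 4.8.8 and 4.6.12 --- is routine but, for edges separating two large faces, must again be checked directly against the list of faces of the tiling.
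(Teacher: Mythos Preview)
Your proposal is correct and takes essentially the same approach as the paper: both argue that restoring tri-valence after deleting $e^*$ forces an alternating $u$--$v$ path in the original lattice (you frame this via degree-chasing, the paper via the edge colouring---the degree-2 vertices lack a basis-$D$ edge, adding one parallel to an existing edge forces removal of the next basis-$D$ edge, and so on), count the removed edges by the path length, and verify that taking the smaller incident face as the shrink plaquette realises the shortest such path. Your appeal to bipartiteness for the parity of $\ell$ and your packaging of the lattice-specific check as ``shortest cycle through a fixed edge equals the smaller incident face'' are mild reorganisations of the paper's explicit case analysis.
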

\begin{proof}
Our proof will follow a similar form to that of \Cref{thm:minimal-edge-removal}. The defective edge $E$ is incident to two vertices $v_1, v_2$. Removing $E$ will result in $v_1$ and $v_2$ both being of degree-2. In order to maintain tri-valence we must introduce additional edges incident to $v_1$ and $v_2$. Again, we assume the edges in the original lattice, with the exception of the defective edge, account for all possible available connections in hardware, and any new edges must lie along the same connections.

Without loss of generality, let $E$ be coloured red, as in \cref{fig:edge-defect}. 
With the defective edge removed, $v_1$ and $v_2$ are not incident to any red edge, meaning that red edges must be added to the graph. To maintain tri-valence and three-face-colourability, we must do this by drawing a path of edges between $v_1$ and $v_2$ which (a) alternates between red and non-red edges, (b) begins and ends with non-red edges, and (c) does not traverse $E$. Using this path, a new tri-valent graph can be constructed by removing red edges along the path and introducing new red edges parallel to the non-red edges along the path. The number of edges removed from the original lattice will therefore be $(n-2)/2$, where $n$ is the number of vertices in the path. Therefore choosing the shortest path will minimise the number of edges removed.

In the approach described above, we construct this path out of the shrink plaquette. $v_1$ and $v_2$ are incident to an edge which is part of the shrink plaquette, meaning that the length of the path will be the weight of the shrink plaquette. Therefore the number of removed edges along the path will be $(K-2)/2$, where $K$ is the weight of the shrink plaquette. This leads to 2 additional edges being removed in the honeycomb lattice, as all plaquettes are of weight 6. This path has length 5. A shorter path would have to have length 3, implying the existence of a length-4 cycle in the original graph (since $v_1$ and $v_2$ are connected by an edge in the original graph). This does not exist in the honeycomb lattice. In the 4.8.8 lattice, the number of additional edges removed will either be 1 if the defective edge is part of a weight-4 plaquette, or 3 otherwise. In the latter case, there is no shorter path between $v_1$ and $v_2$ other than along an arbitrarily chosen shrink plaquette (both having weight 8). Finally, for the 4.6.12 lattice, the number of additional edges removed will either by 1 if the defective edge is part of a weight-4 plaquette, or 2 if the edge is incident to a weight-6 plaquette\footnote{Note that the structure of the 4.6.12 lattice ensures that any edge is incident to either a weight-4 or weight-6 plaquette, so we do not have to consider choosing a weight-12 plaquette as the shrink plaquette.}. In this case as well, it is obvious that a shorter path which connects $v_1$ and $v_2$ while meeting the requirements above cannot exist.
\end{proof}

It can also be easily seen that this optimality is not constrained to planar surfaces. Below, we show a simple remark found by applying this proof to the 8.8.8 lattice used in hyperbolic Floquet codes \cite{Higgott2023HyperbolicFloquetCodes, Fahimniya2023Hyperbolic}:

\begin{remark}
    Consider a defective edge in the bulk of a Floquet code defined on an 8.8.8 lattice (without boundary), which is a particular uniform tiling of a hyperbolic surface. Under the constraint that we cannot form an edge between any two qubits not connected by an edge in the original lattice, the algorithm described above removes the defective edge with the minimal number of removed edges from the original lattice.
\end{remark}
\begin{proof}
Our proof will follow from the proof of \Cref{thm:minimal-edge-removal-defective-edges}, noting that the proof does not rely on the lattice being a plane. In the 8.8.8 lattice, the shrink plaquette will always be of weight-8, meaning that there will be 8 vertices in the path connecting the vertices incident to the defective edge, and a total of 3 edges are removed from the lattice. It is impossible to construct a shorter path in the 8.8.8 lattice satisfying the requirements stated in \Cref{thm:minimal-edge-removal-defective-edges}, therefore this edge removal is optimal.
\end{proof}

Note that we do not claim optimality for the scenario for non-trivial combinations of defective connections. As with \Cref{thm:minimal-edge-removal}, an optimal edge removal in the case of multiple defective edges would depend on the choice of shrink and merge plaquettes. In particular, if the shrink plaquette is of large weight then there might be a shorter path between the two vertices incident to the defective edge. In general, for $n$ defective connections the number of possible sets of shrink plaquettes grows as $O(2^{n})$, so it is likely that one will need to rely on heuristics to make suitable choices.

If a defective connection is incident to a qubit on the boundary, we can apply the same strategy as in \cref{ssec:boundary-defects} of embedding the patch into a larger Floquet code, applying the connection removal strategy, and then cutting the boundary out again. We give examples of applying this method to defective connections on the horizontal and vertical boundaries in \cref{fig:defect-edge-boundary}. Note however that some care must be taken when choosing merge and split plaquettes. For example, in \cref{fig:defect-edge-boundary-horizontal-bad}, a plaquette on the boundary has been selected as the split plaquette. This has the unintended consequence of creating two weight-one corner plaquettes. These plaquettes are inherently non-deterministic throughout the whole computation, potentially reducing the code distance and logical fidelity, and should therefore be avoided.

\begin{figure}
    \centering
    \begin{subfigure}{0.3\linewidth}
        \includegraphics[width=\linewidth]{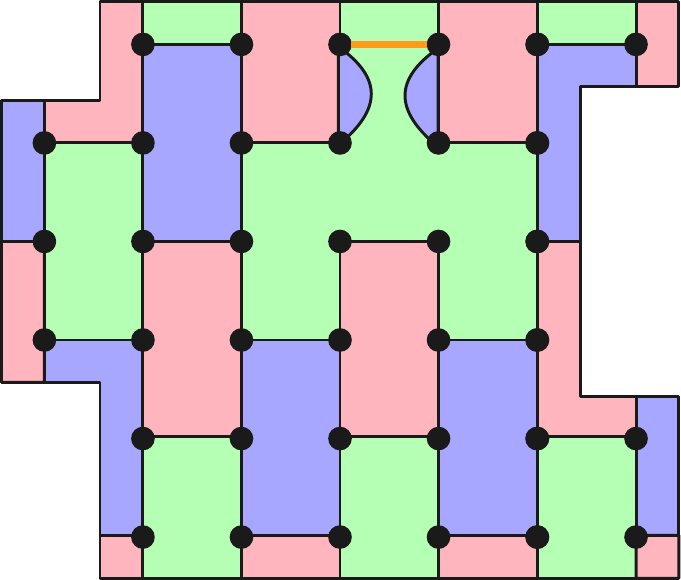}
        \caption{\label{fig:defect-edge-boundary-horizontal-good}}
    \end{subfigure}
    \begin{subfigure}{0.3\linewidth}
        \includegraphics[width=\linewidth]{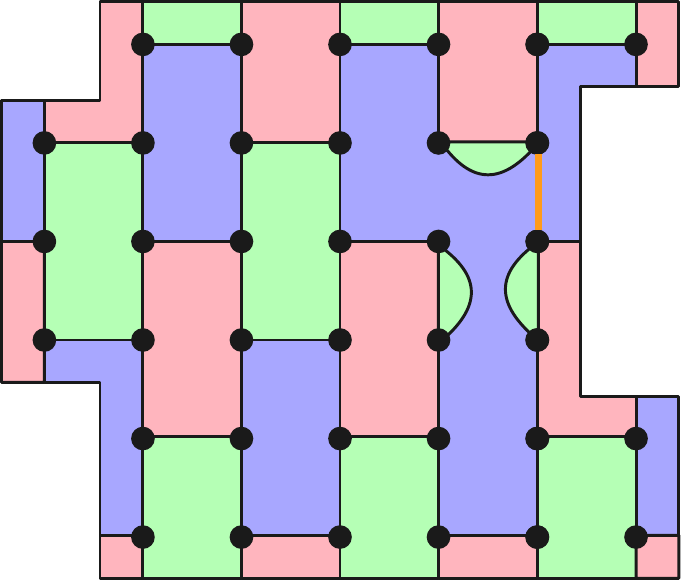}
        \caption{\label{fig:defect-edge-boundary-vertical}}
    \end{subfigure}
    \begin{subfigure}{0.3\linewidth}
        \includegraphics[width=\linewidth]{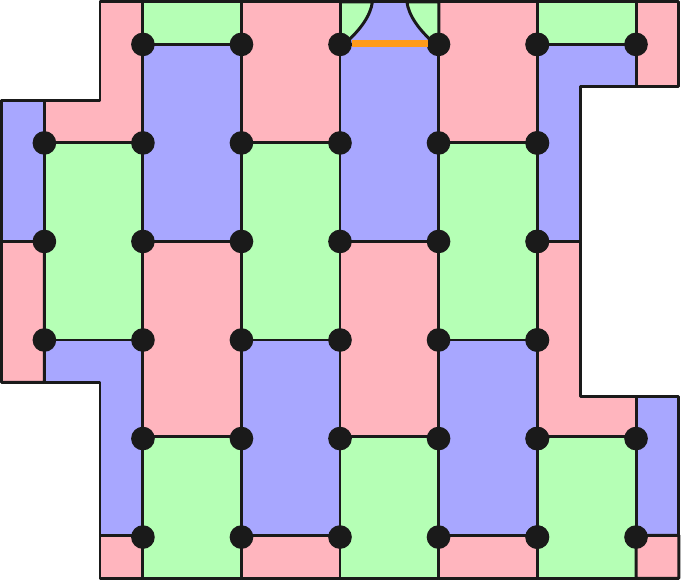}
        \caption{\label{fig:defect-edge-boundary-horizontal-bad}}
    \end{subfigure}
    \caption{\label{fig:defect-edge-boundary} Removing defective connections on the (a) horizontal and (b) vertical boundaries of the honeycomb code. (c) A plaquette on the boundary is chosen as the split plaquette, which has the unintended effect of introducing two green weight-1 plaquettes. These weight-1 plaquettes will be nondeterministic throughout the whole experiment, thus choosing to split the green boundary plaquette is not possible in this case.}
\end{figure}

\end{document}